\documentclass[technicalreport, onecolumn, 11pt]{IEEEtran}
\IEEEoverridecommandlockouts
\usepackage{cite}
\usepackage{amsmath, amsfonts, amssymb, amsthm, mathtools}
\usepackage{algorithm, algpseudocode, calc}
\usepackage{graphicx}
\usepackage{textcomp}
\usepackage{xcolor}
\usepackage{xspace}
\usepackage{tikz}
\usepackage{hyperref}
\usepackage{ogonek}
\usepackage{comment}
\usepackage{todonotes}
\usepackage[capitalise]{cleveref}
\usepackage{calc}
\usepackage{tabularx}
\usepackage{natbib}
\usepackage{enumitem}

\newcolumntype{C}{>{\centering\arraybackslash}X}

\newtheorem{theorem}{Theorem}
\newtheorem{lemma}{Lemma}

\newtheorem{corollary}{Corollary}

\theoremstyle{definition}
\newtheorem{example}{Example}[subsection]

\DeclarePairedDelimiter\ceil{\lceil}{\rceil}
\DeclarePairedDelimiter\floor{\lfloor}{\rfloor}

\newcommand{\N}{\mathbb{N}}
\newcommand{\Q}{\mathbb{Q}}

\newcommand{\EPTAS}{\textnormal{\sffamily EPTAS}\xspace}
\newcommand{\PTAS}{\textnormal{\sffamily PTAS}\xspace}
\newcommand{\FPTAS}{\textnormal{\sffamily FPTAS}\xspace}
\newcommand{\APX}{\textnormal{\sffamily APX}\xspace}
\newcommand{\EPTASClass}{\EPTAS \xspace}

\newcommand{\APXClass}{\APX \xspace}

\newcommand{\TDM}{\textnormal{\sffamily 3-Dimensional Matching}\xspace}

\newcommand{\NPClass}{\textnormal{\sffamily NP}\xspace}
\newcommand{\NPCClass}{\textnormal{\sffamily NP-complete}\xspace}
\newcommand{\SNPCClass}{\textnormal{\sffamily Strongly NP-complete}\xspace}
\newcommand{\NPHClass}{\textnormal{\sffamily NP-hard}\xspace}
\newcommand{\SNPHClass}{\textnormal{\sffamily Strongly NP-hard}\xspace}

\newcommand{\PClass}{\textnormal{\sffamily P}\xspace}


\newcommand{\Partition}{\textnormal{\sffamily Partition}\xspace}

\newcommand{\ThreePartition}{\textnormal{\sffamily 3-Partition}\xspace}

\newcommand{\ThreeSat}{\textnormal{\sffamily 3-SAT}\xspace}

\newcommand{\MachineCovering}{\textnormal{\sffamily Machine Covering}\xspace}
\newcommand{\BoundedIS}{\textnormal{\sffamily Bounded Independent Sets}\xspace}
\newcommand{\MutualExclusion}{\textnormal{\sffamily Mutual Exclusion Scheduling}\xspace}

\newcommand{\machinepartitionrestriction}{\textit{$m$-$p$ restriction}}
\newcommand{\machinejobrestriction}{\textit{$m$-$j$ restriction}}

\newcommand{\completepartite}{\textit{complete multipartite}}

\newcommand{\kcolorable}[1]{\textit{#1-colorable}}
\newcommand{\bipartite}{\textit{bipartite}}
\newcommand{\completekpartite}[1]{\textit{complete #1-partite}}

\newcommand{\Osymbol}{\textnormal{O}}

\newcommand{\sopt}{S_{opt}}
\newcommand{\salg}{S_{alg}}
\newcommand{\fopt}{F_{opt}}
\newcommand{\cmaxcost}{C_{max}}
\newcommand{\sumcost}{\sum C_j}

\newcommand{\sv}[1][]{sv_{#1}}

\newcommand{\cover}[1][]{\ifthenelse{\equal{#1}{}}{cover\xspace}{#1-cover\xspace}}
\newcommand{\covers}[1][]{\ifthenelse{\equal{#1}{}}{covers\xspace}{#1-covers\xspace}}
\newcommand{\exactcover}{exact \cover}
\newcommand{\exactcovers}{exact \covers}
\newcommand{\tinycover}{tiny exact \cover}
\newcommand{\tinycovers}{tiny exact \covers}
\newcommand{\slackepsiloncover}{slack \cover[$(1-\epsilon)$]}
\newcommand{\slackepsiloncovers}{slack \covers[$(1-\epsilon)$]}
\newcommand{\slackexactcover}{slack \exactcover}
\newcommand{\slackexactcovers}{slack \exactcovers}
\newcommand{\epsiloncover}{\cover[$(1-\epsilon)$]}

\newcommand{\NO}{\texttt{NO}}

\newcommand{\covering}[1][]{\ifthenelse{\equal{#1}{}}{covering\xspace}{#1-covering\xspace}}
\newcommand{\coverings}[1][]{\ifthenelse{\equal{#1}{}}{coverings\xspace}{#1-coverings\xspace}}
\newcommand{\optimalcovering}{optimal \covering}
\newcommand{\exactcovering}{exact \covering}
\newcommand{\epsiloncovering}{\covering[$(1-\epsilon)$]}

\newcommand{\orderedcovering}{ordered \covering}
\newcommand{\orderedcoverings}{ordered \coverings}

\newcommand{\smallmachine}[1][]{\ifthenelse{\equal{#1}{}}{small\xspace}{#1-small\xspace}}
\newcommand{\averagemachine}[1][]{\ifthenelse{\equal{#1}{}}{average\xspace}{#1-average\xspace}}
\newcommand{\largemachine}[1][]{\ifthenelse{\equal{#1}{}}{large\xspace}{#1-large\xspace}}
\newcommand{\tinymachine}{tiny\xspace}
\newcommand{\tinyexactmachine}{tiny-exact\xspace}
\newcommand{\tinynonexactmachine}{tiny-non-exact\xspace}

\newcommand{\setssum}[3]{#1_{#2} \cup \ldots \cup #1_{#3}}
\newcommand{\setk}[1]{\{0, \ldots, #1\}}

\algnewcommand{\IfThenElse}[3]{
	\State \algorithmicif\ #1\ \algorithmicthen\ #2\ \algorithmicelse\ #3}

\algnewcommand{\IfThen}[2]{
	\State \algorithmicif\ #1\ \algorithmicthen\ #2\ }

\algnewcommand{\OneLineFor}[2]{
	\State \algorithmicfor\ #1\ \algorithmicdo\ #2\ }

\algnewcommand{\TwoLinesFor}[2]{%
	\State \algorithmicfor\ #1\ \algorithmicdo\ 
	\State \hspace{\algorithmicindent} #2\
}

\algnewcommand{\IfThenTwoLines}[2]{
	\State \algorithmicif\ #1\ \algorithmicthen\
	\State \hspace{\algorithmicindent}  #2\ }

\algnewcommand{\IfThenElseTwoLines}[3]{
	\State \algorithmicif\ #1\ \algorithmicthen\ 
	\State \hspace{\algorithmicindent} #2\ \algorithmicelse\ #3}

\makeatletter
\newcommand{\algmargin}{\the\ALG@thistlm}
\makeatother
\algnewcommand{\parState}[1]{\State%
	\parbox[t]{\dimexpr\linewidth-\algmargin}{\strut\hangindent=\algorithmicindent \hangafter=1 #1\strut}}

\newcommand{\davg}{d_{average}}
\newcommand{\dtiny}{d_{tiny}}
\newcommand{\lmin}{l_{min}}

\Crefname{algorithm}{{$\mathbf{Algorithm}$}}{{\bfseries Algorithms}}
\creflabelformat{algorithm}{$\mathbf{#1#2#3}$}

\begin{document}
\title{Scheduling with Complete Multipartite Incompatibility Graph on Parallel Machines
{
\thanks{This work was supported by Polish National Science Center 2018/31/B/ST6/01294 grant and Gda\'nsk University of Technology, grant no. POWR.03.02.00-IP.08-00-DOK/16.}
}}

\author{\IEEEauthorblockN{Tytus Pikies}
\IEEEauthorblockA{\textit{Dept. of Algorithms and System Modeling} \\
\textit{Gda\'nsk University of Technology}\\
Gda\'nsk, Poland \\
tytpikie@pg.edu.pl}
\and \IEEEauthorblockN{Krzysztof~Turowski}
\IEEEauthorblockA{\textit{Theoretical Computer Science Dept.} \\
\textit{Jagiellonian University}\\
Krak\'ow, Poland \\
krzysztof.szymon.turowski@gmail.com}
\and \IEEEauthorblockN{Marek Kubale}
\IEEEauthorblockA{\textit{Dept. of Algorithms and System Modeling} \\
\textit{Gda\'nsk University of Technology}\\
Gda\'nsk, Poland \\
kubale@eti.pg.edu.pl}
}

\maketitle

\begin{abstract}
In this paper we consider the problem of scheduling on parallel machines with a presence of incompatibilities between jobs.
The incompatibility relation can be modeled as a complete multipartite graph in which each edge denotes a pair of jobs that cannot be scheduled on the same machine.
Our research stems from the work of Bodlaender et al. \cite{bodlaender1994scheduling,BodlaenderJOnTheComplexity1993}.
In particular, we pursue the line investigated partially by Mallek et al. \cite{mallek2019}, where the graph is complete multipartite so each machine can do jobs only from one partition.

We provide several results concerning schedules, optimal or approximate with respect to the two most popular criteria of optimality: $\cmaxcost$ (makespan) and $\sumcost$ (total completion time).
We consider a variety of machine types in our paper: identical, uniform and unrelated.
Our results consist of delimitation of the easy (polynomial) and $\NPHClass$ problems within these constraints. 
We also provide algorithms, either polynomial exact algorithms for easy problems, or algorithms with a guaranteed constant worst-case approximation ratio or even in some cases a \PTAS for the harder ones.

In particular, we fill the gap on research for the problem of finding a schedule with the smallest $\sumcost$ on uniform machines.
We address this problem by developing a linear programming relaxation technique with an appropriate rounding, which to our knowledge is a novelty for this criterion in the considered setting.
\end{abstract}

\begin{IEEEkeywords}
job scheduling, uniform machines, makespan, total completion time, approximation schemes, NP-hardness, incompatibility graph
\end{IEEEkeywords}

\section{Introduction}

\subsection{An example application}
Imagine that we are treating some people ill with contagious diseases.
There are quarantine units containing people ill with a particular disease waiting to receive some medical services.
We also have a set of nurses. 
We would like the nurses to perform the services in a way that no nurse will travel between different quarantine units, to avoid spreading of the diseases.
Also, we would like to provide to each patient the required services, which correspond to the time to be spent by a nurse.

Consider two sample goals:
The first might be to lift the quarantine in the general as fast as possible. 
The second might be to minimize the average time of a patient waiting and treatment. 

The problem can be easily modeled as a scheduling problem in our model.
The jobs are the medical services to be performed.
The division of jobs into partitions of the incompatibility graph is the division of the tasks into the quarantine units.
The machines are the nurses.
The sample goals correspond to $\cmaxcost$ and $\sumcost$ criteria, respectively.

This is only a single example of an application of \emph{scheduling with incompatibility graph on parallel machines}. 

\subsection{Notation and the problems description}

We follow the notation and definitions from \cite{brucker1999scheduling}, with necessary extensions. 
Let the set of jobs be $J = \{j_1, \ldots, j_n\}$ and the set of machines be $M = \{m_1, \ldots, m_m\}$.
We denote the processing requirements of the jobs $j_1, \ldots, j_n$ as $p_1, \ldots, p_n$.

Now let us define a function $p: J \times M \rightarrow \N$, which assigns a time needed to process a given job for a given machine. 
We distinguish three main types of machines, in the ascending order of generality:
\begin{itemize}
    \item \emph{identical} -- when $p(j_i, m) = p_i$ for all $j_i \in J$, $m \in M$,
    \item \emph{uniform} -- when there exists a function $s: M \rightarrow \Q_+$, in this case $p(j_i, m) = \frac{p_i}{s(m)}$ for any $j_i \in J$, $m \in M$,
    \item \emph{unrelated} -- when there exists $s: J \times M \rightarrow \Q_+$, which assigns 
    $p(j_i, m) = \frac{p_i}{s(j_i, m)}$ for any $j_i \in J$, $m \in M$.
\end{itemize}

The incompatibility between jobs form a relation that can be represented as a simple graph $G = (J, E)$, where $J$ is the set of jobs, and $\{j_1, j_2\}$ belongs to $E$, iff $j_1$ and $j_2$ are incompatible.
In this paper we consider complete multipartite graphs, i.e. graphs whose sets of vertices may be split into disjoint independent sets $J_1, \ldots, J_k$ (called \emph{partitions} of the graph), such that for every two vertices in different partitions there is an edge between them.
Due to the fact that the structure is simple, we omit the edges and we identify the graph with the partition of the jobs.

We differentiate between the cases when the number of the partitions is fixed, and when it is not the case.
In the first case we denote the graph as $G = \completekpartite{k}$, and in the second as $G = \completepartite$.

A schedule $S$ is an assignment from jobs in the space of machines and starting times.
Hence if $S(j) = (m, t)$, then $j$ is executed on the machine $m$ in the time interval $[t, t + p(j, m))$ and $t + p(j, m) = C_j$ is the completion time of $j$ in $S$.
No two jobs may be executed at the same time on any machine. 
Moreover, no two jobs which are connected by an edge in the incompatibility graph may be scheduled on the same machine.
By $\cmaxcost(S)$ we denote maximum $C_j$ in $S$ over all jobs.
By $\sumcost(S)$ we denote sum of completion times of jobs in $S$.
These are two criteria of optimality of a schedule commonly considered in the literature.
Note that in both cases we are interested in finding or approximating the minimum value of respective measure.

Interestingly, an assignment from jobs in machines it sufficient to determine values of these measures in any reasonable schedule consistent with this assignment.
By reasonable we mean a schedule in which there are no unnecessary delays between processed jobs; and the jobs forming a load on any machine are in optimal order given by Smith's Rule \cite{SmithRule}, in the case of $\sumcost$ criterion.
Under such an assignment the ordering of the jobs either has no impact on the $\cmaxcost$ criterion; or is in a sense determined in the case of $\sumcost$ criterion.

We use the well-known three-field notation of \cite{lawler1982recent}.
We are interested in problems described by $\alpha|\beta|\gamma$, where
\begin{itemize}
    \item $\alpha$ is $P$ (identical machines), $Q$ (uniform machines) or $R$ (unrelated machines),
    \item $\beta$ contains either $G = \completepartite$ or $G = \completekpartite{k}$, or some additional constraints, e.g. $p_j = 1$ (unit jobs only),
    \item $\gamma$ is either $\cmaxcost$ or $\sumcost$.
\end{itemize}

\subsection{An overview of previous work}

We recall that the  $P||\cmaxcost$ is \NPHClass even for two machines \cite{gareyJ1979computers}.
However, $Q||\cmaxcost$ (and therefore $P||\cmaxcost$ as well) does admit a \PTAS \cite{hochbaum1988}.
Moreover, $R_{m}||\cmaxcost$ admits a \FPTAS \cite{horowitz1976}.
There is $(2 - \frac{1}{m})$-approximation algorithm for $R||\cmaxcost$ \cite{shchepin2005optimal}; however there is no polynomial algorithm with approximation ratio better than $\frac{3}{2}$, unless $\PClass = \NPClass$ \cite{lenstra1990approximation}.
On the other hand, $Q|p_j = 1|\cmaxcost$ and $Q||\sumcost$ (with $P|p_j = 1|\cmaxcost$ and $P||\sumcost$ as their special cases) can be solved in $\Osymbol(\min\{n + m \log{m}, n \log{m}\})$ \cite{dessouky1990scheduling} and $\Osymbol(n \log{n})$ \cite[p. 133--134]{brucker1999scheduling} time, respectively.
$R||\sumcost$ can be regarded as a special case of an assignment problem \cite{bruno1974scheduling}, which can be solved in polynomial time.

The problem of scheduling with incompatible jobs for identical machines was introduced by Bodlaender et al. in \cite{bodlaender1994scheduling}.
They provided a series of polynomial time approximation algorithms for $P|G = \kcolorable{k}|\cmaxcost$.
For bipartite graphs they showed that $P|G = \bipartite|\cmaxcost$ has a polynomial $2$-approximation algorithm, and this ratio of approximation is the best possible if $P \neq NP$.
They also proved that there exist \FPTAS in the case when the number of machines is fixed and $G$ has constant treewidth.

The special case $P|G, p_j = 1|\cmaxcost$ was treated extensively in the literature under the name \BoundedIS: for given $m$ and $t$, determine whether $G$ can be partitioned into at most $t$ independent sets with at most $m$ vertices in each. 
More generally, $P|G|\cmaxcost$ is equivalent to a weighted version of \BoundedIS.
We note also that $P|G, p_j = 1|\cmaxcost$ is closely tied to \MutualExclusion, where we are looking for a schedule in which no two jobs connected by an edge in $G$ are executed in the same time. 
For the unrestricted number of machines it is the case that $P|G, p_j = 1|\cmaxcost$ has a polynomial algorithm for a certain class of graphs $G$ if and only if \MutualExclusion has a polynomial algorithm for the same class of graphs.
When all this is taken into account, there are known polynomial algorithms for solving $P|G, p_j = 1|\cmaxcost$ when $G$ is restricted to the following classes: forests \cite{baker1996mutual}, split graphs \cite{lonc1991complexity}, complements of bipartite graphs and complements of interval graphs \cite{bodlaender1995restrictions}. 
However, the problem remains \NPHClass when $G$ is restricted to bipartite graphs (even for $3$ machines), interval graphs and cographs \cite{bodlaender1995restrictions}.

Recently another line of research was established for $G$ equal to a collection of cliques (bags) in \cite{das2017minimizing}.
The authors considered $\cmaxcost$ criterion and presented a \PTAS for identical machines together with $(\log{n})^{1/4 - \epsilon}$-inapproximability result for unrelated machines. 
They also provided an $8$-approximate algorithm for unrelated machines with additional constraints. 
This approach was further pursued in \cite{grage2019eptas}, where an \EPTAS for identical machines case was presented.
The last result is a construction of \PTAS for uniform machines with some additional restrictions on machine speeds and bag sizes \cite{page2020makespan}.

Unfortunately, the case of complete multipartite incompatibility graph was not studied so extensively.
It may be inferred from \cite{bodlaender1994scheduling} that for $P|G = \completepartite|\cmaxcost$ there exists a \PTAS, which can be easily extended to $\EPTAS$; and that there is a polynomial time algorithm for $P|G = \completepartite, p_j = 1|\cmaxcost$.


In the case of uniform machines Mallek et al. \cite{mallek2019} proved that $Q|G = \completekpartite{2}, p_j = 1|\cmaxcost$ is \NPHClass, but it may be solved in $\Osymbol(n)$ time when the number of machines is fixed. 
Moreover, they showed an $\Osymbol(m n + m^2 \log{m})$ algorithm for the particular case $Q|G = star, p_j = 1|\cmaxcost$.
However, their result implicitly assumed that the number of jobs $n$ is encoded in binary on $\log{n}$ bits (thus making the size of schedules exponential in terms of the input size), not -- as it is customary assumed -- in unary.

In this paper we provide several results: first, we prove that $P|G = \completepartite|\sumcost$, unlike its $\cmaxcost$ counterpart, can be solved in polynomial time.
Next, we show that $Q|G = \completepartite, p_j =1|\sumcost$ is \SNPHClass and that the same holds for $\cmaxcost$ criterion.
However, it turns out that both $Q|G = \completekpartite{k}, p_j=1|\sumcost$ and $Q|G = \completekpartite{k}, p_j = 1|\cmaxcost$ admit polynomial time algorithms.
Also, we propose $2$-approximation and $4$-approximation algorithms for $Q|G = \completepartite,p_j = 1|\cmaxcost$ and $Q|G = \completepartite,p_j = 1|\sumcost$, respectively.

The first of our two main results is a $4$-approximate algorithm for $Q|G = \completekpartite{k}|\sumcost$, based on a linear programming technique.
The second one is a \PTAS for $Q|G = \completepartite, p_j=1|\cmaxcost$.

We conclude by showing that the solutions for $R|G|\cmaxcost$, or for $R|G|\sumcost$, cannot be approximated within any fixed constant, even when $G = \completekpartite{2}$ and there are only two processing times. 

\section{Identical machines}

We recall that $P|G = \completepartite|\cmaxcost$ is \NPHClass as a generalization of $P||C_{max}$, but it admits an \EPTASClass \cite{bodlaender1994scheduling}.

Focusing our attention on $\sumcost$ let us define what we mean by a \emph{greedy assignment} of machines to partitions:
\begin{enumerate}
	\item assign to each partition a single machine,
	\item assign remaining machines one by one to the partitions in a way that it decreases $\sumcost$ as much as possible.
\end{enumerate}
To see why this approach works we need the following lemma.
\begin{lemma}
	\label{lemma:nonincreasing_returns}
	 For any set of jobs, let $S_i$ be an optimal schedule in an instance of $P_i||\sumcost$ determined by $J$.
	 Then $\sumcost(S_1) - \sumcost(S_2) \ge \sumcost(S_2) - \sumcost(S_3) \ge \ldots \ge \sumcost(S_{m-1}) - \sumcost(S_{m})$.
\end{lemma}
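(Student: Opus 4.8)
The plan is to reduce the claim to a convexity statement about a single, manifestly convex sequence, so that the awkward ceilings disappear. First I would recall the classical fact that on $i$ identical machines $\sumcost$ is minimized by the SPT rule: sorting the processing times non-increasingly as $p_1 \ge p_2 \ge \dots \ge p_n$ and assigning them round-robin, the job of rank $j$ receives coefficient $\lceil j/i\rceil$ in the objective, so
\[
	\sumcost(S_i) \;=\; \sum_{j=1}^{n} \left\lceil \tfrac{j}{i} \right\rceil p_j .
\]
Using $\lceil j/i\rceil = |\{t \ge 1 : j > (t-1)i\}|$ and swapping the (finite) order of summation, this rewrites as $\sumcost(S_i) = \sum_{t \ge 0} Q_{ti}$, where $Q_k$ denotes the sum of the $\max(n-k,0)$ smallest processing times, so $Q_0 = \sum_j p_j$ and $Q_k = 0$ for $k \ge n$; the sum over $t$ is finite because $Q_{ti} = 0$ once $ti \ge n$.

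Next I would observe that $Q$ is a non-increasing and convex sequence on $\{0,1,2,\dots\}$: for $1 \le k \le n$ the successive difference $Q_{k-1} - Q_k$ equals the $k$-th largest processing time, and it equals $0$ for $k > n$, so $k \mapsto Q_{k-1} - Q_k$ is non-increasing. Convexity of $Q$ yields the midpoint inequality $Q_{a-s} + Q_{a+s} \ge 2 Q_a$ for all integers $a \ge s \ge 0$, obtained by summing the difference inequalities over the two length-$s$ blocks $[a-s,a]$ and $[a,a+s]$ and using that the differences are monotone.

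Finally, set $f(i) := \sumcost(S_i) = \sum_{t \ge 0} Q_{ti}$. For every $i$ with $2 \le i \le m-1$,
\[
	f(i-1) + f(i+1) - 2 f(i) \;=\; \sum_{t \ge 1}\bigl(Q_{t(i-1)} + Q_{t(i+1)} - 2 Q_{ti}\bigr) \;\ge\; 0,
\]
because the $t=0$ contributions cancel and, for each $t \ge 1$, the bracketed term is exactly the midpoint inequality for $Q$ with centre $a = ti$ and radius $s = t$ (note $a - s = t(i-1) \ge 0$). Rearranging gives $\sumcost(S_{i-1}) - \sumcost(S_i) \ge \sumcost(S_i) - \sumcost(S_{i+1})$ for all such $i$, which is precisely the stated chain of inequalities.

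The step that needs care — and the reason one cannot argue directly and term-by-term on the closed form — is that $i \mapsto \lceil j/i\rceil$ is not convex for a fixed $j$ (for instance $j=5$ gives the values $5,3,2,2,1,\dots$, whose second difference at $i=3$ is negative), so convexity must emerge only after summing against the sorted weights $p_j$. The tail-sum reindexing is exactly what repackages $f(i)$ as a sum of translates of the genuinely convex sequence $Q$, after which the conclusion is immediate; besides this, I would make sure the SPT closed form is properly justified or cited and that the interchange of the two finite sums is spelled out.
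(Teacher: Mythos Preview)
Your proof is correct and takes a genuinely different route from the paper's. The paper argues via an explicit block decomposition: it pads the instance so that $n$ is divisible by $i(i+1)(i+2)$, groups jobs into consecutive blocks of size $i+1$, writes down closed forms for the sum of multipliers in the $k$-th block under $i$, $i+1$, $i+2$ machines, and then shows that the gain in block $k-1$ from adding the $(i+1)$-th machine dominates the gain in block $k$ from adding the $(i+2)$-th machine, using that processing times are monotone across blocks. Your argument instead rewrites $\sumcost(S_i)=\sum_{t\ge 0}Q_{ti}$ as a sampling of the tail-sum sequence $Q_k=\sum_{j>k}p_j$, observes that $Q$ is convex because its first differences are the sorted processing times, and reads off the desired second-difference inequality term-by-term in $t$. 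The tail-sum reindexing is exactly what sidesteps the non-convexity of $i\mapsto\lceil j/i\rceil$ that you correctly flag; it yields a shorter and more conceptual proof with no padding or case analysis, while the paper's block argument is more concrete and closer to the combinatorics of how multipliers shift when a machine is added. Both are valid; yours is the cleaner of the two.
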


\begin{proof}
	Assume for simplicity that $n$ is divisible by $i(i+1)(i+2)$.
	If this is not the case, then we add dummy jobs with $p_j = 0$; obviously, this does not increase $\sumcost$.
  
	Fix the ordering of jobs with respect to nonincreasing processing times.
    Now we may associate with each job its multiplier corresponding to the position in the reversed order on its machine.
    If a job $j_i$ has a multiplier $l$, then it contributes $l p_i$ to $\sumcost$, and it is scheduled as the $l$-th last job on a machine.

	Now think of the multipliers in the terms of blocks of size $i+1$.
	For $S_i$ the multipliers with respect to job order are:
	\begin{gather*}
	\underbrace{1,\ldots, 1, 1, 2}_{\text{The first block}}   ;  \underbrace{2, \ldots, 2, 3, 3}_{\text{The second block}} ; \ldots  ;  \underbrace{i, \ldots, i+1, i+1, i+1}_{\text{The (i)-th block}} ; \ldots
	\end{gather*}
	For $S_{i + 1}$ the multipliers are:
	\begin{gather*}
	\underbrace{1,\ldots, 1, 1, 1}_{\text{The first block}}  ;  \underbrace{2, \ldots, 2, 2, 2}_{\text{The second block}} ;  \ldots   ; \underbrace{i, \ldots, i, i, i}_{\text{The (i)-th block}} ; \ldots
	\end{gather*}
	For $S_{i + 2}$ the multipliers are:
	\begin{gather*}
	\underbrace{1, 1, 1, \ldots, 1}_{\text{The first block}}  ;  \underbrace{1, 2, 2, \ldots, 2}_{\text{The second block}} ;  \ldots  ;  \underbrace{i-1, \ldots, i-1, i, i}_{\text{The (i)-th block}} ; \ldots
	\end{gather*}
	Also, let the sum of multipliers of the $k$-th block in $S_{i}$ be $s_k^i$.
  
  By some algebraic manipulations we prove that
  \begin{align*}
    s_k^i & = (i+1)k + k + \floor*{(k-1) / i}, \\
    s_k^{i+1} & = (i+1)k, \\
    s_k^{i+2} & = (i+1)k - k +  \floor*{k / (i+2)}.
  \end{align*}
  It follows directly that $s_{k-1}^i - s_{k-1}^{i+1} \ge s_{k}^{i+1} - s_{k}^{i+2}$, for $k \ge 2$.
	
  The smallest processing time in the $k$-th block is at least $p_{(i+1)k} \ge p_{(i+1)k + 1}$, therefore the contribution of the $k$-th block to $\sumcost(S_{i}) - \sumcost(S_{i+1})$ is at least $p_{(i+1)k + 1} (s_k^i - s_k^{i+1})$.
  Similarly, the largest processing time in the $(k+1)$-th block is at most $p_{(i+1)k + 1}$ so the contribution of the $(k+1)$-th block to $\sumcost(S_{i+1}) - \sumcost(S_{i+2})$ is at most $p_{(i+1)k+1} (s_k^{i+1} - s_k^{i+2})$.
  Thus the contribution of the $(k + 1)$-th block to $\sumcost(S_{j+1}) - \sumcost(S_{j+2})$ is at most the contribution of the $k$-th block to $\sumcost(S_j) - \sumcost(S_{j+1})$, for all $k \ge 1$.
  Also, the first block does not contribute to $\sumcost(S_{i+1}) - \sumcost(S_{i+2})$, which proves the lemma.
\end{proof}

\begin{corollary}
	For a given instance of the problem \break $P|G = \completepartite|\sumcost$ a schedule constructed by the greedy method has optimal $\sumcost$.
\end{corollary}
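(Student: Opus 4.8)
The plan is to reduce the corollary to the statement ``a greedy allocation of machines to partitions is optimal'' and then to derive the latter from the diminishing-returns property of \Cref{lemma:nonincreasing_returns}. First I would observe that in $P|G = \completepartite|\sumcost$ every feasible schedule gives each machine jobs from only one partition, since any two jobs in different partitions are incompatible. Hence a schedule is fully described by (i) how many machines $m_1,\dots,m_k$ are devoted to $J_1,\dots,J_k$ (with $m_\ell \ge 1$ and $\sum_\ell m_\ell = m$; the instance is infeasible unless $m \ge k$), together with (ii) for each $\ell$, an assignment of the jobs of $J_\ell$ to its $m_\ell$ machines. For fixed $m_\ell$ the best choice in (ii) is independent across partitions and is exactly the optimum of the ordinary problem $P_{m_\ell}||\sumcost$ restricted to $J_\ell$ (each machine's load then ordered by Smith's Rule, as discussed above); write $f_\ell(m_\ell)$ for that optimum. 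Thus $\optsumcost = \min\{\sum_{\ell=1}^{k} f_\ell(m_\ell)\}$ over all admissible allocations, whereas the greedy method outputs, by construction, a schedule of value $\sum_\ell f_\ell(g_\ell)$ for the allocation $(g_1,\dots,g_k)$ it produces; so it suffices to show this allocation attains the minimum.

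Next I would invoke \Cref{lemma:nonincreasing_returns}, which applied to $J_\ell$ says precisely that the marginal gains $d_\ell(t) := f_\ell(t) - f_\ell(t+1)$ satisfy $d_\ell(1) \ge d_\ell(2) \ge \cdots \ge 0$. Writing the cost of an allocation as $\sum_\ell f_\ell(m_\ell) = \sum_\ell f_\ell(1) - \sum_\ell \sum_{t=1}^{m_\ell - 1} d_\ell(t)$ and noting that $\sum_\ell f_\ell(1)$ is a constant, minimizing the cost is equivalent to choosing, for each $\ell$, a prefix of the already decreasingly sorted sequence $d_\ell(1), d_\ell(2), \dots$ with total prefix length $m-k$, so as to maximize the total collected. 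The greedy method is exactly the rule ``repeatedly append the currently largest available head'' $d_\ell(m_\ell)$ to one of the prefixes.

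I would then prove greedy optimal for this selection problem by an exchange argument. If the greedy allocation $g$ differs from an optimal allocation $o$, pick $\ell$ with $g_\ell > o_\ell$ and $\ell'$ with $g_{\ell'} < o_{\ell'}$ (so $o_{\ell'} \ge 2$ since $g_{\ell'} \ge 1$). Consider the greedy step that raised partition $\ell$ from $o_\ell$ to $o_\ell + 1$ machines; its gain was $d_\ell(o_\ell)$. At that moment partition $\ell'$ held some count $c$ with $1 \le c \le g_{\ell'} \le o_{\ell'} - 1$ (counts only grow during greedy), so greedy's preference of $\ell$ over $\ell'$ yields $d_\ell(o_\ell) \ge d_{\ell'}(c) \ge d_{\ell'}(o_{\ell'}-1)$ by monotonicity of $d_{\ell'}$. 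Now move one machine in $o$ from $\ell'$ to $\ell$: the cost changes by $d_{\ell'}(o_{\ell'}-1) - d_\ell(o_\ell) \le 0$, so the result is still optimal and strictly closer to $g$; iterating reaches $g$, which is therefore optimal.

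The bookkeeping in the first step and the elementary algebra behind the rewriting in the second step are routine; the only genuinely delicate point is the third step, where one must move a machine \emph{toward} the greedy allocation and combine the inequality coming from the greedy choice with the monotonicity of the marginal gains in exactly the right order (moving the wrong way, or chaining the inequalities the wrong way, reverses the sign). I note also that ties in the greedy rule are harmless, since the exchange argument works for any greedy choice, and that the third step can alternatively be quoted as the classical optimality of the marginal-allocation (incremental) algorithm for separable convex resource allocation.
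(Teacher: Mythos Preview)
Your proof is correct and follows the same exchange-argument approach as the paper: both reduce the problem to allocating machine counts to partitions and then shift an optimal allocation toward the greedy one without increasing cost, using the diminishing returns of \Cref{lemma:nonincreasing_returns}. Your treatment is in fact more careful than the paper's terse version—the paper appeals to \Cref{lemma:nonincreasing_returns} alone, whereas you correctly identify that the cross-partition comparison also requires the greedy-choice inequality $d_\ell(o_\ell)\ge d_{\ell'}(c)$, chained with the monotonicity of $d_{\ell'}$, and you get the direction of the exchange and the feasibility check $o_{\ell'}\ge 2$ explicitly right.
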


\begin{proof}
    Let $\salg$ and $\sopt$ be the greedy and optimal schedules, respectively.
	If the numbers of machines assigned to each of the partitions are equal in $\salg$ and $\sopt$, then the theorem obviously holds.
	
	Assume that there is a partition $J_i$ that has more machines assigned  in $\sopt$ than in $\salg$.
	It means that there is also a partition $J_j$ that has less machines assigned in $\sopt$ than in $\salg$.
	Let us construct a new schedule $\sopt$ by assigning one more machine to $J_i$ and one less to $J_j$.
	By \cref{lemma:nonincreasing_returns} we decreased $\sumcost$ on partition $J_i$ no less than we increased it on partition $J_j$.
	Hence, the claim follows.
\end{proof}

\section{Uniform machines}

It turns out that for an arbitrary number of partitions the problem is hard, even when all jobs have equal length:
\begin{theorem} 
	\label{theorem:nphcomplete}
	$Q|G = \completepartite, p_j=1|\sumcost$ is $\SNPHClass$.
\end{theorem}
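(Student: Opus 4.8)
The plan is to reduce from \ThreePartition, which is \SNPHClass: given $3m$ positive integers $a_1, \dots, a_{3m}$ with $\sum_i a_i = mB$ and $B/4 < a_i < B/2$ for all $i$, decide whether the multiset $\{a_1, \dots, a_{3m}\}$ can be split into $m$ triples each summing to $B$. From such an instance I would build an instance of $Q|G = \completepartite, p_j = 1|\sumcost$ with $3m$ machines of speeds $s_i = a_i$, and with $m$ partitions $J_1, \dots, J_m$, each consisting of exactly $N$ unit jobs, where $N$ is a fixed polynomial in $m$ and $B$ (a value such as $N = 5 m B^3$ will do). Since \ThreePartition is strongly \NPHClass we may assume $B$ is polynomially bounded, so the constructed instance has polynomial size even with the number of jobs $mN$ written in unary. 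Because $G$ is complete multipartite, a machine may only run jobs of a single partition, so a schedule is just an assignment of each machine to one partition together with a within-partition schedule; writing $\Sigma_i$ for the total speed of the machines assigned to $J_i$, every finite-cost schedule induces positive integers $\Sigma_1, \dots, \Sigma_m$ with $\sum_i \Sigma_i = mB$ (no partition may be left machineless).

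The core of the argument is a pair of estimates on the optimal $\sumcost$ of a single partition of $N$ unit jobs run on machines of total speed $\Sigma$. For the lower bound, if $q_j$ jobs go on the machine of speed $s_j$ then that machine contributes $\frac{q_j(q_j+1)}{2 s_j} \ge \frac{q_j^2}{2 s_j}$, and by the Cauchy--Schwarz inequality $\sum_j \frac{q_j^2}{s_j} \ge \frac{(\sum_j q_j)^2}{\sum_j s_j} = \frac{N^2}{\Sigma}$, so the optimum is at least $\frac{N^2}{2\Sigma}$. For the upper bound, in the \YES\ case I assign to each partition the three machines of one triple (so $\Sigma_i = B$ for every $i$) and on a machine of speed $s$ inside that triple I place $\floor{N s / B}$ jobs, the leftover going to one designated machine; a direct computation then bounds the contribution of that partition by $\frac{N^2}{2B} + \Osymbol(N)$, hence the whole schedule by $\frac{m N^2}{2B} + \Osymbol(mN)$. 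This gives the threshold: a \YES\ instance admits a schedule of cost at most $K := \frac{m N^2}{2B} + c\, m N$ for a suitable absolute constant $c$.

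It remains to show that a \NO\ instance forces cost strictly above $K$. By the lower bound, every finite-cost schedule has $\sumcost \ge \frac{N^2}{2} \sum_{i=1}^m \frac{1}{\Sigma_i}$. Since each $a_i \in (B/4, B/2)$, a set of machines has total speed exactly $B$ only if it has exactly three elements, so if all $\Sigma_i$ were equal to $B$ we would have recovered a valid 3-partition, contradicting the \NO\ assumption; hence $(\Sigma_1, \dots, \Sigma_m)$ is a non-balanced integer vector of sum $mB$. A short exchange argument using the strict convexity of $x \mapsto 1/x$ -- replacing a pair with $\Sigma_a \ge \Sigma_b + 2$ by $\Sigma_a - 1, \Sigma_b + 1$ strictly decreases $\sum_i 1/\Sigma_i$ -- shows that over non-balanced integer vectors $\sum_i \frac{1}{\Sigma_i}$ is minimized by $(B+1, B-1, B, \dots, B)$, so $\sum_i \frac{1}{\Sigma_i} \ge \frac{m}{B} + \frac{2}{B^3 - B}$ and $\sumcost \ge \frac{m N^2}{2B} + \frac{N^2}{B^3 - B}$. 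Taking $N$ a large enough polynomial in $m$ and $B$ makes $\frac{N^2}{B^3 - B}$ exceed $c\, m N$, so the \NO-case bound is above $K$ and the reduction is complete. The only genuine obstacle is calibrating the two single-partition estimates precisely enough that the convexity gap of order $N^2/B^3$ dominates the $\Osymbol(mN)$ integrality slack coming from rounding the per-machine job counts; once the constants are pinned down this is routine, and essentially the same construction -- with the makespan $\approx N/\Sigma_i$ of a partition replacing the term $\frac{N^2}{2\Sigma_i}$ -- handles the $\cmaxcost$ criterion as well.
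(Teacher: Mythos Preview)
Your reduction is correct, and the overall strategy (3-Partition $\to$ machines with speeds $a_i$, partitions of unit jobs) matches the paper. The route, however, is noticeably heavier than the paper's. The paper takes each partition to have exactly $b$ jobs (the 3-Partition bound itself, not a large polynomial $N$), and observes that if machine $m_i$ receives $l_i$ jobs then its contribution is $\binom{l_i+1}{2}/s(a_i)$, which equals $\tfrac{l_i+1}{2}$ precisely when $l_i=s(a_i)$. A short algebraic identity then shows the total cost achieves the target threshold if and only if $l_i=s(a_i)$ for every machine, which together with the incompatibility constraint forces a valid 3-partition. No Cauchy--Schwarz, no convexity exchange, no calibration of $N$ against an $\Osymbol(mN)$ slack: the threshold is hit exactly, and the instance is much smaller.

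What your approach buys is robustness: you use only that the per-partition cost is (approximately) a convex function of the total assigned speed, so the same skeleton really does transfer to $\cmaxcost$ or other convex criteria with almost no change, as you note. The paper instead gives a separate one-line proof for $\cmaxcost$ with threshold $1$. Two minor points worth tightening in your write-up: first, you assert $\sum_i\Sigma_i=mB$, but a priori a schedule may leave machines idle; you should say explicitly that assigning an idle machine to any partition cannot increase cost, so without loss of generality every machine is used. Second, the ``routine'' constant-pinning you defer is where all the work in your argument actually lives; the paper's exact-threshold construction avoids this entirely.
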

\begin{proof}
    We proceed by reducing \SNPCClass \ThreePartition  \cite{gareyJ1979computers} to our problem.
    
    Recall that an instance of \ThreePartition is $(A, b, s)$, where $A$ is a set of $3 m$ elements, $b$ is a bound value, and $s$ is a size function such that for each $a \in A$, $\frac{b}{4} < s(a) < \frac{b}{2}$ and $\sum_{a \in A} s(a) = m b$.

	The question is whether $A$ can be partitioned into disjoint sets $A_1, \ldots, A_m$, such that $\forall_{1 \le i \le m} \sum_{a \in A_i} s(a) = b$.
	
	For any $(A, b, s)$ we let $G = (J_1 \cup \ldots \cup J_m, E) = \completekpartite{m}$, where $|J_i| = b$ for all $i = 1, 2, \ldots, m$.
	Moreover, let $M = \{m_1, \ldots, m_{3 m}\}$ with speeds $s(m_i) = s(a_i)$.
	Finally, let the limit value be $\sumcost = \frac{m (b + 1)}{2}$.
	
	Suppose now that an instance $(A, b, s)$ admits a $3$-partition and let the sets be $A_1$, \ldots, $A_m$. 
	Then if $a_i \in A_j$, we assign exactly $s(a_i)$ jobs from $J_j$ to the machine $m_i$.
	Since for every $i$ it holds that $\sum_{a \in A_i} s(a) = b$, we know that all jobs are assigned.
    Moreover, we never violate the incompatibility graph conditions, as we assign to any machine only jobs from a single partition.
	
	By assigning $s(a_i)$ jobs to a machine $m_i$ we ensure that
	\begin{align*}
	    \sumcost = \sum_{i = 1}^m \frac{\binom{s(a_i) + 1}{2}}{s(m_i)} = \sum_{i = 1}^m \frac{s(a_i) + 1}{2} = \frac{m (b + 1)}{2}.
	\end{align*}
	
	Conversely, suppose that we find a schedule $S$ with $\sumcost \le \frac{m (b + 1)}{2}$.
	Now let $l_i$ be the number of jobs assigned to $m_i$. 
	Let us consider the following quantity:
	\begin{align*}
	    X & := \sum_{i = 1}^m \binom{l_i + 1}{2} \frac{1}{s(m_i)} - \frac{m (b + 1)}{2} \\
	    & = \sum_{i = 1}^m \frac{l_i + s(m_i) + 1}{s(m_i)} (l_i - s(m_i)).
	\end{align*}
	$X$ is the difference $\sumcost (S)$ and $\sumcost$ of a schedule where each machine $m_i$ is assigned $s(m_i)$ jobs.
	Now, we note that $\sum_{i = 1}^m (l_i - s(m_i)) = 0$ as every job is assigned somewhere. Moreover,
	\begin{align*}
	    l_i + s(m_i) + 1 > 2 s(m_i) & \text{\quad if $l_i \ge s(m_i)$,} \\
	    l_i + s(m_i) + 1 \le 2 s(m_i) & \text{\quad if $l_i < s(m_i)$.}
	\end{align*}
	By combining the last two facts we note that: every element $l_i - s(m_i) \ge 0$ in $X$ gets multiplied by some number greater than $2$, and every $l_i - s(m_i) < 0$ gets multiplied by some number not greater than $2$.
	Therefore $\sum_{i = 1}^m (l_i - s(m_i)) = 0$ implies $X \ge 0$.
	Moreover, if there exists any element, such that $l_i - s(m_i) > 0$, then $X > 0$.
	However, a schedule with $\sumcost \le \frac{m (b + 1)}{2}$ satisfies $X \le 0$, therefore it holds that $X = 0$ and $l_i = s(m_i)$ for all machines.
	
	Each machine has jobs from exactly one partition assigned. 
	Let $M_j$ be the set of machines on which  the jobs from $J_j$ are executed. 
	By the previous argument a machine $m_i$ has exactly $s(m_i)$ jobs assigned in $S$.
	By this and the bounds on $a \in A$, we have $|M_j| = 3$.
	Since the correctness of the schedule guarantees that all jobs are covered by some machines, we know that the division into $M_1$, $M_2$, \ldots, $M_m$ corresponds to a partition.
\end{proof}
    
\begin{theorem}
	$Q|G = \completepartite, p_j=1|\cmaxcost$ is $\SNPHClass$.
\end{theorem}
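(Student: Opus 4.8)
The plan is to reduce from \SNPCClass \ThreePartition, reusing almost verbatim the gadget built in the proof of \cref{theorem:nphcomplete}. Given an instance $(A, b, s)$ with $|A| = 3m$, each $a \in A$ satisfying $\frac{b}{4} < s(a) < \frac{b}{2}$ and $\sum_{a \in A} s(a) = mb$, I would again take $G = \completekpartite{m}$ with $|J_i| = b$ for every $i$, and machines $M = \{m_1, \ldots, m_{3m}\}$ with $s(m_i) = s(a_i)$; the only change is that the target value becomes $\cmaxcost = 1$. The intuition is that a machine $m_i$ loaded with exactly $s(m_i)$ unit jobs finishes at time $s(m_i)/s(m_i) = 1$, so the makespan threshold $1$ plays here the role that $\frac{m(b+1)}{2}$ played for $\sumcost$.

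For the forward direction, if $(A,b,s)$ admits a $3$-partition $A_1, \ldots, A_m$, I assign, whenever $a_i \in A_j$, exactly $s(a_i)$ jobs of $J_j$ to machine $m_i$. As in the previous proof, $\sum_{a \in A_j} s(a) = b = |J_j|$ guarantees that every job is placed, each machine receives jobs from a single partition (so the incompatibility constraints hold), and each machine finishes at time $s(a_i)/s(m_i) = 1$; hence $\cmaxcost = 1$.

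For the converse, suppose a feasible schedule $S$ has $\cmaxcost \le 1$. Let $l_i$ be the number of jobs on $m_i$; since the jobs are unit-size, $\cmaxcost \le 1$ forces $l_i / s(m_i) \le 1$, i.e.\ $l_i \le s(m_i)$, for every $i$. Summing over all machines and using $\sum_{i} l_i = mb$ (all jobs are scheduled) together with $\sum_i s(m_i) = \sum_{a \in A} s(a) = mb$, we obtain $\sum_i l_i = \sum_i s(m_i)$; combined with $l_i \le s(m_i)$ for each $i$ this yields $l_i = s(m_i)$ for all $i$. From here the argument of \cref{theorem:nphcomplete} applies unchanged: each machine carries jobs of exactly one partition, so letting $M_j$ be the set of machines serving $J_j$ we have $\sum_{m_i \in M_j} s(m_i) = |J_j| = b$, and the bounds $\frac{b}{4} < s(a) < \frac{b}{2}$ force $|M_j| = 3$; the partition $M_1, \ldots, M_m$ then induces a valid $3$-partition of $A$.

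Since all numbers involved are polynomially bounded in the \ThreePartition instance (which is strongly \NPCClass), this is a strong (pseudo-polynomial) reduction and establishes \SNPHClass. The argument is essentially a simplification of the previous one — no convexity estimate is needed, just the single counting identity $\sum_i l_i = \sum_i s(m_i)$ — so I do not anticipate a genuine obstacle; the only point requiring a little care is checking that $\cmaxcost = 1$ is simultaneously achievable in the forward direction and forces exact loads $l_i = s(m_i)$ in the backward direction, which is precisely what the size bounds $\frac{b}{4} < s(a) < \frac{b}{2}$ are there to secure.
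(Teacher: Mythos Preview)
Your proposal is correct and matches the paper's own proof exactly: the paper simply states that the argument is almost identical to that of \cref{theorem:nphcomplete}, replacing the $\sumcost$ bound by $\cmaxcost = 1$, which is precisely what you do. Your counting argument $\sum_i l_i = \sum_i s(m_i)$ together with $l_i \le s(m_i)$ is the natural (and slightly cleaner) substitute for the convexity computation used in the $\sumcost$ case.
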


\begin{proof}
    The proof is almost identical to that of \cref{theorem:nphcomplete}, only using $\cmaxcost$ as the criterion and $1$ as the bound.
\end{proof}

When the number of partitions is fixed, we show that there are polynomial algorithms for solving the respective problems.
\begin{theorem}
	There exists a $\Osymbol(m n^{k + 1} \log(m n))$ algorithm for $Q|G = \completekpartite{k}, p_j=1|\cmaxcost$.
\end{theorem}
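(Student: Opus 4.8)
The plan is to reduce the problem to polynomially many feasibility tests, each solved by a knapsack-style dynamic program over the machines.

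First I would note that, since $G = \completekpartite{k}$ is complete multipartite, every edge joins jobs from distinct partitions; hence in any valid schedule the jobs placed on one machine all lie in a single partition, and (as the ordering is irrelevant for $\cmaxcost$) a schedule is fully described by an assignment of the machines to the partitions $J_1, \ldots, J_k$. If a machine of speed $s$ receives $\ell$ of the unit jobs, its completion time is $\ell / s$, so for a fixed assignment the optimal makespan on $J_i$ is achieved by balancing loads, and the global optimum $\optcmaxcost$ equals $\ell / s(m)$ for some machine $m$ and some integer $\ell$ with $0 \le \ell \le n$ (namely for the machine and job count attaining the maximum load ratio). Consequently $\optcmaxcost$ belongs to a set of at most $mn$ candidate values, which I would enumerate and sort in $\Osymbol(mn \log(mn))$ time.

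Next, for a fixed threshold $T$ from this list I would decide whether a schedule with $\cmaxcost \le T$ exists. A machine of speed $s$ can complete at most $\lfloor sT \rfloor$ unit jobs by time $T$ (as job number $\ell$ finishes at $\ell/s$, and $\ell/s \le T \iff \ell \le \lfloor sT \rfloor$), and this capacity does not depend on which partition the machine serves. Therefore a schedule of makespan at most $T$ exists if and only if the machines can be split into a (possibly discarded) remainder and $k$ groups $M_1, \ldots, M_k$ with $\sum_{m \in M_i} \lfloor s(m) T \rfloor \ge |J_i|$ for all $i$. I would test this by dynamic programming: processing the machines in any fixed order, the state after the $j$-th machine is a vector $(x_1, \ldots, x_k)$ with $0 \le x_i \le |J_i|$ recording how much of each partition's demand has been covered so far (capped at $|J_i|$); at each step the current machine is either discarded or assigned to some partition $i$, which raises $x_i$ by $\min\{\lfloor s(m_j)T \rfloor,\ |J_i| - x_i\}$. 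The threshold $T$ is feasible precisely when the all-demands-met state is reachable after the last machine. Since $\sum_i |J_i| \le n$ and we may assume $k \le n$ (partitions with no jobs are removed beforehand), the table has $\Osymbol(m\, n^{k})$ entries with $\Osymbol(k)$ work each, so one test costs $\Osymbol(m\, n^{k+1})$.

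Finally I would binary-search for the smallest feasible $T$ among the $\Osymbol(mn)$ sorted candidates, using $\Osymbol(\log(mn))$ feasibility tests, and recover an explicit optimal assignment by back-tracking through the dynamic-programming table; this yields total running time $\Osymbol(m\, n^{k+1} \log(mn))$ (the initial sort being dominated), and it also reports infeasibility of the whole instance when no candidate is feasible. I expect no genuine obstacle here: the only points requiring care are the observation that $\optcmaxcost \in \{\ell / s(m) : m \in M,\ 0 \le \ell \le n\}$, which follows because the makespan of a machine group is attained by the last job on its critical machine and that machine's job count is an integer, and the routine verification that replacing a machine by the capacity $\lfloor sT \rfloor$ faithfully encodes ``completes within time $T$''; the dynamic program itself is a standard bounded-coverage recursion.
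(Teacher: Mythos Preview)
Your proposal is correct and follows essentially the same approach as the paper: enumerate the $\Osymbol(mn)$ candidate makespans, binary-search over them, and for each candidate run a dynamic program over the machines whose state records how much of each partition's demand is still uncovered (the paper decrements from $(|J_1|,\ldots,|J_k|)$ while you increment toward it, which is equivalent). The only cosmetic difference is that the paper, when assigning a machine to a partition, enumerates all possible job counts up to the capacity (yielding $kn$ transitions per state), whereas you observe that assigning the maximal number dominates and take only $k+1$ transitions; both fit comfortably within the stated $\Osymbol(mn^{k+1}\log(mn))$ bound.
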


\begin{proof}
    We adopt the Hochbaum-Shmoys framework \cite{hochbaum1988}, i.e. we guess $\cmaxcost$ of a schedule and check whether this is a feasible value.
    There are only up to $\Osymbol(m n)$ possible values of $\cmaxcost$ to consider, e.g. using binary search, as it has to be determined by the number of jobs loaded on a single machine.

    Now assume that we check a single possible value of $\cmaxcost$.
    Fix any ordering of the machines.
    We store an information, if there is a feasible assignment of the first $0 \le l \le m$ machines such that there are $a_i$ unassigned jobs for partition $J_i$.
    In each step there is a set of tuples $(a_1, \ldots, a_k)$ corresponding to the remaining jobs of $\Osymbol(n^k)$ size.
    We start with $(|J_1|, \ldots, |J_k|)$ -- if no machines are used, all jobs are unassigned.
    
    For $l \ge 1$ we take any tuple $(a_1, \ldots, a_k)$ from the previous iteration.
    We try all possible assignments of $m_l$ to the partitions, then we try all feasible assignments of the remaining jobs to the machine.
    This produces an updated tuple, determining some feasible assignment of the first $l$ machines.
    For each $(a_1, \ldots, a_k)$ we construct at most $kn$ updated tuples, each in time $\Osymbol(1)$, so the work is bounded by $\Osymbol(k n^{k + 1})$.
    Note that we do need to store only one copy of each distinct $(a_1, \ldots, a_k)$.
    
    After considering $m$ machines it is sufficient to check if the tuple $(0, 0, \ldots, 0)$ is feasible.
    Clearly the total running time for a single guess of $\cmaxcost$ is $\Osymbol(mkn^{k + 1})$.
\end{proof}

\begin{theorem}
	There exists a $\Osymbol(m n^{k + 1})$ algorithm for $Q|G = \completekpartite{k}|\sumcost$.
\end{theorem}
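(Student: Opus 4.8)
The plan is to reduce an instance to the choice of how the $m$ machines are split among the $k$ partitions, and then, on each partition separately, to an optimal schedule for $Q||\sumcost$; the split is searched by a dynamic program over the machines, in the spirit of the preceding makespan algorithm.

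First I would record the structural reduction. Since $G$ is complete $k$-partite, every machine may receive jobs of at most one partition, so a feasible schedule is exactly a partition $M = M_1 \cup \ldots \cup M_k$ of the machine set together with, for each $i$, a schedule of the partition's jobs $J_i$ on $M_i$; moreover $\sumcost$ is the sum over $i$ of the $\sumcost$-value of the $i$-th sub-instance, and on a fixed $M_i$ an optimal $Q||\sumcost$ schedule of $J_i$ can be computed, with an optimum depending only on $M_i$. I would recall the classical description of such an optimum: sorting $J_i$ non-increasingly as $p^{(i)}_1 \ge p^{(i)}_2 \ge \ldots$ and sorting the position coefficients $\{\ell/s(m) : m \in M_i,\ \ell \ge 1\}$ non-decreasingly, the $a$-th job is charged the $a$-th smallest coefficient. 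Equivalently, scanning the jobs of $J_i$ in non-increasing order and placing each on the machine of $M_i$ whose next free position has the smallest coefficient, a job placed as the $(c+1)$-st job of a machine $m$ contributes exactly $p\cdot(c+1)/s(m)$, and this contribution is fixed at the moment of placement: under SPT the larger jobs already present stay behind it, and the smaller jobs placed later move in front of it.

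Next I would set up the dynamic program. Sort the machines by speed and process them one at a time; after the first $l$ machines the state is the tuple $(a_1, \ldots, a_k)$ of the numbers of still-unscheduled jobs of each partition, of which there are $\Osymbol(n^k)$, and for each state we keep the least cost incurred so far. From a state we branch over the partition to which the next machine is assigned and over how many of that partition's remaining jobs it will eventually carry ($\Osymbol(n)$ choices), add the corresponding contribution, and pass to the updated tuple; after all $m$ machines the answer is read off at $(0,\ldots,0)$. Treating $k$ as a constant this gives the claimed $\Osymbol(m n^{k+1})$ running time.

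The delicate point — and the main obstacle — is charging the contribution of a transition. Unlike the $\cmaxcost$ case, the multiplier with which a job appears in $\sumcost$ on a machine of $M_i$ is not determined by that machine in isolation: it depends on how the sorted coefficients of all of $M_i$ interleave, so one may not simply declare that the machine just added takes ``the largest remaining jobs'' of $J_i$ with multipliers $1,\ldots,c$ — indeed, already for two uniform machines the faster one need not take a prefix of the sorted job list. I would handle this by committing, together with the assignment of a machine to a partition, the number of positions it will ultimately use, and by accounting the cost not machine by machine but by running through the machine--position pairs of each partition in non-decreasing order of their coefficient, so that — by the description of the optimal $Q||\sumcost$ schedule recalled above — the $a$-th position opened in partition $i$ is charged to $p^{(i)}_a$. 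Checking that this bookkeeping fits within the same state space, and that branching over all commitments does recover the true optimum (while every commitment still yields a valid, hence not-too-good schedule), is the technical heart of the argument; the exchange lemma that a faster machine carries weakly more jobs in an optimal $Q||\sumcost$ schedule is the key ingredient there.
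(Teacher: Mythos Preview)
You have zeroed in on exactly the point the paper's own proof glosses over: when machine $m_l$ is committed to partition $i$ with $n'$ jobs, the paper simply says this ``determines an assignment of $n'$ unassigned jobs to $m_l$'' and adds the resulting single-machine $\sum C_j$ to the running cost, without saying which $n'$ jobs or why the resulting minimum equals the true optimum. Your instinct that this is the delicate step is right; the naive reading --- give $m_l$ a contiguous block of the sorted job list of $J_i$ and charge its single-machine cost --- is wrong already for $k=1$. With two machines of speeds $2$ and $1$ and four jobs of sizes $4,3,2,1$, the optimal $Q||\sum C_j$ schedule puts $\{4,3,1\}$ on the fast machine and $\{2\}$ on the slow one for a total of $8.5$, whereas every split of the sorted list into two contiguous blocks yields at least $9$.

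Your proposed remedy, however, is only sketched, and as sketched it does not fit the machine-by-machine DP with state $(a_1,\ldots,a_k)$. You want to charge positions in non-decreasing coefficient order so that the $a$-th position opened in partition $i$ meets $p^{(i)}_a$; but the order in which positions are opened when you iterate over machines is \emph{not} the coefficient order. Even taking machines fastest-first and committing $c$ positions on the current machine $m$, a later (slower) machine $m'$ assigned to the same partition contributes a first position with coefficient $1/s(m') < c/s(m)$ whenever $c>s(m)/s(m')$, so the $c$ positions you just opened are not the next $c$ smallest coefficients of $M_i$, and the transition cost cannot be written as $\sum_{\ell=1}^{c} p^{(i)}_{b+\ell}\cdot \ell/s(m)$ for any offset $b$ determined by the current state alone. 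The exchange lemma you invoke constrains the job \emph{counts} $c_m$, not which jobs land on which machine --- and that is precisely what a per-machine transition needs. You candidly flag this as ``the technical heart of the argument''; the trouble is that nothing in your outline (nor, for that matter, in the paper's own proof) actually supplies that heart.
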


\begin{proof}
    Let us process machines in any fixed ordering.
    Let the state of the partial assignment be identified by a tuple $(a_1, \ldots, a_k, c)$, where $a_i$ denotes the number of vertices remaining to be covered and $c$ denotes $\sumcost$ of the jobs scheduled so far.
    
    Assume that two partial assignments $P_1$ and $P_2$ on $m'$ first machines are described by the same state $(a_1, \ldots, a_k)$; and two values $c_1$ and $c_2$, respectively.
    If $c_2 \ge c_1$, then any extension of $P_2$ on $m'' > m'$ first machines cannot be better than the exact same extension of $P_1$ on $m''$ machines.
    Therefore for any $(a_1, \ldots, a_k)$ it is sufficient to store only the tuple $(a_1, \ldots, a_k, c)$ with the smallest $c$.
    
	We may proceed with a dynamic programming similar to the one used for $\cmaxcost$.
	That is, we start with a single state $(|J_1|, \ldots, |J_k|, 0)$.
	In the $l$-th step ($l = 1, \ldots, m$) we take states $(a_1, \ldots, a_k, c)$, corresponding to feasible assignments for the first $l - 1$ machines.
	We try all $k$ possible assignments of $m_l$ to partitions.
	If $m_l$ is assigned to $J_i$, for some $i$, then we try all choices of the number $n' \in \{0, \ldots, a_i\}$ of remaining jobs from $J_i$.
	Such a choice together with the assignment of $m_l$ determines an assignment of $n'$ unassigned jobs to $m_l$.
	If the tuple constructed $(a_1' ,\ldots, a_k', c')$ has $\sumcost$ inferior to the already produced we do not store it.

    Finally, after considering all machines we obtain exactly one tuple of the form $(0, \ldots, 0, c)$, which determines the optimal schedule.
    
    At each step there are at most $n^k$ states since for every $(a_1, \ldots, a_k)$ we store only the smallest $c$.
    There are up to $n$ possible new assignments generated from each state. Each try requires $\Osymbol(k)$ time.
    Therefore, it is clear that for any fixed $k$ the time complexity of the algorithm is $\Osymbol(m n^{k + 1})$.
\end{proof}

\begin{lemma}
	\label{lemma:GeometricSumForSigma}
	Let $J$ be any set of jobs.
	Let $M$ be a set of uniform machines with respective speeds $s(m_1) = s(m_2) = 2$, $s(m_i) = 2^{i - 1}$ for $3 \le i \le m$.
	Then it holds that $\sumcost$ of an optimal schedule of $J$ on $M$ is at least as big as the optimal $\sumcost$ for $J$ and a single machine with speed $2^m$
\end{lemma}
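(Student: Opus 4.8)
The plan is to establish a stronger, self‑contained fact from which the lemma follows by iteration. Call the following the \emph{merge claim}: for any job set $J'$ and any $s>0$, the optimal $\sumcost$ of $J'$ on \emph{two} machines of speed $s$ is at least the optimal $\sumcost$ of $J'$ on \emph{one} machine of speed $2s$. This suffices because the given speeds satisfy $s(m_1)+s(m_2)+\sum_{i=3}^{m}s(m_i)=2+2+\sum_{i=3}^{m}2^{i-1}=2^{m}$, so the single fast machine is precisely what one obtains by repeatedly fusing the two slowest machines.

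For the merge claim I would invoke the classical closed form for $P\,r\,||\,\sumcost$: with processing requirements sorted non‑increasingly as $p_{(1)}\ge\dots\ge p_{(n')}$, the optimum on $r$ identical unit‑speed machines equals $\sum_{j=1}^{n'}\lceil j/r\rceil\,p_{(j)}$ (schedule the $r$ largest jobs last on their machines, the next $r$ second‑to‑last, and so on; optimality is a routine exchange argument). Hence the optimum on two machines of speed $s$ is $\tfrac1s\sum_{j}\lceil j/2\rceil\,p_{(j)}$, while the optimum on one machine of speed $2s$ is the SPT value $\tfrac1{2s}\sum_{j}j\,p_{(j)}$. Since $\lceil j/2\rceil\ge j/2$ and every $p_{(j)}\ge 0$, the former dominates the latter, which is the merge claim. (This is in the same spirit as \cref{lemma:nonincreasing_returns}: splitting a machine into slower pieces behaves like the diminishing returns there, except that here halving the speed exactly cancels the would‑be gain.)

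To finish, I would iterate: from speeds $(2,2,4,8,\dots,2^{m-1})$ fuse the two speed‑$2$ machines to get $(4,4,8,\dots,2^{m-1})$, then fuse the two speed‑$4$ machines to get $(8,8,16,\dots,2^{m-1})$, and so on, until after $m-1$ fusions only the single machine of speed $2^{m}$ remains. It is enough to check that each fusion cannot decrease the optimum: given an optimal schedule for the current instance, let $K$ be the set of jobs it places on the two (equal‑speed, slowest) machines; the cost incurred there is at least $\mathrm{OPT}(K;\text{two speed-}s)$, which by the merge claim is at least $\mathrm{OPT}(K;\text{one speed-}2s)$, so placing $K$ optimally (by SPT) on the fused machine and leaving every other machine and its jobs exactly as before yields a feasible schedule for the next instance with no larger $\sumcost$. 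Chaining these inequalities shows that the optimal $\sumcost$ on $M$ is at least the optimal $\sumcost$ of $J$ on the single machine of speed $2^{m}$.

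The only delicate point is the fusion step: one must not claim that the restriction of a global optimum is itself optimal for $K$ on two machines, only that its cost is bounded below by that optimum; and one must observe that re‑arranging precisely the jobs of $K$ on the fused machine leaves the rest of the schedule untouched and still feasible — which is immediate here, since this is a plain $Q||\sumcost$ instance with no incompatibility constraints in play. Everything else reduces to the elementary inequality $\lceil j/2\rceil\ge j/2$ together with the standard $\sumcost$ formulas, so no real obstacle remains.
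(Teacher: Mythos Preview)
Your proof is correct and follows essentially the same approach as the paper: both reduce to the claim that merging two equal-speed machines into one machine of twice the speed cannot increase the optimal $\sumcost$, and then iterate this merge from the slowest pair upward. The only cosmetic difference is in how the merge claim is established: the paper takes the jobs on the two speed-$2^i$ machines and exhibits an explicit interleaved schedule on the speed-$2^{i+1}$ machine whose cost is no larger, whereas you compare the two optimal values directly via the closed-form $\tfrac1s\sum_j\lceil j/2\rceil\,p_{(j)}$ versus $\tfrac1{2s}\sum_j j\,p_{(j)}$ and the inequality $\lceil j/2\rceil\ge j/2$ --- these are two renderings of the same computation.
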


\begin{proof}
	It is sufficient to observe that by replacing two machines $m'$ and $m''$ with $s(m') = s(m'') = 2^i$ with one machine $m$ with $s(m) = 2^{i + 1}$ we never increase the total completion time of the optimal schedule.
	
	Without loss of generality assume that there are exactly $k$ jobs assigned to both $m'$ and $m''$, as we may always add jobs with $p_j = 0$ at the beginning.
	The contribution of $m'$ and $m''$ to the total completion time is equal to
	$
	\sumcost(m', m'') = \sum_{j = 1}^{k} \frac{j p'_j}{2^i} + \sum_{j = 1}^{k} \frac{j p''_j}{2^i}
	$
	where $p'_j$ and $p''_j$ are the $j$-th last jobs scheduled on $m'$ and $m''$, respectively.
	
	Now if we consider a set of machines where $m$ replaced $m'$ and $m''$ and consider a schedule where all jobs from $m'$ and $m''$ are scheduled on $m$ in an interleaving manner the first is the first one from $m'$, the second is the first one from $m''$ etc.
	Then the contribution of $m$ to the total completion time is equal to
	$
	\sumcost(m) = \sum_{j = 1}^{k} \frac{(2 j - 1) p'_j}{2^{i + 1}} + \sum_{j = 1}^{k} \frac{2 j p''_j}{2^{i + 1}} < \sumcost(m', m'').
	$
	Since this schedule is clearly a feasible one for a new set of machines we conclude that this holds also for the optimal schedule for the same set of machines.
\end{proof}

Using \cref{lemma:GeometricSumForSigma}, exhaustive search, linear programming, and rounding we were able to prove \cref{theorem:4apx}.
Roughly speaking, the main idea lies in the fact that we may guess the speeds of the machines for each of the partitions. 
Then construct a linear (possibly fractional) relaxation of the assignment of the machines to the partitions and round it to integer.
The rounding consists of rounding up the number of the machines of each speeds assigned to the partition. 
By knowledge what is the speed of fastest machine assigned to the partition in $\sopt$ and by the previous lemma, we may schedule the jobs assigned to fractions of the machines on a machine with the highest speed in the partition, increasing the total completion by at most $2$.
This together with rounding proves the following theorem. 

\begin{algorithm}
	\begin{algorithmic}[1]
		\Require $J = (J_1, \ldots, J_k)$, $M = \{m_1, \ldots, m_m\}$.
		\State Round the speeds of the machines up to the nearest multiple of $2$.
		\State Let the nonempty group of the machines, ordered by the speeds be $M_1, \ldots, M_{l}$.
		\State For each of the partitions, guess the speed of the machine with the highest speed and the number of the machines of this speed assigned.
			   Let the indices of their speed groups be $s_1', \ldots, s_k'$ and numbers be $n_1' ,\ldots, n_k'$, respectively.
			   Dismiss the guesses that are unfeasible.
		\State Solve the following linear program.
		\State Let variables be:
		\begin{itemize}
			\item $n_{pr, tp}$, where $pr \in \{1, \ldots, k\}$, $tp \in \{1, \ldots, l\}$.
			\item $x_{jb, lr, tp}$, where $jb \in J_1 \cup \ldots, \cup J_k$, $lr \in \{1, \ldots, n\}$, $tp \in \{1, \ldots, l\}$
		\end{itemize}
		\State Let the conditions be:
		\begin{align}
		\sum_{pr} n_{pr, tp} = |M_{tp}|  								   & & \forall{tp} \label{cond:MachinesInTotalRight} \\
		n_{pr, tp} = 0                                                     & & \forall{pr} \forall{tp > s_{pr}}     \label{cond:MachinesForParitionRight0}\\
		0 \le n_{pr, tp} \le |M_{tp}| - \sum_{i \in \{i | s_i = tp\}}n_i'  & & \forall{pr} \forall{tp < s_{pr}}     \label{cond:MachinesForParitionRightI}\\
		n_{pr}' = n_{pr, tp}                                               & & \forall{pr},	tp = s_{pr}             \label{cond:MachinesForParitionRightII} \\
		\sum_{lr, tp} x_{jb, lr, tp} = 1                                   & & \forall{jb}                          \label{cond:EachJobAssigned} \\
		\sum_{jb \in J_{pr}} x_{jb, lr, tp} \le n_{pr, tp}                 & & \forall{pr} \forall{tp} \forall{lr} \label{cond:NotMoreJobsInLayerThanMachines} \\
		0 \le x_{jb, lr, tp}                                               & & \forall{jb} \forall{lr} \forall{tp} \label{cond:NonNegativeJobs} \\
		0 \le n_{pr, tp}                                                   & & \forall{pr} \forall{tp}               \label{cond:NonNegativeMachines}
		\end{align}
		\State Let the cost function be: $\sum_{jb, lr, tp} x_{jb, lr, tp} \cdot lr \cdot p(jb) \cdot \frac{1}{s(tp)}$, where $p(jb)$ is the processing requirement of job $jb$, $s(tp)$ is the speed factor of machine of type $tp$.
		\State Solve the jobs assignment for each partition separately using the optimal solution of LP.
	\end{algorithmic}
	\caption{$4$-approximate algorithm for the problem \break $Q|G = \completekpartite{k}|\sumcost$ }
	\label{algorithm:4ApxForSumCost}
\end{algorithm}

\begin{theorem}
	\label{theorem:4apx}
	There exists a $4$-approximation algorithm for $Q|G = \completekpartite{k}|\sumcost$.
\end{theorem}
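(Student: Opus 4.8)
The plan is to show that \cref{algorithm:4ApxForSumCost} runs in polynomial time for every fixed $k$ and returns a schedule whose total completion time is at most $4$ times the optimum, the factor $4$ splitting as $2$ (speed rounding, line~1) times $2$ (rounding the fractional solution of the linear program). First, a uniform rescaling of all speeds multiplies every completion time by the same constant, so it does not affect the approximation ratio; after such a rescaling we may assume that rounding each speed up as in line~1 multiplies it by a factor in $[1,2)$. Hence, for any fixed assignment of jobs to machines, $\sumcost$ evaluated with the true speeds is at most twice $\sumcost$ evaluated with the rounded (faster) speeds, while the optimum of the rounded instance is at most the optimum of the original one (faster machines never hurt $\sumcost$). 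So it is enough to produce, for the rounded instance, a schedule of cost at most $2\,\mathrm{OPT}_{\mathrm{rounded}}$. As for running time, line~3 enumerates for each of the $k$ partitions one of $l\le m$ speed groups and one of at most $m$ cardinalities, i.e.\ $m^{\Osymbol(k)}$ guesses (infeasible ones are discarded); for each surviving guess the linear program \eqref{cond:MachinesInTotalRight}--\eqref{cond:NonNegativeMachines} has $\Osymbol(knl)$ variables and constraints and is solved in polynomial time, and line~8 is a per-partition transportation problem whose constraint matrix is totally unimodular and so admits an integral optimum. Thus the whole algorithm is polynomial for fixed $k$.

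Second, I would show that the linear program is a valid relaxation for the ``correct'' guess. Fix an optimal schedule $\sopt$ of the rounded instance (without loss of generality it uses every machine), and take the guess in which $s'_{pr}$ is the speed group of the fastest machine $\sopt$ assigns to $J_{pr}$ and $n'_{pr}$ is the number of machines of that speed assigned there. Reading off $n_{pr,tp}$ as the number of group-$tp$ machines $\sopt$ gives $J_{pr}$, and $x_{jb,lr,tp}=1$ exactly when $jb$ is the $lr$-th last job on a group-$tp$ machine of its partition, produces a solution satisfying \eqref{cond:MachinesInTotalRight}--\eqref{cond:NonNegativeMachines} with objective equal to $\sumcost(\sopt)$ (indeed $lr$ is the position counted from the end, which is what the objective weights). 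Hence the LP optimum for this guess is at most $\mathrm{OPT}_{\mathrm{rounded}}$, so the algorithm, which keeps the best guess, is at least as good.

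Third -- the technical heart -- I would bound the cost of turning an optimal fractional LP solution into a schedule. By \eqref{cond:MachinesForParitionRight0} and \eqref{cond:MachinesForParitionRightII} the only fractionality inside a partition $pr$ sits in the groups $tp<s'_{pr}$. Round every such $n_{pr,tp}$ down to $\lfloor n_{pr,tp}\rfloor$; the jobs of $J_{pr}$ that no longer fit into a group-$tp$ slot (at most one group-$tp$ machine's worth, for each $tp<s'_{pr}$) are the \emph{displaced} jobs, which I re-route onto the $n'_{pr}$ fastest machines of $J_{pr}$, while the non-displaced jobs are re-packed integrally into the surviving group-$tp$ slots at no increase over their LP contribution (again using total unimodularity of the job-to-slot transportation problem). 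The displaced jobs admit a schedule on auxiliary machines carrying at most one machine of each speed $v_1,\dots,v_{s'_{pr}-1}$ of cost at most their LP contribution; since those speeds form a geometric progression with sum strictly below the speed $v_{s'_{pr}}$ of the fastest group, \cref{lemma:GeometricSumForSigma}, applied iteratively after padding with empty machines, lets us consolidate all of them onto a single speed-$v_{s'_{pr}}$ machine without increasing this cost; finally, merging that load with whatever one of the fastest machines of $J_{pr}$ already carries -- by the alternating interleave used in the proof of \cref{lemma:GeometricSumForSigma} -- at worst doubles every job's position, hence the contribution of both loads, which is the second factor $2$. Summing over partitions gives a schedule of the rounded instance of cost at most $2\,\mathrm{OPT}_{\mathrm{LP}}\le 2\,\mathrm{OPT}_{\mathrm{rounded}}$, and, with the first paragraph, a schedule of the original instance of cost at most $4\,\mathrm{OPT}$.

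I expect this third step to be the main obstacle: pinning down exactly which fractional jobs are displaced, carrying out the integral re-packing of the rest so that \eqref{cond:NotMoreJobsInLayerThanMachines} is respected without raising the cost, and -- most delicately -- combining \cref{lemma:GeometricSumForSigma} with the interleaving so that the blow-up is a clean factor $2$ independent of $l$ and $k$ (in particular checking that the rounded speed classes really do form the required geometric sequence, which is what the speed rounding in line~1 is for). The speed-rounding and relaxation steps are routine book-keeping by comparison.
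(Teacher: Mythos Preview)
Your plan matches the paper's proof almost exactly: guess for each partition the fastest speed class used and its multiplicity, solve the layered LP, then repair the fractional machine counts per partition using \cref{lemma:GeometricSumForSigma} and a merge onto a real fastest machine, paying one factor~$2$ for the speed rounding and one for the merge.

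The one place where your write-up differs---and where you correctly anticipate the difficulty---is the rounding direction in step~3. You round $n_{pr,tp}$ \emph{down} and try to partition the jobs into ``displaced'' and ``non-displaced'', solving two separate transportation problems. But since the $x_{jb,lr,tp}$ are themselves fractional, an individual job is typically spread across several $(lr,tp)$ pairs, so there is no clean integral split into two job sets whose fractional costs add up to the LP value; total unimodularity yields an integral optimum only for the combined assignment problem, not for each piece in isolation. The paper's remedy is to round $n^*_{pr,tp}$ \emph{up} to $\lceil n^*_{pr,tp}\rceil$, i.e.\ to add one \emph{virtual} machine per slow type, redistribute the LP mass layer by layer so that it becomes a feasible fractional solution on the enlarged (real\,+\,virtual) machine set, and then solve a \emph{single} min-cost flow per partition. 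That integral optimum costs at most the LP value, and whichever jobs end up on virtual machines are precisely your ``displaced'' jobs---identified after the fact rather than before. From that point on, your consolidation of the virtual machines via \cref{lemma:GeometricSumForSigma} (the rounded speeds being powers of~$2$ is exactly what makes the lemma applicable) and the interleaving of their load with one fastest real machine for the second factor~$2$ coincides with the paper.
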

\begin{proof}
	Consider \cref{algorithm:4ApxForSumCost}.
	First notice that the proposed program is a linear relaxation of the scheduling problem.
	Precisely, $n_{pr, tp}$ means how many machines from a group $tp$ are assigned to the partition $pr$; $x_{jb, lr, tp}$ means what part of a job $jb$ is assigned as the $lr$-th last on a machine of type $tp$.
	Notice that jobs assigned to machines of a given type form layers, i.e. jobs assigned as last contribute their processing times once, as the last by one contributes twice, etc.
	
	About the conditions:
	\begin{itemize}
		\item Condition \eqref{cond:MachinesInTotalRight} guarantees that all the machines are assigned, fractionally at worst.
		\item Condition \eqref{cond:MachinesForParitionRight0} provides that no machine with speed higher than maximum possible (i.e. guessed) is assigned to the partition. 
		\item Condition \eqref{cond:MachinesForParitionRightI} guarantees that each of the partitions can be given any number of not preassigned (not assigned by guessing) machines of a given type.
		\item Condition \eqref{cond:MachinesForParitionRightII} guarantees that the given number of machines of guessed type are assigned to a given partition as the fastest ones
		\item Condition \eqref{cond:EachJobAssigned} ensures that any job is assigned completely, in a fractional way at worst.
		\item Condition \eqref{cond:NotMoreJobsInLayerThanMachines} guarantees that for a given layer, partition, and machine type there is no more jobs assigned than the machines of this type to the partition.
	\end{itemize}
	The cost function corresponds to an observation that a job $jb$ assigned as the $l$-th last on the machine of type $tp$ contributes exactly $\frac{l \cdot p(jb)}{s(tp)}$ to $\sumcost$.
	
	An optimal solution to LP $(x^*, n^*)$ corresponds to a fractional assignment of machines to the partitions.

	We now construct for each partition $J_i$ separately a \emph{partition fractional scheduling} problem in the following way:
	\begin{itemize}
		\item A new set of variables $y_{jb,lr,m}$ indicating a fractional assignment of $jb \in J_i$ as the $lr$-th last job on machine $m \in M'$.
		\item A cost function $\sum_{jb,lr,m} y_{jb,lr,m} \cdot \frac{lr \cdot p(jb)}{s(m)}$.
		\item The conditions:
		\begin{itemize}
		    \item $\forall_{jb, lr, m}\; y_{jb,lr,m} \ge 0$,
		    \item $\forall_{jb} \sum_{lr, m} y_{jb,lr,m} = 1$ -- each job has to be assigned completely,
		    \item $\forall_{lr, m} \sum_{jb} y_{jb, lr, m} \le 1$ -- each layer on each machine cannot contain more than a full job in total.
		\end{itemize}
	\end{itemize}
	Here the set of machines $M'$ consists of exactly $\ceil{n^*_{i, tp}}$ machines for each $1 \le tp \le l$. 
	Hence, for each type we add at most one ,,virtual'' machine due to rounding, except the machine with the highest speed per partition, which were preassigned exactly.
	
	Now we rearrange jobs within layers for machines of the same speed to construct some feasible solution to partition fractional scheduling.
	Hence, let $Y = \{y_{jb, lr, m}: s(m) = tp, jb \in J_i\}$ for any fixed $lr$, $tp$,
	We \emph{redistribute} $x_{jb, lr, tp}$ in the following way: $\forall_{lr, tp}$ we set $y_{jb, lr, m} = x_{jb, lr, tp}$ for the consecutive variables $x_{jb, lr, tp}$.
	If such an assignment would set some variable $y_{jb, lr, m} = y'$ such that $\sum_{jb}y_{jb, lr, m} > 1$, then we set $y_{jb, lr, m} = x_{jb, lr, tp} - (y'-1)$, instead.
	And we continue with next a machine of speed $tp$ and unassigned fraction of $x_{jb, lr, tp}$.
	Notice that by condition \eqref{cond:NotMoreJobsInLayerThanMachines} we have $\forall_{lr, tp} \sum_{jb}x_{jb,lr,tp}\le n^*_{i, tp} \le |Y|$.
	Since we only rearrange jobs preserving their layers the cost of $y$ in partition fractional scheduling is equal to contribution of variables from $J_i$ to the cost of $x^*$ in LP.
	Hence an optimal solution can have only at most this cost.
	

	Let us model this linear program as a flow network. 
	Precisely, we construct:
	\begin{itemize}
		\item a set of vertices $V = J_j \cup ( M' \times |J_j|)$
		\item a set of arcs $J_j \times ( M' \times |J_j|)$ with capacity $1$ each, and with the cost of the flow by an arc $(jb, (m, lr))$ equal to $lr \cdot p(jb) \cdot \frac{1}{s(m)}$. 
	\end{itemize}
	Any fractional solution corresponds to a fractional flow by the network, i.e. a value of $y_{jb, m, lr}$ is exactly the flow by the arc $(jb, lr, m)$.
	We know that e.g. Successive Shortest Path Algorithm \cite{NetworkFlows} finds an integral minimum cost flow in the network corresponding directly to partition fractional scheduling.
	We know that this solution is at least as good as the solution for the general relaxation of the scheduling problem.
	We can treat the flow as an assignment for all the jobs in $J_i$.
	Due to rounding of $n^*$ it may use some virtual machines, but by \cref{lemma:GeometricSumForSigma} we can change them into a single virtual machine with a speed no greater than the speed of the fastest machine.
	Finally, by moving all jobs from the virtual machine and any fastest machine to this fastest machine we increase $\sumcost$ of these jobs by at most $2$ times. 
	This together with rounding of the machine speeds allows to bound the approximation ratio by $4$.
\end{proof}


\subsection{Simple Algorithms for Unit Time Jobs}
In this subsection we sketch outlines of two methods: a $2$-approximation algorithm for the problem $Q|G = \completepartite, p_j=1|\cmaxcost$ and a $4$-approximation algorithm for $Q|G = \completepartite, p_j=1|\sumcost$.
It is more convenient to express the algorithms in terms of covering the parts of $G$ by capacities of the machines.

Let us first sketch briefly the necessary notation.
Let $J = J_1, \ldots, J_k$ be the parts.
Let $M = m_1, \ldots, m_m$ be the machines.
Let $c: M \rightarrow \N_+$ be a function giving \emph{capacities} of the machines.

By a \emph{\cover} we mean any subset of the machines.
For \cover $M' \subseteq M$, a part $J_j$, and $c$ we define the \emph{cover ratio} $CR^c_j(M') = \frac{\min\{|J_j|, \sum_{m_i \in M'} c(m_i)\}}{|J_j|}$.
We say that a \cover $M' \subseteq M$ is an \cover[$\alpha$] for $J_j$ under $c$ if $CR^c_j(M') \ge \alpha$.
A \cover $M' \subseteq M$ is said to be an \exactcover for $J_j$ under $c$ if $CR^c_j(M') = 1$.

By a \emph{\covering} of $J' \subseteq J$ we mean a partial function $F: M \rightharpoonup J'$, i.e. a function which not necessarily assigns all the machines.
We say that a \covering $F$ of $J'$ is an \covering[$\alpha$] for $J'$ under $c$ if $\forall_{J_j \in J'}F^{-1}(J_j)$ is an \cover[$\alpha$] of $J_j$.
Similarly, a \covering $F$ of $J'$ is an \emph{\exactcovering} for $J'$ under $c$ if $\forall_{J_j \in J'}F^{-1}(J_j)$ is an \exactcover of $J_j$.

When defining particular \covers (\coverings) we usually omit the function $c$, it is clearly stated in the context of a given \cover (\covering).
Also, we sometimes do not explicitly specify $J'$ if it is clear from the context.

Now, let us proceed to the algorithms that we propose and the proofs of the quality of solutions produced by the algorithms. 
\begin{algorithm}
	\begin{algorithmic}[1]
		\Procedure{Greedy-Covering}{$J = \{J_1, \ldots, J_k\}, M = \{m_1, \ldots, m_m\}, c$}
		\State Let $F \leftarrow \emptyset$
		\State Let $j \leftarrow 1$
		\For{$i=1, \ldots, m$}
		\State $F \leftarrow F \cup (m_i, J_j)$
		\IfThen{$CR_j^c(F) \ge \frac{1}{2}$}{$j \gets j + 1$}
		\EndFor
		\IfThenElse{$F$ is \cover[$\frac{1}{2}$] for $J$}{\Return $F$;}{\Return \NO}
		\EndProcedure
	\end{algorithmic}
	\caption
	{
		Let $J$ be the parts ordered by their sizes. 
		Let $M$ be the machines ordered by their capacities $c$.
		The following procedure verifies whether there is no \exactcovering for the instance, or else it constructs a \covering[$\frac{1}{2}$].
	}
	\label{algorithm:two-covering-verifier}
\end{algorithm}

\begin{lemma}
	\label{lemma:coveringInHalf}
	For a given $(J, M, c)$ \Cref{algorithm:two-covering-verifier} either verifies that there is no \exactcovering or it constructs a \covering[$\frac{1}{2}$].
	The algorithm works in $\Osymbol(n + m)$ time.
\end{lemma}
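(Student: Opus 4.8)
The statement has three parts: the running time, correctness when the procedure outputs $F$, and correctness when it outputs $\NO$; only the last is substantial.

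\textbf{Running time and the output $F$.} The \textbf{for}-loop makes exactly $m$ passes, and by keeping a running total of the capacities currently charged to the active part $J_j$ every test ``$CR_j^c(F)\ge\frac12$'' (equivalently ``running total $\ge|J_j|/2$'') and every update costs $O(1)$; the concluding test whether $F$ is a \covering[$\frac12$] costs $O(k)=O(n)$. Since $|J_j|\le n$ and a capacity exceeding $n$ is never useful, both orderings required by the procedure can be produced by counting sort in $O(n+m)$ time, so the whole procedure runs in $O(n+m)$. If the procedure returns $F$, it did so only after that last test succeeded, hence every $F^{-1}(J_j)$ is a \cover[$\frac12$] of $J_j$, and $F$ is a partial function $M\rightharpoonup J$ because each machine is inserted into $F$ once, paired with the value of $j$ active at that step; thus $F$ is a \covering[$\frac12$].

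\textbf{The output $\NO$.} I would prove the contrapositive: if an \exactcovering exists, the procedure returns some $F$. Fix pairwise-disjoint machine sets $S^*_1,\dots,S^*_k$ with $c(S^*_i):=\sum_{m\in S^*_i}c(m)\ge|J_i|$. The run of the procedure cuts the machine sequence into consecutive blocks $B_1,B_2,\dots$, where $B_j$ is charged to $J_j$ and closed the moment its charged capacity reaches $|J_j|/2$; by minimality, $c(B_j)-c(\ell_j)<|J_j|/2$ where $\ell_j$ is the last machine of $B_j$. Assume for contradiction that the procedure closes $B_1,\dots,B_{j^*-1}$ but never $\frac12$-covers $J_{j^*}$, so $B_{j^*}$ swallows all leftover machines and still $c(B_{j^*})<|J_{j^*}|/2$. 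The plan is to extract a Hall-type deficiency: because the machines are in nonincreasing order of capacity and the parts in nonincreasing order of size, the $t$ largest machines meet at most $t$ of $S^*_1,\dots,S^*_{j^*}$, so at least $j^*-t$ of these sets lie inside $\{m_{t+1},\dots,m_m\}$, and since the parts they serve have total size at least $|J_{t+1}|+\dots+|J_{j^*}|$, an \exactcovering forces
\[
  c(\{m_{t+1},\dots,m_m\})\;\ge\;|J_{t+1}|+\dots+|J_{j^*}|\qquad\text{for all }t\in\{0,\dots,j^*-1\}.
\]
It therefore suffices to exhibit one index $t^*$ in this range at which this inequality fails, which gives the contradiction and shows the procedure must $\frac12$-cover every part.

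\textbf{Producing $t^*$, and the main obstacle.} A closed \emph{multi-machine} block $B_i$ satisfies $c(B_i)<|J_i|$ (its charged capacity before $\ell_i$ is $<|J_i|/2$ and, by nonincreasing capacities, $c(\ell_i)$ is at most that amount), and $c(B_{j^*})<|J_{j^*}|/2<|J_{j^*}|$; hence if all of $B_1,\dots,B_{j^*-1}$ are single machines, then $t^*=j^*-1$ already works, since $c(\{m_{j^*},\dots,m_m\})=c(B_{j^*})<|J_{j^*}|$. The real difficulty is a block consisting of a single machine of capacity $\ge|J_i|$: such ``wasteful'' blocks defeat a crude total-capacity count. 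Here one must use that such a machine is among the largest and is forced into some group of any \exactcovering anyway, so no capacity is genuinely lost; concretely, I would peel off the leading blocks, bound the capacity of each remaining block against the size of its part inside the current ``scale class'' of parts (those within a factor two of the largest part still present), and recurse, the recursion terminating at the deficient index $t^*=P_{i_0-1}$ for an appropriate block boundary $i_0$. Carrying out this peeling cleanly — equivalently, identifying the right prefix of parts to discard so that the residual capacity falls short of the residual part sizes — is the technical heart of the proof.
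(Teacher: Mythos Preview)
Your running-time analysis and the easy direction (if the procedure returns $F$, it is a \covering[$\frac12$]) are fine. The gap is in the $\NO$ direction: the Hall-type inequality you derive is a valid necessary condition for an \exactcovering, but it is too weak to serve as the certificate. The algorithm can output $\NO$ while your inequality holds for every $t\in\{0,\dots,j^*-1\}$. Take parts of sizes $|J_1|=100$, $|J_2|=|J_3|=10$ and machines of capacities $40,40,40,1$. The greedy run produces $B_1=\{m_1,m_2\}$, $B_2=\{m_3\}$, $B_3=\{m_4\}$ and fails at $j^*=3$; no \exactcovering exists (covering $J_2$ and $J_3$ consumes two of the $40$'s, leaving at most $41$ for $J_1$). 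Yet your inequality reads $121\ge 120$ at $t=0$, $81\ge 20$ at $t=1$, $41\ge 10$ at $t=2$ --- it never fails, so the promised $t^*$ does not exist. Since your ``peeling'' paragraph is phrased as a method to locate such a $t^*$, it cannot succeed on this instance; the one-parameter Hall family simply does not capture the obstruction.

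The paper's proof takes a different route. It maintains the invariant that after the greedy covers $J_1,\dots,J_j$, the remaining capacity is at least $\sum_{j'>j}|J_{j'}|$, first in the \emph{tight} case $\sum_i c(m_i)=\sum_j|J_j|$. The key observation for a wasteful single-machine block $B_j=\{m_i\}$ with $c(m_i)>|J_j|$ is that, under tightness, no \exactcovering can place any of $m_1,\dots,m_i$ on any of $J_j,\dots,J_k$ (doing so would overfill some part, impossible when totals match exactly), so $\fopt^{-1}(\{J_j,\dots,J_k\})\subseteq\{m_{i+1},\dots,m_m\}$ and the invariant survives intact. The general case is then reduced to the tight one by padding each part's size up to the capacity that $\fopt$ assigns it. If you want to salvage a deficiency-style certificate, you would need a richer family of necessary inequalities than the single-index one you wrote down.
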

\begin{proof} 
	First, let us assume that for the given $(J, M, c)$ there exists an \exactcovering $\fopt$.
	We establish the following invariant under this assumption: during the execution of \cref{algorithm:two-covering-verifier} after constructing a \covering[$\frac{1}{2}$] $F$ for $J_1 \ldots, J_j$ the sum of capacities of $M \setminus F^{-1}(\{J_1, \ldots, J_j\})$ at least $\sum_{j' = j + 1}^k |J_{j'}|$.
	
	First, let us prove a particular case: $\sum_{i \in [m]} c(m_i) = n$, i.e. that the instance is \emph{tight}.
	Before the assignment of the first machine the invariant obviously holds.
	Now suppose that the invariant holds for all values $1, \ldots, j - 1$ and let us use a few of the remaining unassigned machines to construct a \cover for $J_j$.
	\begin{itemize}
		\item 
		Assume that for the first machine $m_i$ assigned to $J_j$ in $F$ the following inequality holds: $c(m_i) \le |J_j|$.
		Then the invariant is preserved after the assignment, since by assigning the sequence of machines $m_i, \ldots, m_{i''}$ to $J_j$ the sum of the remaining capacities was decreased by $\sum_{i' = i}^{i''} c(m_{i'}) \le |J_j|$ and $\sum_{j' = j}^k |J_{j'}|$ was decreased by $|J_j|$.
		\item 
		Assume the opposite, that $c(m_i) > |J_j|$ holds for the first $m_i$ used for $J_j$. 
		Note that $J_j$ is covered by exactly one machine in $F$.
		Observe, the assignment of any machine $m_{i' \le i}$ to any part $J_{j' \ge j}$ in $\fopt$ would force that $\fopt^{-1}(J_j')$ would have total capacity greater than $|J_{j'}|$, which is impossible, because the instance is tight.
		This means that $\fopt^{-1}(\{J_j, \ldots, J_k\})$ consist only of machines of smaller capacity, which are in $M \setminus F^{-1}(\{J_j, \ldots, J_k\})$.
		Hence, the invariant again holds.
	\end{itemize}
	
	Finally, consider the general case $\sum_{i \in [m]} c(m_i) \ge \sum_{j \in [k]} |J_j|$.
	In such case let us consider an arbitrary \exactcovering $\fopt$.
	If there are some machines unassigned, then let us assign them arbitrary. 
	Now, $\fopt$ would be also an \exactcovering if for any $j$ the number of jobs in $J_j$ was exactly $\sum_{i: F(m_i) = J_j} c(m_i)$.
	Let us modify the instance, that is let us assume that it consists of parts $J' = \{J_1' ,\ldots, J_k'\}$ where $J_j'$ is of cardinality $\sum_{i: F(m_i) = J_j} c(m_i)$.
	Let us mark the modified instance, perhaps after resorting the parts, as $(M, (J_1', \ldots, J_k'), c)$. 
	By the observation for the tight case, the algorithm would succeed in constructing a \covering[$\frac{1}{2}$] $F'$ for $J'$. 
	Note that for any $j \in [k]$ we have $|J_j'| \ge |J_j|$, due to the fact that no part size is lower than previously. 
	Observe, by the fact that algorithm has to succeed in constructing  $F'$ it has to succeed in constructing a \covering[$\frac{1}{2}$] $F$ for $(J_1, \ldots, J_k)$.
	This is by an observation (a formal proof of which we omit) that for any $j \in [k]$ $M \setminus F'^{-1}(\{J_1', \ldots, J_j'\}) \subseteq M \setminus F^{-1}(\{J_1, \ldots, J_j\})$.
	And vice versa, if the algorithm returns \NO, there is no \exactcovering. 
\end{proof}
\begin{example}
	In order to illustrate the reasoning for the general case confer: 
	\begin{itemize}
		\item A set of machines $M = \{m_1, m_2, m_3, m_4, m_5, m_6\}$.
		\item The function $c$ with values equal to $7, 4, 4, 3, 3, 2$, for $m_1, m_2, m_3, m_4, m_5, m_6$, respectively.
		\item Two sets of parts $J' = (J_1', J_2', J_3')$, where the parts have sizes $10, 8, 5$, respectively; and $J = (J_1, J_2, J_3)$, where the parts have sizes $7, 6, 4$, respectively.
	\end{itemize}
	The \covering produced for $J'$ is $m_1; m_2, m_3; m_4$ and the \covering produced for $J$ is $m_1; m_2; m_3$.
\end{example}


\begin{theorem}
	\label{thm:Q-cmax-2apx}
	There exists $2$-approximation algorithm for $Q|G = \completepartite, p_j=1|\cmaxcost$ running in $\Osymbol(m \log m + n \log{n})$ time.
\end{theorem}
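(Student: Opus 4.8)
The plan is to reduce the $\cmaxcost$ problem on unit jobs to the covering framework just established, and then invoke \cref{algorithm:two-covering-verifier} together with \cref{lemma:coveringInHalf} in a binary-search wrapper. First I would observe that with $p_j = 1$ on uniform machines, a machine $m_i$ of speed $s(m_i)$ can complete exactly $\floor{T \cdot s(m_i)}$ jobs within time $T$; so fixing a target makespan $T$ turns each machine into a ``bucket'' of capacity $c_T(m_i) := \floor{T \cdot s(m_i)}$, and asking whether a schedule of makespan at most $T$ exists is exactly asking whether there is an \exactcovering of $(J_1,\ldots,J_k)$ under the capacity function $c_T$ (in the sense defined above: each part $J_j$ must receive a set of machines whose total capacity is at least $|J_j|$, and distinct parts receive disjoint machine sets because the incompatibility graph is complete multipartite). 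Note that only makespan values of the form $l / s(m_i)$ for integers $l$ and machines $m_i$ can be ``critical'', so there are $\Osymbol(mn)$ candidate values to try, which binary search narrows to $\Osymbol(\log(mn))$ invocations.

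Next I would run \cref{algorithm:two-covering-verifier} on each candidate value $T$. By \cref{lemma:coveringInHalf}, for a given $T$ the algorithm either certifies that no \exactcovering for $c_T$ exists --- in which case no schedule of makespan $\le T$ exists, so $\optcmaxcost > T$ --- or it returns a \covering[$\frac12$] $F$ under $c_T$. In the latter case, for each part $J_j$ the set $F^{-1}(J_j)$ has total $c_T$-capacity at least $\ceil{|J_j|/2}$, so at least $\ceil{|J_j|/2}$ jobs of $J_j$ fit within time $T$ on those machines; the remaining at most $\floor{|J_j|/2}$ jobs can be packed on the same machines within an additional time $T$ (placing them greedily, each machine takes at most as many extra jobs as it took in the first round, so no machine exceeds load $2\,c_T(m_i)$, i.e.\ finishing time $\le 2T$). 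Hence a \covering[$\frac12$] under $c_T$ yields a feasible schedule of makespan at most $2T$. Let $T^\star$ be the smallest candidate value on which the algorithm succeeds; then by the dichotomy the largest candidate strictly below $T^\star$ is a lower bound on $\optcmaxcost$, and since candidate values of makespan are dense enough around $\optcmaxcost$ (the optimum is itself of the form $l^*/s(m_i)$), we get $T^\star \le \optcmaxcost$, whence the produced schedule has makespan at most $2T^\star \le 2\,\optcmaxcost$.

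For the running time: sorting the parts by size and the machines by speed costs $\Osymbol(n\log n + m\log m)$ and is done once. Computing $c_T$ for a given $T$ and running \cref{algorithm:two-covering-verifier} costs $\Osymbol(n+m)$ by \cref{lemma:coveringInHalf}; with $\Osymbol(\log(mn))$ binary-search steps this contributes $\Osymbol((n+m)\log(mn))$, which is dominated by $\Osymbol(m\log m + n\log n)$ since $\log(mn) = \Osymbol(\log n + \log m)$, giving the claimed bound.

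The main obstacle I anticipate is not the algorithmic core but making the rounding of the second batch of $\floor{|J_j|/2}$ jobs fully rigorous: one must verify that distributing the leftover jobs of $J_j$ back onto the machines of $F^{-1}(J_j)$ never forces any machine past finishing time $2T$, using that the first-round assignment already respected capacities $c_T(m_i) = \floor{T s(m_i)}$ and that $2\floor{Ts(m_i)} \le \floor{2T s(m_i)}$ is \emph{not} always what we need --- rather we directly bound the load by $2\,c_T(m_i) \le 2T s(m_i)$, so the finishing time is at most $2T$. A second, minor subtlety is confirming that the set of candidate makespan values really does ``sandwich'' $\optcmaxcost$ tightly enough that the smallest successful candidate is at most $\optcmaxcost$; this follows because $\optcmaxcost$ itself lies in the candidate set.
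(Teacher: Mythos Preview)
Your proposal is correct and follows essentially the same approach as the paper: binary search over the $\Osymbol(mn)$ candidate makespans $T$, convert each $T$ to capacities $c_T(m_i)=\floor{T\,s(m_i)}$, invoke \cref{algorithm:two-covering-verifier} via \cref{lemma:coveringInHalf}, and turn a returned \covering[$\tfrac12$] into a schedule by loading each machine with at most $2c_T(m_i)$ jobs (hence finishing by $2T$). The only cosmetic difference is that the paper states the loading step directly as ``assign up to $2c(m_i)$ jobs'' rather than phrasing it as two rounds, and it first prunes to $m\le n$ before the binary search; your running-time accounting reaches the same $\Osymbol(m\log m+n\log n)$ bound.
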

\begin{proof}
	Assume that we suspect that there exits a schedule for an instance $(J ,M)$ of $Q|G = \completepartite, p_j=1|\cmaxcost$ within time $T$.
	Let us calculate the capacities of the machines, i.e. $c(m_i) = \floor{s(m_i) \cdot T}$.
	If the assumption is true, then there exists an \exactcovering.
	Now we apply \cref{algorithm:two-covering-verifier} to $(J, M, c)$ and we may get two results:
	\begin{enumerate}
		\item the algorithm returned $F$ -- a \covering[$\frac{1}{2}$] of $J$, 
		\item or the algorithm returned \NO, which guarantees that there is no \exactcovering of $F$ in the time $T$.
	\end{enumerate}
	By \Cref{lemma:coveringInHalf} the second case cannot occur if there exists a schedule in the time $T$.
	
	Now, let us take \covering[$\frac{1}{2}$] $F$ for this $T$.
	We translate it to a schedule as follows: for $i = 1, \ldots, m$ if $F(m_i) = J_j$, we schedule up to $2 c(m_i)$ jobs from $J_j$ on $m_i$.
	In total, the space on the machines assigned to $J_j$ in the time $2T$ ensures that all jobs are scheduled.
	Moreover, each machine gets jobs only from a single part.
	Finally, it is clear that the makespan of this schedule is at most $2T$.
	
	Let $\cmaxcost^*$ be $\cmaxcost$ of an optimum schedule.
	Observe, that $\cmaxcost^*$ is determined by a number of jobs $n' \in [n]$ assigned to some machine $m_i$.
	This means that we have only $\Osymbol(mn)$ candidates for $\cmaxcost^*$.
	By checking the candidates we can find the smallest $T$ for which there exists a \covering[$\frac{1}{2}$] $F$.		
	Using $F$ it is easy to construct a schedule with $\cmaxcost$ equal to at most $2T \le 2 \cmaxcost^*$.
	
	Initially we have to sort the machines and parts, which can be done in $\Osymbol(m \log m)$ and $\Osymbol(n \log n)$ time, respectively.
	From this point that we can assume that $m \le n$; in the other case we can always discard all but $n$ fastest machines without affecting the optimal solution.
	By this we have $\Osymbol(\log{n})$ iterations of binary search over candidates for $T$.
	Each application of \cref{algorithm:two-covering-verifier} requires $\Osymbol(n)$ time, also by $m \le n$.
	Clearly, the sketched $2$-approximation algorithm requires $\Osymbol(m \log m + n \log n)$ time.
\end{proof}

\begin{theorem}
	\label{thm:Q-sumcost-4apx}
	There exists a $4$-approximation algorithm for $Q|G = \completepartite, p_j =1|\sumcost$ running in $\Osymbol(m^2n^3\log m)$ time.
\end{theorem}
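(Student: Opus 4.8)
The plan is to reuse the single-deadline covering machinery behind \cref{thm:Q-cmax-2apx} inside a geometric ``doubling'' decomposition of $\sumcost$, paying one factor of $2$ for the discretisation and a second for turning a $\tfrac12$-cover into a genuine schedule, for an overall factor $4$. The first step is to observe that everything reduces to choosing a good assignment of machines to parts: with $p_j=1$, a machine $m_i$ of speed $s(m_i)$ carrying $\ell$ jobs contributes $\sum_{t=1}^{\ell} t/s(m_i)=\binom{\ell+1}{2}/s(m_i)$ to $\sumcost$, so the marginal cost of its $\ell$-th job is $\ell/s(m_i)$. Hence once an assignment $F\colon M\rightharpoonup\{J_1,\dots,J_k\}$ of machines to parts is fixed, the optimal completion of the instance just solves, separately inside every part $J_j$, an instance of $Q|p_j=1|\sumcost$ on the machines $F^{-1}(J_j)$, and this is done exactly in polynomial time by repeatedly handing the next unit job to the machine with smallest current marginal cost $\ell/s(m_i)$ (the classical greedy; one could also invoke the $\Osymbol(n\log n)$ algorithm for $Q||\sumcost$). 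So, exactly as for $\cmaxcost$, the whole difficulty lies in selecting a near-optimal $F$.

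To bound $\sumcost$ I would discretise time into geometric thresholds $T_q=2^q\cdot T_0$ (only polynomially many matter, since every completion time equals $k/s(m_i)$ for some $k\le n$), and for a schedule $S$ let $u_q(S)$ be the number of jobs not completed by time $T_q$. Since $u_q(S)$ is non-increasing in $q$, a routine interval estimate places $\sumcost(S)$ and $\sum_q T_q\,u_q(S)$ within a constant factor of one another (with constants between $\tfrac12$ and $1$, after handling the first few and last few levels separately). Thus it suffices to produce a schedule whose quantity $\sum_q T_q u_q$ is within a factor $2$ of that of an optimal schedule $\sopt$.

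For the construction, at threshold $T_q$ I would assign machine $m_i$ the capacity $c_q(m_i)=\floor{s(m_i)\,T_q}$ — the most unit jobs it can finish by time $T_q$ — and then process the levels $q=0,1,2,\dots$ in turn. At each level I would run the greedy $\tfrac12$-covering of \cref{algorithm:two-covering-verifier}, committing to each part just enough machines that, for every part simultaneously, the committed machines hold at least half of the total capacity that $\sopt$'s machines for that part hold at level $q$; since that target is not known in advance, it is folded into a binary search over the $\Osymbol(m)$ candidate values of the governing parameter, which is where the $\log m$ factor comes from. The one rule not present in the makespan construction is that a machine is never un-committed: a machine assigned to $J_j$ at level $q$ stays with $J_j$ at every later level. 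Doubling the capacities exactly as in \cref{thm:Q-cmax-2apx} then lets the machines of $J_j$ finish, by time $2T_q$, at least as many jobs of $J_j$ as $\sopt$ finishes by time $T_q$; feeding the resulting $F$ into the exact per-part allocation of the first step and summing the geometric estimate over all levels gives $\sumcost\le 4\,\sumcost(\sopt)$. Accounting for the $\Osymbol(mn)$ relevant levels against the per-level cost of the covering together with the allocation over all parts yields the claimed $\Osymbol(m^2 n^3\log m)$ running time.

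The hard part will be the structural claim that a \emph{single} assignment $F$, built greedily level by level, is at once a good $\tfrac12$-cover of every part at every geometric level: committing a fast machine to one part at a small-$T_q$ level deprives all other parts of that machine forever, so the levels genuinely interact. What I expect to need is a multi-level version of the invariant from the proof of \cref{lemma:coveringInHalf} — that the greedily built $F$ keeps, for every $q$ and every $j$, enough capacity on the machines eventually committed to $J_j$ — most plausibly obtained by replaying that lemma's ``tight instance'' reduction at each level and by choosing the order in which not-yet-saturated parts are served at a given level so that early commitments never consume more capacity than $\sopt$ already spends. Once this invariant is in place, the doubling and the $\tfrac12$-to-full cover translation are just the moves already used in \cref{thm:Q-cmax-2apx}, and the factor $4$ follows.
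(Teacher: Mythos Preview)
Your proposal has a genuine gap at exactly the place you flag as ``the hard part'': the claim that a single assignment $F$, built incrementally across geometric levels, is simultaneously a $\tfrac12$-cover of every part at every level. You acknowledge that early commitments at small thresholds permanently remove fast machines from contention for other parts, and you do not prove the multi-level invariant you say you would need --- you only say it is ``most plausibly obtained'' by replaying the tight-instance argument. But \cref{lemma:coveringInHalf} is a statement about one capacity function $c$, and its proof uses crucially that parts are processed in a single order tied to that $c$; when the level changes, the target for each part (how many of its jobs $\sopt$ has completed by $T_q$) shifts in a way that depends on $\sopt$'s own machine assignment, and there is no evident single ordering or single greedy rule that makes the invariant go through at all levels at once. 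Your binary search over ``the governing parameter'' is also underspecified: there is no single scalar playing the role that $T$ plays in \cref{thm:Q-cmax-2apx}, and you have not said what is being searched. As written, the argument is a plan with its central lemma missing, not a proof.

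The paper sidesteps the multi-level issue entirely with a much simpler idea. Suppose one knew the optimal loads $c_1\ge\dots\ge c_m$ (w.l.o.g.\ nonincreasing with speed). Treating these as capacities and running \cref{algorithm:two-covering-verifier} produces an \emph{ordered} $\tfrac12$-covering --- $J_1$ gets a prefix of the fastest machines, $J_2$ the next contiguous block, and so on --- and loading each $m_i$ with at most $2c_i$ jobs contributes at most $\binom{2c_i+1}{2}/s(m_i)\le 4\binom{c_i+1}{2}/s(m_i)$, hence at most $4\,\optsumcost$ in total. Since the $c_i$ are unknown, the paper simply does dynamic programming over \emph{all} ordered coverings: state $(j,i)$ records the minimum $\sumcost$ achievable when the first $i$ machines (by speed) are partitioned among the first $j$ parts (by size), each part scheduled optimally on its block. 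This DP necessarily finds an ordered covering at least as good as the one the unknown optimal $c_i$ would have yielded, so it is $4$-approximate. No geometric thresholds, no multi-level invariant; the only structural fact used beyond \cref{lemma:coveringInHalf} is the immediate observation that the greedy output is contiguous in machine order.
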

\begin{proof}
	Assume that the parts and machines are sorted in order of their nonincreasing sizes and speeds, respectively.
	Suppose that we knew in advance the numbers of jobs $c_1, \ldots, c_m$ assigned to a machines $m_1, \ldots, m_m$ in some optimal schedule.
	Without loss of generality, we could assume that if $s_1 \ge \ldots \ge s_m$, then so $c_1 \ge \ldots \ge c_m$.
	Observe, that the values $c_i$ can be also interpreted to form capacities of the machines.
	In this case we could apply \cref{algorithm:two-covering-verifier} to $(J, M, c)$ and obtain $F$ -- a \covering[$\frac{1}{2}$] of $J$.
	We could translate $F$ to a schedule as follows: if $F(m_i) = J_j$, then assign up to $2c_i$ jobs from $J_j$ to $m_i$.
	In total, the capacity of the machines assigned to $J_j$ is at least $\frac{1}{2}|J_j|$ so every job would be scheduled.
	Now observe that $m_i$ in the optimal schedule contributes exactly $\binom{c_i + 1}{2} \frac{1}{s(m_i)}$ to $\sumcost$, but in constructed schedule it would contribute at most $\binom{2 c_i + 1}{2} \frac{1}{s(m_i)} \le 4 \binom{c_i + 1}{2} \frac{1}{s(m_i)}$.
	So this would be schedule with $\sumcost$ at most $4$ times the optimum.
	Unfortunately, by \cref{theorem:nphcomplete}, it is NP-hard to obtain such information.
	
	However, observe that in $F$ the assignment of the machines to the parts would be ordered.
	That is, $J_1$ gets $n_1$ machines of biggest capacity, $J_2$ gets next $n_2$ machines of biggest capacity, etc.
	This is by the observation that in \cref{algorithm:two-covering-verifier} the machines are considered in a fixed order, determined by the order of capacities, which w.l.o.g. is determined by the order of speeds.
	Let us call by \emph{\orderedcovering} any such \covering.
	
	This observation allows us to construct a \covering that corresponds to a schedule with $\sumcost$ at most $4$ times the optimum without knowledge of $c$.
	We proceed by using dynamic programming over all \orderedcoverings.
	Precisely, let the states of this program be defined by $(j, i, cost, F)$.
	Where $F$ is ordered \covering in which the first $i$ machines are assigned to the first $j$ parts, and whose associated schedule has lowest $\sumcost$ among all schedules corresponding to an \orderedcovering of first $i$ machines to the first $j$ parts.
	Notice that $(1, 1, cost_1, F_1), \ldots, (1, m-(k-1), cost_{m-(k-1)}, F_{m-(k-1)})$ are well defined.
	Precisely, $cost_i$ is the total completion time of the jobs from $J_1$ scheduled on $i$ fastest machines, $F_i = \bigcup_{i' \in [i]} \{(m_{i'}, J_1)\}$.
	For $k' \ge 2$, $m-(k-k') \ge m' \ge k'$ and $m'-1 \ge m'' \ge k'-1$, construct $(k', m', cost, F')$ as the best with respect to $\sumcost$ \orderedcovering corresponding to $(k'-1, m'', cost'', F'')$ and the assignment of $m_{m''+1}, \ldots, m_{m'}$ to $J_{k'}$.
	Every such an assignment is feasible.
	
	Moreover, for any $1 \le  k' \le k$ and $k' \le m' \le m-(k'-k)$ holds that there is no \orderedcovering of $\{J_1, \ldots, J_{k'}\}$ using $\{m_1, \ldots m_{m'}\}$ with smaller $\sumcost$ of associated schedule than $cost$ following from $(k', m', cost, F)$.
	For $k' = 1$ and any $m'$ it obviously holds.
	Consider a counter-example with the minimum number of the parts and the minimum number of the machines.
	In this case, let there be an \orderedcovering $\fopt$ associated with some schedule of minimum $\sumcost$ defined by the numbers of the machines assigned to $J_1, \ldots, J_k$, and let these numbers be $n_1, \ldots, n_k$, respectively.
	Consider an \orderedcovering $F_{alg}$ determined by $(k-1, n_1 + \ldots + n_{k-1}, cost, F')$ and by the assignment of $M_k = \bigcup_{i = n_1 + \ldots + n_{k-1} + 1}^{n_1 + \ldots + n_{k}} m_i$ to $J_k$.
	Notice that the contributions to $\sumcost$ of the $J_k$ scheduled on $M_k$ are equal in the schedule associated with $\fopt$ and in the schedule associated with $F_{alg}$.
	This means that there is a minimum counter-example on $k-1$ parts and $n_1 + \ldots + n_{k-1}$ machines.
	
	The sorting of parts and machines can be done in $\Osymbol(n)$ and $\Osymbol(m \log m)$ time, respectively.
	After this operation we can assume that $m \le n$.
	If $|M| < |J|$, then no schedule can exist.
	At each step of our dynamic program there are at most $mn$ states since for every $(i, j)$ we store only the smallest $c$.
	There are up to $m$ possible new \coverings generated from each state, each requiring $\Osymbol(n \log m )$ time to generate.
	Therefore each step requires $\Osymbol(m^2n^2\log m)$ operations and the total running time of the algorithm is $\Osymbol(km^2n^2\log m) = \Osymbol(m^2n^3\log n)$.
\end{proof}

\subsection{A PTAS for $Q|G = \completepartite, p_j=1|\cmaxcost$}

Now let us return to $Q|G = \completepartite, p_j=1|\cmaxcost$ problem.
We can significantly improve on \Cref{thm:Q-cmax-2apx} and construct a PTAS inspired by the ideas of the PTAS for \MachineCovering \citep{azar1998approximation}.

\begin{algorithm}
	\begin{algorithmic}[1]
		\Procedure{PTAS}{$J$, $M$, $T$,  $\epsilon$}
		\State $\epsilon \leftarrow \min\{\frac{1}{2}, \epsilon\}$
		\State Calculate rounded capacities $c^*(m)$ for all $m \in M$
		\State $l_{min} \gets \left\lceil{3 \log_{1 + \epsilon}\frac{1}{\epsilon}}\right\rceil$ + 1
		\State Split $J$ into ranges $\{P_l\}_{l = 0}^{l_{max}}$
		\State Find $SV^{l_{\min}+1}$ for $(J, M, c^*)$ (see \Cref{algorithm:prePTASCmax} and \Cref{lemma:preHeartOfPTAS})
		\For{$l = l_{min} + 1, \ldots, l_{max}$}
		\For {$sv \in SV^{l}$}
		\State Generate $CSV(sv)$ (see \Cref{algorithm:interPTASCmax,algorithm:PTASCmax}, \Cref{theorem:interPTASproperties,lemma:HeartOfPTAS-good})
		\EndFor
		\State Find $SV^{l+1}$ as a subset of $\bigcup_{sv \in SV^{l}} CSV(sv)$ (see \Cref{lem:cutting-sv})
		\EndFor
		\IfThen{$SV^{l_{max}+1} = \emptyset$}{\Return \NO}
		\State Pick any $sv \in SV^{l_{max}+1}$ with its respective \epsiloncovering $F$
		\State \Return a schedule $S$ constructed from $F$
		\EndProcedure
	\end{algorithmic}
	\caption{
		The main part of the PTAS for $Q|G = \completepartite, p_j=1|\cmaxcost$.
		The following algorithm either verifies that there is no schedule of length at most $T$ or it constructs a schedule with $\cmaxcost$ close to $T$.
	}
	\label{algorithm:ptas-highlevel}
\end{algorithm}

A high-level overview of the algorithm is presented as \Cref{algorithm:ptas-highlevel}.
As previously, we state the algorithm in terms of constructing an approximate \covering.
As in the case of the application of the $2$-approximation algorithm for $Q|G = \completepartite, p_j = 1|\cmaxcost$, we state the algorithm in the framework of \citet{hochbaum1988}.	
That is, we assume that a guess of a value $T$ is given such that there exists a schedule with $\cmaxcost \le T$.
Naturally, such a schedule corresponds to an \exactcovering of $J$ by $M$ under capacities given by $c(m_i) = \floor{s(m_i) \cdot T}$.
The method that we propose either verifies that the guess is incorrect, i.e. that there is no \exactcovering in capacities determined by the time $T$, or it constructs a \covering with capacities determined by the time $T$ that can be transformed into a schedule with $\cmaxcost$ near $T$.
By applying the method over a set of candidate makespans we find the smallest $T$ such that there exists a schedule with makespan close to $T$.

For any fixed guess of $T$ we can distinguish the following basic steps of the algorithm:
\begin{enumerate}
	\item Applying some preprocessing, in particular to divide parts into ranges and to calculate rounded capacities of the machines.
	\item Finding a set of vectors $SV^{l_{min}+1}$ such that at least one vector in the set describes machines that are not assigned to small parts in some \exactcovering and each vector in the set describes an \exactcovering of small parts.
	\item Applying iteratively a procedure consisting of two steps:
	\begin{enumerate}
		\item The first step is to find for each $sv^l \in SV^{l}$ a set of candidate state vectors $CSV(sv^l)$ such that it contains at least one good state vector (a term defined later) if $sv^l$ is a good state vector.
		\item The second step is to calculate $SV^{l+1}$ as a subset of $\bigcup_{sv \in SV^{l}} CSV(sv)$ such that the constructed set contains at least one good state vector.
	\end{enumerate}
	\item Constructing a schedule with $\cmaxcost \le T(1 + 7 \epsilon)$ using a nice \epsiloncovering, corresponding to a vector in $SV^{l_{max}+1}$ -- the set $SV^{l_{max}+1}$ is nonempty provided that there exists a schedule with $\cmaxcost \le T$.
\end{enumerate}

\subsubsection{Basic definitions}
In order to prove the result formally, we state a suitable notation and a few notions tailored to our problem.
As previously, for any fixed $T$ let us define the \emph{capacity} of $m \in M$ by $c(m) = \floor{s(m) \cdot T}$.
Let us also define \emph{rounded capacity} of $m$ by $c^*(m)$, equal to $c(m)$ rounded up to the nearest value of the form $\floor{(1 + \epsilon)^i}$.
Clearly, $c(m) \le c^*(m) \le (1 + \epsilon) c(m)$ so for convenience from now on we will refer to rounded capacities exclusively, and the \covers are constructed with respect to $c^*$.

Now, we group parts into sets (also called \emph{ranges}) $P_l = \{J_k\colon |J_k| \in [\floor{(1 + \epsilon)^{l}}, \floor{(1 + \epsilon)^{l + 1}})\}$ and we will consider these ranges in order of increasing $l$. 
For convenience, let $l_{max}$ be the largest value such that its range is nonempty.

Next, given $c^*$ and $l$ we divide the machines into several types:
\begin{enumerate}
	\item \emph{tiny} -- with $c^*(m_i) < \epsilon^{-2}$,
	\item \emph{small} -- with $\epsilon^{-2} \le c^*(m_i) < \epsilon (1 + \epsilon)^{l}$,
	\item \emph{average} -- with $\max\{\epsilon (1 + \epsilon)^{l}, \epsilon^{-2}\} \le c^*(m_i) < \floor{(1 + \epsilon)^{l + 1}}$,
	\item \emph{large} -- with $\max\{\floor{(1 + \epsilon)^{l + 1}}, \epsilon^{-2}\} \le c^*(m_i)$.
\end{enumerate}
The division is unambiguous only with respect to the given $l$.
For clarity of the notation we sometimes write that $m$ is \smallmachine[$l$] (\averagemachine[$l$]) /\largemachine[$l$]/ to denote that $m$ is small (average) /large/ with respect to $l$ under $c^*$.
Sometimes we do not use the $l$ explicitly when stating that some machine is small, average, etc., but it is always given implicitly. 

We use the notation of \covers and \coverings used in the description of \cref{algorithm:two-covering-verifier}.
However, we have to add a few other types of \covers and a few other types of \coverings.
For a part $J_k$ a set $M'\subseteq M$ is a \emph{\tinycover} if it is an \exactcover and $M'$ consists of tiny machines alone.
Also, we say that $M' \subseteq M$ is \emph{\slackexactcover} of a part $J_j$ in $P_l$ when it is \exactcover of $J_j$, $M'$ consists of \smallmachine[$l$] machines or tiny machines, and there is at least one \smallmachine[$l$] machine in $M'$.
Also for a \cover $M' \subseteq M$ of $J_j \in P_l$ where $M'$ consists of at least one \smallmachine[$l$] or \averagemachine[$l$] machine by \emph{slack capacity} we mean the total capacity of all \smallmachine[$l$] and \tinymachine machines in $M'$.

We define that two \covers $M'$ and $M''$ are \emph{equivalent} under capacity function $c^*$ if there is a bijection $f: M' \rightarrow M''$ such that for any $\forall_{m \in M'}c^*(m) = c^*(f(m))$.
Hence, for a set $M'$ of machines of equal capacity under $c^*$ there is $|M'|+1$ nonequivalent subsets of $M'$.

Due to the rounding we have only at most $\davg = \floor{\log_{1 + \epsilon}(\frac{1}{\epsilon})} + 1$ distinct capacities for average machines, regardless of $l$.
Their capacities are equal to $\floor{(1+ \epsilon)^{l - \davg + 1}}, \ldots, \floor{(1+\epsilon)^{l}}$ -- since is easy to check that $\floor{(1+ \epsilon)^{l - \davg}} < \epsilon (1 + \epsilon)^{l}$.
Similarly we have only at most $\dtiny = \ceil{\log_{1+\epsilon}\frac{1}{\epsilon^2}}$ distinct capacities of tiny machines (the maximum number is of the form $(1+\epsilon)^{\ceil{2\log_{1+\epsilon}\epsilon^{-1}}-1}$, the numbers are counted from $0$).
We write ``at most'' due to the fact that when $\epsilon$ is small, then a few values $(1+\epsilon)^i$ for small $i$ may be rounded to the same integer, hence there is no reason to duplicate entries.
To avoid unnecessary details we assume that there are always $\davg$ distinct capacities of average machines and $\dtiny$ distinct capacities of tiny machines.

\subsubsection{State vectors}
The crucial concept for our algorithm and its proof is the \emph{state vector} for the $l$-th range with the fields:
\[
(M_{exact}; M_{slack}, n_{small}; M_{average}; M_{large}; F),
\]
The meanings of the fields are as follows:
\begin{itemize}
	\item $M_{exact}$ -- a set of unassigned \tinymachine machines designed to form \exactcovers for some parts;
	\item $M_{slack}$ -- a set, disjoint with $M_{exact}$, of unassigned \tinymachine and \smallmachine[$l$] machines;
	\item $n_{small}$ -- the number of \smallmachine[$l$] machines in $M_{slack}$;
	\item $M_{average}$ -- a set of unassigned \averagemachine[$l$] machines;
	\item $M_{large}$ -- a set of unassigned \largemachine[$l$] machines;
	\item $F$ -- a \epsiloncovering for $P_0 \cup \ldots \cup P_{l-1}$.
\end{itemize}
First at all, the set $M_{exact}$ is necessary in our construction to guarantee that even parts of high cardinality can be covered exactly by \tinymachine machines without excessive spending of machine capacity.
Otherwise the intuition is clear; we would like to track the unassigned machines with a suitable precision: high enough that for the next ranges we can find a nice \epsiloncovering (a term defined later), but spending a polynomial amount of time.
Also, keep in mind that we would like to track the unassigned machines with respect to equivalence relation defined.
In particular, there can be $\Osymbol(m^{\dtiny})$ not equivalent sets of $M_{exact}$, $\Osymbol(m^{\davg})$ nonequivalent sets of $M_{average}$ and it is enough to consider $\Osymbol(m)$ nonequivalent sets of $M_{large}$.
The last observation is justified by the observations in the next subsection.

For clarity, for a state vector $sv^l$ we use expressions like $\sv^l.M_{exact}$ or $\sv^l.n_{small}$ to refer to the values of its fields.
Also, we denote the set of vectors as $SV^l$ to emphasize that it consists of state vectors for $l$-th range. 

\subsubsection{Good vectors and $\epsilon$-approximate coverings}
Let us consider parts in a non-decreasing order of their sizes.
Assume that there exists an \exactcovering $F$ of $J$.
Then, if $F^{-1}(J_j)$ for any $J_j \in P_l$ contains a large machine, then we may assume that $F^{-1}(J_j)$ consists of exactly one large machine $m$ -- simply we can drop other machines.
Also, we may assume that $m$ is the smallest large machine assigned to $J_j$ or a later part.
If this is not the case, then we can do as follows. 
\begin{itemize}
	\item As long as there is a \largemachine[$l$] machine $m'$ of smaller capacity that is unassigned, then exchange $m$ with $m'$.
	\item	As long as there is a \largemachine[$l$] machine $m'$ of smaller capacity used for a later part then exchange $m$ with $m'$.
\end{itemize}
Let us fix any such \exactcovering and to differentiate it from other \coverings let us denote it as the \emph{\optimalcovering}, or symbolically as $\fopt$.
Now, with respect to the \optimalcovering we can define two additional types of the machines.
For convenience w define $m$ to be \emph{\tinyexactmachine} if in the \optimalcovering $\fopt$ $m$ is assigned and the set $\fopt^{-1}(\fopt(m))$ is a \tinycover; otherwise we call $m$ \emph{\tinynonexactmachine} machine.

We use the \optimalcovering to form conditions for desirable state vectors at each step of the algorithm. 
We say that a state vector $sv^l$ is \emph{good} if:
\begin{itemize}
	\item $\sv^l.M_{exact}$ is equivalent to the \tinyexactmachine machines in $M \setminus \fopt^{-1}(P_0 \cup \ldots \cup P_{l-1})$,
	\item $\sv^l.M_{large}$ is equivalent to the \largemachine[$l$] machines in $M \setminus \fopt^{-1}(P_0 \cup \ldots \cup P_{l-1})$,
	\item $\sv^l.M_{average}$ is equivalent to the \averagemachine[$l$] machines in $M \setminus \fopt^{-1}(P_0 \cup \ldots \cup P_{l-1})$,
	\item $\sv^l.n_{small}$ is at least the number of parts in $\setssum{P}{l}{l_{max}}$ that are covered by \slackexactcover in $\fopt$ and \emph{where the fastest machine assigned in $\fopt$ is \smallmachine[$l$]},
	\item The capacity of $M_{slack}$ at least the capacity of \smallmachine[$l$] and \tinynonexactmachine machines in $M \setminus \fopt^{-1}(\setssum{P}{0}{l-1})$.
\end{itemize}
Intuitively, we would like a good state vector $\sv^l$ to describe the set of unassigned machines similar to $M \setminus \fopt^{-1}(\setssum{P}{0}{l-1})$.
Most importantly, the condition on $n_{small}$, as we will see, is designed to guarantee that the number of \smallmachine[$l$] machines unassigned is at least as big as the number of parts in $\setssum{P}{l}{l_{max}}$ covered by \slackexactcover for which the fastest machine assigned in the \optimalcovering is \smallmachine[$l$].
We use the condition to guarantee that each such part can be covered by a \cover similar to \slackexactcover.

We search the space of feasible \coverings for a \epsiloncovering which is \emph{nice}. 
Formally, $M'$ is an \emph{nice \epsiloncover} of a part $J_i \in P_l$ if:
\begin{itemize}
	\item $M'$ is an \exactcover of $J_i$, i.e., $\sum_{m \in M'} c^*(m) \ge |J_i|$;
	\item or $M'$ is \emph{relatively almost \exactcover} of $J_i$, i.e., $\sum_{m \in M'} c^*(m) \ge (1 - \epsilon) |J_i|$ and $c^*(m) > \epsilon^{-1}$ for all $m \in M'$;
	\item or $M'$ is \emph{absolutely almost \exactcover} of $J_i$, i.e., $\sum_{m \in M'} c^*(m) \ge |J_i| - \epsilon^{-1}$ and $c^*(m) > \epsilon^{-2}$ for some $m \in M'$.
\end{itemize}
We call a \covering $F$ of $P_l = \{J_1, \ldots, J_{|P_l|}\}$ a \emph{nice \epsiloncovering} if for any $i \in [|P_l|]$ the set $F^{-1}(J_i)$ is nice \epsiloncover for $J_i$.
Using a nice \epsiloncovering it is easy to construct a schedule, perhaps increasing $T$ a bit.

To present the desirable property more clearly let us consider the following lemma.
\begin{lemma}
	\label{lemma:epsilontoexactcovering}
	Assume that for an instance $(J, M, c^*, \epsilon )$, where $c^*$ is an integer-valued function and $\epsilon \in (0, 1)$, there is a nice \epsiloncovering $F$.
	Then, $F$ is \exactcovering of $J$ with $M$ under $c' = \floor{c^*\left(\frac{1}{1-\epsilon} + \epsilon \right)}$.
\end{lemma}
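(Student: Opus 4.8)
The plan is to verify the exact-cover condition part by part. Fix a part $J_i$ and write $M_i := F^{-1}(J_i)$; by assumption $M_i$ is a nice \epsiloncover of $J_i$, so it falls into one of the three cases of that definition, and in each case I will show $\sum_{m \in M_i} c'(m) \ge |J_i|$. Since $i$ is arbitrary, this yields that $F$ is an \exactcovering under $c'$. Two elementary facts will be used throughout: first, because $c^*$ is integer-valued and $\tfrac{1}{1-\epsilon} + \epsilon \ge 1$, each $c'(m) = \floor{c^*(m)(\tfrac{1}{1-\epsilon}+\epsilon)} \ge c^*(m)$; second, $\floor{a + b} = a + \floor{b}$ whenever $a$ is an integer.

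The first case, where $M_i$ is an \exactcover of $J_i$ under $c^*$, is immediate: $\sum_{m \in M_i} c'(m) \ge \sum_{m \in M_i} c^*(m) \ge |J_i|$. In the second case, where $M_i$ is a relatively almost \exactcover, every $m \in M_i$ satisfies $c^*(m) > \epsilon^{-1}$, hence $\epsilon\, c^*(m) > 1$ and $c'(m) > c^*(m)(\tfrac{1}{1-\epsilon}+\epsilon) - 1 > \tfrac{c^*(m)}{1-\epsilon}$; summing over $M_i$ and using $\sum_{m \in M_i} c^*(m) \ge (1-\epsilon)|J_i|$ gives $\sum_{m \in M_i} c'(m) > |J_i|$, and integrality of both sides upgrades this to $\ge |J_i|$. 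In the third case, where $M_i$ is an absolutely almost \exactcover, take the distinguished machine $m_0 \in M_i$ with $c^*(m_0) > \epsilon^{-2}$. Writing $\tfrac{1}{1-\epsilon}+\epsilon = 1 + (\tfrac{\epsilon}{1-\epsilon}+\epsilon)$, using $\tfrac{\epsilon}{1-\epsilon}+\epsilon \ge 2\epsilon$, and applying the second elementary fact, $c'(m_0) = c^*(m_0) + \floor{c^*(m_0)(\tfrac{\epsilon}{1-\epsilon}+\epsilon)} > c^*(m_0) + 2\epsilon^{-1} - 1 \ge c^*(m_0) + \epsilon^{-1}$. Combining this with $c'(m) \ge c^*(m)$ for the remaining machines of $M_i$ and with $\sum_{m \in M_i} c^*(m) \ge |J_i| - \epsilon^{-1}$ yields $\sum_{m \in M_i} c'(m) \ge |J_i|$.

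This is essentially bookkeeping, and I do not expect any real obstacle. The only point that needs a little care is the unit loss incurred by each floor in the definition of $c'$: in the relatively-almost case the slack $\epsilon\, c^*(m) > 1$ present at \emph{every} machine absorbs that machine's own loss, whereas in the absolutely-almost case the entire deficit $\epsilon^{-1}$ must be recovered from the single machine with $c^*(m_0) > \epsilon^{-2}$ — which is precisely why the scaling constant must carry the extra additive term $\epsilon$ on top of $\tfrac{1}{1-\epsilon}$.
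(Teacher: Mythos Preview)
Your proof is correct and follows essentially the same approach as the paper: a case analysis on the three types of nice \epsiloncover, showing in each that the scaled-and-floored capacities suffice. The algebraic bookkeeping differs slightly (the paper phrases Case~2 via $\lceil c^*(m)/(1-\epsilon)\rceil \le \lfloor c^*(m)(\tfrac{1}{1-\epsilon}+\epsilon)\rfloor$ and in Case~3 invokes integrality to bound the deficit by $\lfloor \epsilon^{-1}\rfloor$ before using only the $(1+\epsilon)$ factor), but your direct inequalities are equally valid and arguably cleaner.
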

\begin{proof}
	For $J_i \in P_l$ consider what type of cover $F^{-1}(J_i)$ is: 
	\begin{itemize}
		\item If $F^{-1}(J_j)$ is an \exactcover, then there is nothing to prove.
		\item Assume that all machines in $F^{-1}(J_i)$ have capacity at least $\epsilon^{-1}$.
		In this case the \epsiloncover has to be (at least) relatively almost \exactcover. 
		Therefore, we can multiply $c^*$ by $\frac{1}{1 - \epsilon}$ to increase the total capacity of $F^{-1}(J_i)$ by $\frac{1}{1 - \epsilon}$.
		The total capacity of $F^{-1}(J_i)$ is at least $|J_i|$ after the operation.	
		However, this could lead to some fractional capacities, so additionally we would like the capacities to be rounded up to the nearest integer.
		Summing up, we would like to replace $c^*(m)$ by $\ceil{\frac{c^*(m)}{1 - \epsilon}}$.
		This can be done e.g. by using $\floor{c^* \left(\frac{1}{1 - \epsilon} + \epsilon\right)}$ instead of $c^*$ due to the fact that
		\begin{align*}
		\ceil*{\frac{c^*(m)}{1 - \epsilon}}
		\le \floor*{\frac{c^*(m)}{1 - \epsilon}} + 1
		< \floor*{c^*(m) \left(\frac{1}{1 - \epsilon} + \epsilon\right)}
		\end{align*}
		where we used the property that $c^*(m) > \epsilon^{-1}$ for all machines used to cover $J_i$.
		\item 
		Assume that at least one machine with capacity less than $\epsilon^{-1}$ is present in $F^{-1}(J_i)$.
		However, this means that the total capacity of $F^{-1}(J_i)$ is at least equal to $|J_i| - \frac{1}{\epsilon}$, by the requirement that \epsiloncover has to be absolutely almost \exactcover in such case.
		Moreover, due to the fact that $c^*$ is integer-valued, the missing capacity has to be at most $\floor{\frac{1}{\epsilon}}$ in fact -- both $|J_i|$ and capacities are integers.
		Additionally, $F^{-1}(J_i)$ contains at least one machine $m_i$ with capacity $c^*(m_i) \ge \epsilon^{-2}$.
		Thus, it is sufficient to use $\floor{c^*(1 + \epsilon)}$ instead of $c^*$, because we get $\floor{c^*(m_i)(1 + \epsilon)} \ge \floor{c^*(m_i) + \frac{1}{\epsilon}} = c^*(m_i) + \floor{\frac{1}{\epsilon}}$ -- the additional capacity of $m_i$ brings the total capacity of cover to at least $|J_i|$.
	\end{itemize}
	Overall, it is sufficient to pick the scaling according to the worst possible case.
	Hence, every nice \epsiloncovering of $J$ with respect to $c^*$ is \exactcovering with respect to $\floor*{c^*(\frac{1}{1-\epsilon} + \epsilon)}$ 
\end{proof}

\subsubsection{Parts with small sizes and slow machines}
Let us turn our attention to the first nontrivial part of the \Cref{algorithm:ptas-highlevel}: finding $SV^{\lmin+1}$ -- a set of state vectors for $(\lmin +1)$-th range guaranteed to contain a good state vector, under the condition that $\cmaxcost^* \le T$.
This state vector is our starting point for further iterations.

At first, let us describe the idea of \Cref{algorithm:prePTASCmax}: we split the machines into slow and fast ones and perform a dynamic programming in order to find all combinations of slow and fast machines which can be used to construct an \exactcovering of $J' = \bigcup_{i = 1}^{l_{min}} P_i$.
By the discussion on the \optimalcovering for any part from $J'$ we use only the fast machine with the smallest capacity unused yet.

\begin{algorithm}[H]
	\begin{algorithmic}[1]
		\Procedure{Find-$SV^{l_{min}+1}$}{$J$, $M$, $c^*$}
		\State Sort $J$ according to their cardinalities.
		\State $\lmin \gets \lceil 3 \log_{1+\epsilon}\frac{1}{\epsilon}\rceil + 1$
		\TwoLinesFor{$i = 0, 1, \ldots, l_{min}$}{$M_i \gets \{m\colon c^*(m) = \floor{(1 + \epsilon)^i}\}$}
		\Comment{assign each $m \in M$ to minimum feasible $M_i$}
		
		\State $J' \gets \{J_k\colon |J_k| < \floor{(1+\epsilon)^{l_{min} + 1}}\}$
		\State $M_{fast} \gets M \setminus \bigcup_{i = 1}^{l_{min}} M_i$
		\State $S_0 \gets \{(M_0, \ldots, M_{l_{min}}; M_{fast}; \emptyset)\}$
		\For{$i = 1, \ldots, |J'|$}
		\State $S_{i} \gets \emptyset$.
		\For{\textbf{each} $s = (M_0, \ldots, M_{l_{min}}; M_{fast}, F) \in S_{i-1}$}
		\For{\textbf{each} $M_0' \subseteq M_0, \ldots, M_{l_{min}}' \subseteq M_{l_{min}}$\footnotemark[2]}
		\If{$\sum_{j=0}^{l_{min}}|M_j'| \cdot \floor{(1 + \epsilon)^{j}} \ge |J_{i}|$}
		\State Let $F' := F \cup (M_0', J_i) \cup \ldots \cup (M_{l_{min}}', J_i)$
		\State Add $(M_0 \setminus M_0', \ldots, M_{l_{min}} \setminus M_{l_{min}}', M_{fast}; F')$ to $S_{i}$
		\EndIf
		\EndFor
		\If{$M_{fast} \neq \emptyset$}
		\State Let $m$ be the slowest machine in $M_{fast}$
		\State Add $(M_0, \ldots, M_{l_{min}}, M_{fast} \setminus \{m\}; F \cup (m, J_i))$ to $S_{i}$
		\EndIf
		\EndFor
		\State Trim $S_{i}$
		\EndFor
		\State $SV^{\lmin + 1} \gets \emptyset$
		\For{\textbf{each} $s = (M_0, \ldots, M_{l_{min}}; M_{fast}; F) \in S_{|J'|}$}
		\State $M_{exact} \leftarrow \emptyset$, $M_{unused} \leftarrow \emptyset$
		\For{$M_0' \subseteq M_0, \ldots, M_{\dtiny}' \subseteq  M_{\dtiny}$\footnotemark[2]}
		\OneLineFor{$i = 0, \ldots, \dtiny$}{ $M_{exact}\leftarrow M_{exact} \cup M_i'$,\;$M_{unused} \leftarrow M_{unused} \cup (M_i \setminus M_i')$}
		\OneLineFor{$i = \dtiny+1, \ldots, l_{min}$}{$M_{unused} \leftarrow M_{unused} \cup M_i$}	
		\State Transform $M_{unused}$ to $M_{slack}, M_{average}$ and $M_{large}$ for range $l_{\min}+1$ 
		\State Calculate $n_{small}$ and $c$ from $M_{slack}$
		\State Add $(M_{exact}; M_{slack}, n_{small}; M_{average}; M_{large}; F)$ to $SV^{l_{\min}+1}$
		\EndFor
		\EndFor
		\State \Return The set $SV^{l_{min}+1}$ after trimming
		\EndProcedure
	\end{algorithmic}
	\caption{An algorithm calculating the set $SV^{l_{\min}+1}$.}
	\label{algorithm:prePTASCmax}
\end{algorithm}
\footnotetext[2]{Taking into account equivalence relation.}

Formally, the properties of the algorithm are summed in the following lemma:
\begin{lemma}
	\label{lemma:preHeartOfPTAS}
	Let $l_{min} = \lceil 3 \log_{1+\epsilon}\frac{1}{\epsilon}\rceil + 1$.
	\Cref{algorithm:prePTASCmax} finds a set of state vectors $SV^{l_{min}+1}$ with the following properties:
	\begin{enumerate}[label=\textnormal{(\roman*)}]
		\item\label{preHeartOfPTAS-good} $SV^{l_{min}+1}$ contains at least one good state vector if there is an \exactcovering of $J$.
		\item\label{preHeartOfPTAS-cover} Every $sv^{l_{min}+1} \in SV^{l_{min}+1}$ contains an \exactcovering of $\setssum{P}{0}{\lmin}$.
		\item\label{preHeartOfPTAS-size} $|SV^{l_{min}+1}| = \Osymbol(m^{\dtiny + \davg + 2})$ and it can be computed in polynomial time.
	\end{enumerate}
\end{lemma}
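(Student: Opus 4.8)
The plan is to establish the three items separately; \ref{preHeartOfPTAS-cover} and \ref{preHeartOfPTAS-size} are essentially bookkeeping, while \ref{preHeartOfPTAS-good} carries the real content. For \ref{preHeartOfPTAS-cover} I would maintain, by induction on the loop index $i$, the invariant that every state $s=(M_0,\ldots,M_{\lmin};M_{fast};F)\in S_i$ has $F$ an \exactcovering of the first $i$ parts of $J'$ in the sorted order, and that $M_0,\ldots,M_{\lmin},M_{fast}$ are exactly the machines not used by $F$, grouped by rounded capacity. The base case $S_0$ is immediate; in the step, the subset branch appends machines of total $c^*$-capacity $\ge |J_i|$ (an \exactcover of $J_i$ by definition of $CR^{c^*}$), and the $M_{fast}$ branch appends a single machine of capacity $\ge \floor{(1+\epsilon)^{\lmin+1}} > |J_i|$ (again an \exactcover, since $J_i\in J'$). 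As $J'=\setssum{P}{0}{\lmin}$, every $s\in S_{|J'|}$ carries an \exactcovering of $\setssum{P}{0}{\lmin}$; the concluding phase only redistributes the unused machines among the fields $M_{exact},M_{slack},M_{average},M_{large}$, never touching $F$, and trimming discards only whole vectors, so \ref{preHeartOfPTAS-cover} is inherited by every $sv^{\lmin+1}\in SV^{\lmin+1}$.

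For \ref{preHeartOfPTAS-good}, suppose an \exactcovering of $J$ exists. I would fix an \optimalcovering $\fopt$ which, beyond the normalisations introduced earlier, is further adjusted on $J'$ so that each $J_i\in J'$ is covered either by a single machine of $M_{fast}$ or entirely by machines from $\bigcup_{j\le \lmin}M_j$, and so that the $M_{fast}$ machines used by $\fopt$ on $J'$ form a slowest prefix of $M_{fast}$; these adjustments amount to shrinking covers (a single $M_{fast}$ machine already has capacity $> |J_i|$) and exchanging an $M_{fast}$ machine used on $J'$ with a faster one used on a larger part, both of which preserve exactness. With $\fopt$ so fixed, I would show by induction on $i$ that some state of $S_i$ has its unused machines coinciding — up to the equivalence relation, and literally on the $M_{fast}$ portion, which is consumed slowest first on both sides — with those left by $\fopt$ after covering the first $i$ parts of $J'$: at step $i$ take the $M_{fast}$ branch if $\fopt$ covers $J_i$ by an $M_{fast}$ machine, else the subset branch with $M_j':=\fopt^{-1}(J_i)\cap M_j$, which is admissible because $\fopt$ is exact. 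Trimming keeps a representative of this equivalence class, and the induction proceeds from it since transitions depend only on the class.

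At the end, in the closing loop over $M_0'\subseteq M_0,\ldots,M_{\dtiny}'\subseteq M_{\dtiny}$, I would reserve as $\sv^{\lmin+1}.M_{exact}$ the tiny machines whose per-capacity counts match the \tinyexactmachine machines of $\fopt$ among $M\setminus \fopt^{-1}(\setssum{P}{0}{\lmin})$; this option is enumerated because those machines form a sub-multiset of the residual tiny machines. It then remains to check the five conditions for $\sv^{\lmin+1}$ to be \emph{good} for range $\lmin+1$. The conditions on $M_{exact}$, $M_{average}$, $M_{large}$ hold because the residual (unused) machines equal $M\setminus \fopt^{-1}(\setssum{P}{0}{\lmin})$ up to equivalence and the reclassification into range-$(\lmin+1)$ types is applied identically on both sides; the condition on $n_{small}$ holds because every part of $\setssum{P}{\lmin+1}{l_{max}}$ having a \slackexactcover whose fastest machine is \smallmachine[$\lmin+1$] consumes at least one distinct \smallmachine[$\lmin+1$] machine outside $\fopt^{-1}(\setssum{P}{0}{\lmin})$; and the condition on the capacity of $M_{slack}$ holds with equality, since once the reserved \tinyexactmachine machines are removed the residual tiny and \smallmachine[$\lmin+1$] machines are exactly the \tinynonexactmachine and \smallmachine[$\lmin+1$] machines of $\fopt$ outside $\fopt^{-1}(\setssum{P}{0}{\lmin})$.

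For \ref{preHeartOfPTAS-size}, after trimming a state of any $S_i$ is determined up to equivalence by the residual counts in $M_0,\ldots,M_{\lmin}$ and in $M_{fast}$ (consumed slowest first), hence by an element of a set of size at most $(m+1)^{\lmin+2}=m^{\Osymbol(\log_{1+\epsilon}\epsilon^{-1})}$, which is polynomial for fixed $\epsilon$; there are $|J'|\le n$ stages, each transition enumerating at most $(m+1)^{\lmin+1}$ subsets up to equivalence, so the procedure is polynomial. A vector of $SV^{\lmin+1}$ is pinned down up to equivalence by the split of the $\dtiny$ tiny capacities between $M_{exact}$ and $M_{slack}$, the counts of the $\davg$ \averagemachine[$\lmin+1$] capacities, the count of \largemachine[$\lmin+1$] machines (only $\Osymbol(m)$ nonequivalent choices need be kept, as remarked above), and the boundedly many \smallmachine[$\lmin+1$] capacities together with $n_{small}$, which multiplies out to $\Osymbol(m^{\dtiny+\davg+2})$. \textbf{Main obstacle.} The delicate part is \ref{preHeartOfPTAS-good}: reconciling the algorithm's slow/fast partition with the range-dependent machine types of $\fopt$ when normalising $\fopt$ on $J'$, and then verifying all five good-vector conditions simultaneously — in particular balancing the reservation of \tinyexactmachine machines into $M_{exact}$ (which must still leave enough tiny machines of each capacity to rebuild the \tinycovers of later parts) against the lower bounds demanded of $n_{small}$ and of the capacity of $M_{slack}$, since all three draw on the same pool of slow machines.
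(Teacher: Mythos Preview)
Your proposal is correct and follows essentially the same approach as the paper: an induction on $i$ showing that $S_i$ contains a state whose residual machines are equivalent to $M\setminus\fopt^{-1}(J_1\cup\ldots\cup J_i)$, with the same slow/fast case split, and then a verification that the closing loop over tiny-machine subsets produces a good state vector. Your treatment is in fact more explicit than the paper's on two points: you spell out the normalisations of $\fopt$ on $J'$ (single fast machine, slowest prefix) rather than just citing the earlier discussion, and you actually check all five good-vector conditions individually, whereas the paper simply asserts that equivalence of the residual machine set suffices. One small wrinkle in your argument for \ref{preHeartOfPTAS-size}: you index the final vectors partly by ``the boundedly many \smallmachine[$\lmin+1$] capacities'', but the paper's trimming rule keeps only one vector per $(M_{exact},n_{small},M_{average},M_{large})$, retaining the one with maximal $M_{slack}$ capacity, so individual small-capacity counts are not part of the index; your stated bound $\Osymbol(m^{\dtiny+\davg+2})$ is nonetheless the right one.
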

\begin{proof}
	First at all, let us clarify the trimming operation of sets $S_{i}$ and $SV^{\lmin + 1}$.
	In the case of former set it means: for each nonequivalent set $M_0', M_1', \ldots, M_{\lmin}'; M_{large}$ preserve only one tuple in $S_i$. 
	In the case if the latter set it means: for each nonequivalent set $M_{exact}, M_{average}, M_{large}$, and a number $n_{small} \in \setk{m}$ preserve only the tuple $(M_{exact}; M_{slack}, n_{small}; M_{average}; M_{large})$ in $SV^{\lmin + 1}$ where the capacity of $M_{slack}$ is maximum, if any such tuple exists at all.
	
	Let the \optimalcovering be given as $F_{opt}$.
	Let us prove that each $S_i$ contains a tuple representing a set of machines equivalent to $M \setminus F_{opt}^{-1}(J_1 \cup \ldots \cup J_i)$.		
	This claim is obviously true for $i = 0$.
	Assume that the claim is true for $S_{i-1}$, we prove that it holds for $S_{i}$.
	Consider the tuple $s$ representing a set equivalent to $M \setminus F_{opt}^{-1}(J_1 \cup \ldots \cup J_{i-1})$.
	\begin{itemize}
		\item 
		Assume that $F_{opt}^{-1}(J_{i})$ consists of slow machines, that is, $F_{opt}^{-1}(J_{i}) \subseteq \setssum{M}{0}{l_{min}}$.
		In this case the second nested {\bfseries for} loop generates all subsets of slow machines which are \exactcover for $J_i$, hence in particular a set equivalent to $F_{opt}^{-1}(J_{i})$.
		\item 
		Assume that $F_{opt}^{-1}(J_{i})$ consists of a single fast machine.
		Observe that the {\bfseries if} generates an \exactcover by a fast machines.
	\end{itemize}	
	Thus in either case there exists a tuple $s$ in $S_i$ representing the set equivalent to $M \setminus F_{opt}^{-1}(J_1 \cup \ldots \cup J_{i})$.
	As a consequence, there exists a tuple $s \in SV^{l_{min}+1}$ that represents the machines that are equivalent to $M \setminus F_{opt}^{-1}(J')$ establishes \ref{preHeartOfPTAS-good}.
	
	An observation that for any $i \in [l_{min}+1]$ each tuple $s \in S_i$ contains an \exactcovering of $\setssum{P}{0}{i-1}$ establishes \ref{preHeartOfPTAS-cover}.
	The observation can be directly inferred from \cref{algorithm:prePTASCmax}.
	
	In order to prove \ref{preHeartOfPTAS-size} we start by noting that $|S_i| \le (m + 1)^{\lmin + 2}$ as any coordinate of any vector $s \in S_i$ can be expressed as a value from $\{0, \ldots, m\}$ (remember that we identify sets equivalent under $c^*$) and the number of coordinates of $s$ is equal to $l_{min} + 2$.
	Moreover, each $s \in S_{i - 1}$ generates at most $(m + 1)^{l_{min} + 1} + 1$ potential elements in $S_i$.
	Checking if the generated element corresponds to a feasible covering of $J_i$ and copying the \covering of $J_1, \ldots, J_i$ can be done in $\Osymbol(m)$ time.
	Thus the total time complexity of constructing $S_i$ from $S_{i - 1}$ is also $\Osymbol(m^{2\lmin + 4})$. 
	The trimming operation can be done in time $\Osymbol(m^{2\lmin + 4})$, due to the fact that the target set has size $\Osymbol((m + 1)^{\lmin + 2})$ and the number of entries that have to be visited is $\Osymbol(m^{2\lmin + 3})$ and the entries are of length $\Osymbol(m)$.
	These observations follow from the fact that $\subseteq$ is taken with respect to the equivalence relation, hence only the number of the machines taken from each group matters.
	Since $|J'| \le n$, finding $S_{|P'|}$ requires $\Osymbol(nm^{2\lmin + 4})$ time.
	
	The transformation of every $s \in S_{|P'|}$ to a set of it corresponding state vectors in $\sv^{\lmin+1}$ is simple.
	The first step is composed of two parts.
	First is to divide the set of unassigned machines into $M_{exact}$, the \tinymachine machines designed to form \exactcovers and others (there are $\Osymbol(m^{\dtiny})$ nonequivalent partitions).
	After the division, the second step is to turn the divided sets to a state vector for $\lmin+1$ directly using the definition of the state vector.
	Again the division and construction can be done in total time $\Osymbol(m^{\dtiny + 1})$.
	Hence for all the tuples this gives time $\Osymbol(m^{\lmin + \dtiny + 3})$ and the total number of tuples constructed is $\Osymbol(m^{\lmin + \dtiny + 2})$.
	After the construction the set of tupples is trimmed so that for each nonequivalent $(M_{exact}; n_{small}; M_{average}; M_{large})$ (at most $\Osymbol(m ^{\dtiny + 1 + \davg + 1})$ entries) only the entry with biggest capacity of $M_{slack}$ is preserved.
	Together this gives time $\Osymbol(n \cdot m^{2\lmin +4} + m^{\lmin  + \dtiny + 3} + m^{\dtiny + \davg + 3}) = \Osymbol(nm^{2\lmin + 4})$.
	This requires a polynomial time in $n$ and $m$ for each element of $S_{|J'|}$, thus establishing the result.
\end{proof}

\subsubsection{Finding a good state vector for range $l+1$ using a good state vector for range $l$}

Now we proceed to the essence of the algorithm: generating and merging sets of state vectors after constructing a covering of $P_l$ based on state vectors for range $l$.
During generation of sets of \emph{candidate state vectors} for every $sv^{l} \in SV^{l}$ (denoted as $CSV(sv^{l})$) three invariants are preserved:
\begin{enumerate}[label=\textnormal{(\roman*)}]
	\item If $sv^{l}$ contains a nice \epsiloncovering for $P_0 \cup \ldots \cup P_{l - 1}$, then its every candidate state vector contains a nice \epsiloncovering for $P_0 \cup \ldots \cup P_l$.
	\item If $sv^{l}$ was good, then at least one state vector among its candidate state vectors is good.
	\item For any $sv^{l} \in SV^{l}$ the cardinality of $CSV(sv^{l})$ is polynomial with respect to $n$ and $m$.
\end{enumerate}
By these invariants, it is sufficient to merge all $CSV(sv^{l})$ into $SV^{l+1}$ in such a way that if at least one $CSV(sv^{l})$ contains a good vector, then $SV^{l+1}$ also contains a good vector.

The key ideas for generation of all candidate state vectors from a given state vector $\sv^{l}$, presented in \Cref{algorithm:PTASCmax}, are as follows:
\begin{enumerate}
	\item 
	First is to consider all nonequivalent sets of \tinymachine machines reserved for \tinycovers, \averagemachine[$l$] machines, and \largemachine[$l$] machines.
	By checking all possibilities one have to match sets equivalent to the present in $\fopt^{-1}(P_l)$. 
	\item
	The second is to consider all possible values of two other numbers, checking of course whether they are compatible with $\sv^l$: 
	\begin{itemize}
		\item The number of \emph{machines} which are \emph{\smallmachine[$l$]} and \emph{used} for \slackexactcovers of parts in $P_l$ in $\fopt$ (hence everywhere below denoted as $msu$) as the fastest machines,
		\item The number of \emph{machines} which are \emph{\smallmachine[$l$]} and \emph{transferred} (hence everywhere below denoted as $mst$) and used for \slackexactcovers of parts in $\setssum{P}{l+1}{l_{max}}$ in $\fopt$ as the fastest machines.
		In the algorithm we reserve them to guarantee that some state vector constructed after constructing \covers for $P_l$ is good.
	\end{itemize}
	A correct guess of last value guarantees that if in the \optimalcovering in further ranges there is some number of parts that are covered by \slackexactcover where the fastest machine is \smallmachine[$l$], then the set of unused machines allows to construct a \slackepsiloncover for such number of parts.
	\item
	The last idea is to verify (using \Cref{algorithm:interPTASCmax}) whether the guess is feasible.
\end{enumerate}

The intuition behind \Cref{algorithm:interPTASCmax} can be summed up as the following $2$ sentences.
Assume that for a given range $P_l$ there is given amount of resources, and number of parts that have to be covered by \slackexactcover.
Then there exists an algorithm which:
\begin{itemize}
	\item either calculates a lower bound on minimum total capacity of slack machines required in any \exactcovering of $P_l$ under specified resources and conditions, and constructs a nice \epsiloncovering of $P_l$ using such an amount of slack capacity; 
	\item or it proves that no \exactcovering of $P_l$ with the provided resources exists. 
\end{itemize} 
Let us start with the description of the key procedure.

\begin{algorithm}[H]
	\algnewcommand{\LeftComment}[1]{\Statex \hspace{\algorithmicindent}\(\triangleright\) #1}
	\begin{algorithmic}[1]
		\Procedure{Check-Covering}{$M_{exact}^*$, $M_{slack}^*$, $n_{msu}^*$, $M_{average}^*$, $M_{large}^*$, $P_l = \{J_1, \ldots, J_{|P_l|}\}$}
		\State Let $M_{msu}^*$ be the set of $n_{msu}^*$ fastest \smallmachine[$l$] machines in $M_{slack}^*$ 
		\State $S_0 \leftarrow \{(M_{exact}^*, M_{slack}^* \setminus M_{msu}^*, M_{msu}^*, M_{average}^*, M_{large}^*, \emptyset)\}$
		\For {$i = 1, \ldots, |P_l|$}
		\State $S_i = \emptyset$
		\For{$s = (M_{exact}, M_{slack}, M_{msu}, M_{average}, M_{large}, F) \in S_{i-1}$}
		\LeftComment{\textbf{Case I}: \tinycover:}
		\For{\textbf{each} $M_{exact}' \subseteq M_{exact}$}
		\IfThen{$M_{exact}'$ is an \exactcover of $J_i$}{add $s \setminus M_{exact}'$\footnotemark\; to $S_i$ }
		\EndFor
		\LeftComment{\textbf{Case II}:  a single \largemachine[$l$] machine as an \exactcover:}
		\IfThen{$m \in M_{large}$}{add $s \setminus \{m\}$ to $S_i$}
		\LeftComment{\textbf{Case III}: \epsiloncover consisting of by \averagemachine[$l$] and slack machines:}
		\For{\textbf{each} nonempty $M_{average}' \subseteq M_{average}$}
		\parState{Let $M_{slack}'$ be the maximal (inclusion-wise) set of fastest machines from $M_{slack}$ such~that $\sum_{m \in M_{slack}' \cup M_{average}'} c^*(m) \le |J_i|$}
		\IfThenTwoLines{$M_{average}'\cup M_{slack}'$ is an \epsiloncover of $J_i$}{Add $s \setminus (M_{average}'\cup M_{slack}')$ to $S_i$}
		\EndFor
		\LeftComment{\textbf{Case IV}: \slackepsiloncover:}
		\State Let $m_{msu}$ be the fastest machine from $M_{msu}$
		\parState{Let $M_{slack}'$ be the maximal (inclusion-wise) set of fastest machines from $M_{slack}$ such~that $\sum_{m \in M_{slack}' \cup \{m_{msu}\}} c^*(m) \le |J_i|$}
		\IfThenTwoLines{$\{m_{msu}\} \cup M_{slack}'$ is an \epsiloncover of $J_i$}{Add $s \setminus (\{m_{msu}\} \cup M_{slack}')$ to $S_i$}
		\EndFor
		\parState{Trim $S_i$} 
		\EndFor
		\IfThenElseTwoLines{there exists $s = (\emptyset, M_{slack}^{**}, \emptyset, \emptyset, \emptyset, F)$ in $S_{|P_l|}$}{\Return $s$ where $\sum_{m \in s.M_{slack}^{**}} c^*(m)$ is maximum}{\Return \texttt{NO}}
		\EndProcedure
	\end{algorithmic}
	\caption
	{
		The following algorithm either proves that there is no \exactcovering for a range $P_l$, or calculates a nice \epsiloncovering for this range.
	}
	\label{algorithm:interPTASCmax}
\end{algorithm}
\footnotetext{As a shorthand, $s \setminus M$ denotes a tuple $s = (M_{exact}, M_{slack}, M_{msu}, M_{average}, M_{large}, F)$ with machines from $M$ removed and where $\bigcup_{m \in M}(m, J_i)$ is added to $F$.}

\begin{lemma}
	Let the $l \ge l_{min}+1$-th range of parts  $P_l = (J_1, \ldots, J_{|P_l|})$ be given.
	Let a number $n_{msu}^*$ be given.
	Let also the following sets of distinct machines be given:
	\begin{itemize}
		\item $M_{exact}^*$ be a set of \tinymachine machines that can be used for \tinycover exclusively.
		\item $M_{slack}^*$ be a set of \tinymachine and \smallmachine[$l$] machines containing at least $n_{msu}^*$ \smallmachine[$l$] machines.
		\item $M_{average}^*$ be a set of \averagemachine[$l$] machines.
		\item $M_{large}^*$ be a set of \largemachine[$l$] machines.
	\end{itemize}
	Then \cref{algorithm:interPTASCmax}:
	\begin{enumerate}
		\item\label{theorem:interPTASproperties:exactcover} 
		Either determines that there is no \exactcovering $\fopt$ for $P_l$ with machines equivalent to $M_{exact}^*$ forming \tinycovers, \averagemachine[$l$] machines equivalent to $M_{average}^*$ assigned, and $|M_{large}^*|$ \largemachine[$l$] machines assigned, and assigning the amount of slack capacity bounded by the amount given in $M_{slack}^*$ in a way that $n_{msu}$ parts are covered by \slackexactcovers. 
		\item 
		Or it calculates a nice \epsiloncovering of $P_l$: with an assignment of $M_{exact}^*, M_{average}^*, M_{large}^*$, where $n_{msu}$ parts are covered by \slackepsiloncovers, and the amount of slack capacity assigned is bounded by the capacity of $M_{slack}^*$.
		Moreover, the amount of slack capacity assigned is a lower bound on the slack capacity used in any \exactcovering of $P_l$ described in (\ref{theorem:interPTASproperties:exactcover}), provided that any such \exactcovering exists.
	\end{enumerate}
	\label{theorem:interPTASproperties}
\end{lemma}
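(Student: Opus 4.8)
The plan is to treat \cref{algorithm:interPTASCmax} as a dynamic program over the parts of $P_l$ and to establish two facts whose combination yields the dichotomy. The first is \emph{soundness}: I would check that whenever a tuple is committed to $S_i$ from a tuple of $S_{i-1}$, the machines it removes form a nice \epsiloncover of $J_i$ that is recorded in its $F$-component — immediate in Cases~I and~II (an \exactcover by \tinymachine machines, resp.\ a single \largemachine[$l$] machine), and in Cases~III and~IV exactly the set the algorithm tests with ``is an \epsiloncover of $J_i$''. Since Case~I draws machines only from $M_{exact}$, Case~IV draws its lead only from $M_{msu}$, these pools are disjoint, and $M_{msu}$, $M_{large}$ always lose only their fastest / unique-feasible machine, any terminal tuple $(\emptyset, M_{slack}^{**}, \emptyset, \emptyset, \emptyset, F)$ in $S_{|P_l|}$ encodes a nice \epsiloncovering of $P_l$ that assigns exactly the machines of $M_{exact}^*$ (split into \tinycovers), $M_{average}^*$ and $M_{large}^*$, uses the $n_{msu}^*$ machines of $M_{msu}^*$ as the leads of $n_{msu}^*$ \slackepsiloncovers, and spends slack capacity $\sum_{m\in M_{slack}^*}c^*(m) - \sum_{m\in M_{slack}^{**}}c^*(m)$; returning the terminal tuple maximising $\sum_{m\in M_{slack}^{**}}c^*(m)$ therefore minimises the slack spent over all coverings the program can build. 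This gives part~(2) except for the lower bound on slack, and the first alternative of the lemma will be exactly the contrapositive of the completeness claim below.

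For completeness I would suppose an \exactcovering $\fopt$ of the kind in part~(1) exists and, using the normalisation of an \optimalcovering discussed above, assume WLOG that in $\fopt$ every part of $P_l$ is either \tinycovered from machines equivalent to $M_{exact}^*$, covered by a single smallest-still-available \largemachine[$l$] machine, covered by \averagemachine[$l$] machines together with slack machines, or \slackexactcovered with a \smallmachine[$l$] lead, and — by a swap argument — that the $n_{msu}^*$ such leads are the globally fastest \smallmachine[$l$] machines, i.e.\ the machines of $M_{msu}^*$. I would then prove by induction on $i$ that, after trimming, $S_i$ keeps a tuple whose discrete components $M_{exact}, M_{msu}, M_{average}, M_{large}$ are equivalent to the corresponding machines left unused by $\fopt$ on $J_1,\ldots,J_i$ and whose $M_{slack}$ has total capacity at least that left unused by $\fopt$. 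In the step, $\fopt$'s cover of $J_{i+1}$ is matched by the appropriate case — Cases~I and~II literally, and in Cases~III/IV by letting the algorithm take an \averagemachine[$l$] subset equivalent to $\fopt$'s (resp.\ the next lead from $M_{msu}$) and greedily top up with the fastest still-available slack machines up to total capacity $|J_{i+1}|$. Two things must be verified there: that the slack capacity the algorithm spends on $J_{i+1}$ is bounded by $|J_{i+1}|$ minus the capacity of that fixed average subset (resp.\ lead), hence by the slack $\fopt$ spends on $J_{i+1}$ (its cover being exact), so the residual-slack invariant survives; and that the set the algorithm builds really is a nice \epsiloncover — if its greedy fill exhausts the residual slack, the induction hypothesis forces its total capacity up to $|J_{i+1}|$ and it is an \exactcover, while otherwise it stops within one slack-machine capacity ($<\epsilon(1+\epsilon)^{l}$, valid for all slack machines since $l\ge\lmin+1$) of $|J_{i+1}|\ge\floor{(1+\epsilon)^{l}}$, which, after accounting for the rounding in $c^*$ and the size window of $P_l$ against the thresholds defining the machine types, should land it in the relatively- or absolutely-almost-\exactcover regime. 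Finally I would argue that trimming cannot destroy the surviving tuple, which requires the domination order (equality on the discrete components, plus enough residual slack capacity and enough residual \smallmachine[$l$] machines in $M_{slack}$) to be phrased carefully enough to be preserved by the fastest-first fills. Since $\fopt$ exhausts $M_{exact}^*, M_{average}^*, M_{large}^*, M_{msu}^*$, at $i = |P_l|$ the surviving tuple is terminal, so the algorithm does not answer \texttt{NO}, and by soundness the covering it returns spends at most that tuple's slack, hence at most $\fopt$'s, for every admissible $\fopt$ — the lower-bound claim.

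The hard part, and where essentially all of the quantitative work lives, is the greedy slack-fill of Cases~III and~IV. One must reconcile three demands simultaneously: the fill may not spend more slack on the current part than $\fopt$ does (otherwise later ranges run short of slack); it must nonetheless produce a set meeting one of the three definitions of a nice \epsiloncover (this is where the rounding inside $c^*$ and the exact size window $[\floor{(1+\epsilon)^{l}}, \floor{(1+\epsilon)^{l+1}})$ of $P_l$ must be balanced against the capacity thresholds separating \tinymachine, \smallmachine[$l$], \averagemachine[$l$] and \largemachine[$l$] machines, and it is the most error-prone step); and it must be compatible with the trimming rule, so that using the \emph{fastest} available slack machines rather than the particular ones $\fopt$ used is never harmful — this leans on the fact that later steps consult only the residual slack capacity and the count of residual \smallmachine[$l$] machines, not the full residual multiset. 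Pinning down the domination relation used by the trimming rule, so that it is monotone under all four transitions yet strong enough to carry the induction, is the second delicate ingredient. Everything else — soundness of the four transition rules and the normalisation of $\fopt$ — is comparatively routine bookkeeping over equivalence classes.
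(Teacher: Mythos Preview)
Your plan is essentially the paper's own proof: the same dynamic-programming induction on the parts of $P_l$, the same invariant that some surviving tuple in $S_i$ matches $\fopt$ on the discrete components $(M_{exact},|M_{msu}|,M_{average},|M_{large}|)$ while dominating it in residual slack capacity, the same four-case transition analysis, and the same argument that trimming by maximum residual slack preserves the tracked tuple. The only cosmetic differences are that you normalise $\fopt$'s slack-cover leads to be the globally fastest \smallmachine[$l$] machines (the paper avoids this because it tracks only $|M_{msu}|$, not which machines), and that in your soundness paragraph you say Cases~III/IV are nice because the algorithm ``tests \epsiloncover'' --- the test is only for a $(1-\epsilon)$-cover, and the upgrade to a \emph{nice} \epsiloncover is precisely the threshold bookkeeping you correctly flag later as the error-prone step (the paper handles it by observing that the lead average or $M_{msu}$ machine has capacity $>\epsilon^{-2}$ and that the fastest-first fill bounds the deficit by the next machine's capacity).
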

\begin{proof}
	First let us clarify what means to trim $S_i$.
	It means that the algorithm considers all nonequivalent sets $M_{exact}, M_{average}, M_{large}$ and number of machines in $M_{msu}$.
	For each unique quadruple considered it preserves only a tuple with the biggest capacity of $M_{slack}$.
	
	Assume that there exists an \exactcovering $\fopt$ of $P_l$ with the advertised properties.
	Let the set of machines assigned to $P_l$ in $\fopt$ be denoted as $M$ (keep in mind that all the machines have to be assigned).
	By a slight abuse of the notation we use the same symbols as for the \optimalcovering and for the set of machines.
	Keep in mind that the sets $M_{exact}^*$, $M_{average}^*$ and $M_{large}^*$ have equivalent sets in $M$.
	This might be not the case for $M_{slack}^*$ -- in the case of this set we are only interested in the total capacity of the machines in the set.
	Moreover, the capacity of \tinynonexactmachine and \smallmachine[$l$] machines in $M$ may be less than the capacity of $M_{slack}^*$.
	In particular, let us assume that the capacity of \tinynonexactmachine and \smallmachine[$l$] in $M$ is least possible under the specified conditions.
	
	We analyze $\fopt$ part by part considering how the set of unassigned yet machines looks like.
	To prove the theorem we prove the following invariant.
	For every $i \in [|P_l|]$ in $S_i$ there exists a tuple $s_i = (M_{exact}; M_{slack}, M_{msu}; M_{average}; M_{large}; F)$ such that:
	\begin{itemize}
		\item The set $M_{exact}$ is equivalent to the machines forming \tinycovers in $M \setminus \fopt^{-1}(\setssum{J}{1}{i})$.
		\item The set $M_{average}$ is equivalent to the average machines in $M \setminus \fopt^{-1}(\setssum{J}{1}{i})$.
		\item The number $|M_{large}|$ is exactly equal to the number of large machines in $M \setminus \fopt^{-1}(\setssum{J}{1}{i})$.
		\item $|M_{msu}|$ is exactly equal to the number of parts $J_{j > i}$ in $P_l$ that are covered by \slackexactcover in $\fopt$.
		\item The capacity of $M_{slack}^* \setminus (M_{slack} \cup M_{msu})$ (that is, the slack capacity assigned by the algorithm) is at most the slack capacity in $\fopt^{-1}(\setssum{J}{1}{i})$.
	\end{itemize}
	Moreover each $S_i$ consists of tuples containing a nice \epsiloncovering of $J_1, \ldots, J_i$.
	
	For $i = 0$ it is trivially true since \epsiloncovering constructed is empty and $M \setminus \fopt^{-1}(\emptyset) = M$.
	Assume that the invariant holds for $i-1$ and let $s_{i-1}$ be the corresponding tuple.
	Consider how $\fopt^{-1}(J_i)$ looks like:
	
	\begin{description}
		\item[(Case I)] It consists of tiny machines only.
		In this case the algorithm also constructs \exactcover for $J_i$ using a set of machines equivalent to $\fopt^{-1}(J_i)$.
		\item[(Case II)] It consists of one \largemachine[$l$] machine.
		In this case the algorithm also constructs \exactcover for $J_i$ as one large machine.
		\item[(Case III)] It consists of \averagemachine[$l$], and perhaps some \tinymachine and \smallmachine[$l$] machines.
		In this case the algorithm constructs a nice \epsiloncover for $J_i$ with an equivalent set of \averagemachine[$l$] machines and some set of \smallmachine[$l$] and \tinymachine machines.
		\item[(Case IV)] It consists of \tinymachine and \smallmachine[$l$] machines (in particular, it contains at least one \smallmachine[$l$] machine).
		The algorithm also produces a nice \epsiloncover for $J_i$ consisting of a machine from $M_{msu}$ and some \smallmachine[$l$] and \tinymachine machines.
	\end{description}
	Observe that the algorithm produces \exactcover in the first two cases.
	In Cases III and IV it produces relatively almost \exactcover (perhaps even \exactcover) if only machines with capacity greater than $\epsilon^{-1}$ are assigned, or it produces absolutely almost \exactcover (perhaps even \exactcover) when smaller machines are assigned. 
	The fact that in Case III a nice \epsiloncovering is constructed follows from the fact that the algorithm is applied from range $P_{l_{min}+1}$.
	Recall that $l_{min} = \ceil{3 \log_{1+\epsilon}\frac{1}{\epsilon}} + 1$ and $\davg = \ceil{\log_{1+\epsilon}\frac{1}{\epsilon}} + 1$.
	Thus, for any $l > l_{min}$ we know that $\epsilon^{-2} < (1+\epsilon)^{l - \davg}$ -- hence any \averagemachine[$l$] machine has capacity greater than $\epsilon^{-2}$.
	Finally, the fact that in Case IV a nice \epsiloncovering is constructed follows from the observation that all the machines in $M_{slack}$ that are \smallmachine[$l$], hence also those reserved in $M_{msu}$ have capacity that is greater than $\epsilon^{-2}$.
	
	In the last two cases there is no waste of capacity of \tinymachine and \smallmachine[$l$] machines, due to the fact that the parts are large compared to the capacities of such machines of those types and due to the way of assigning the machines -- the algorithm never overfills the required capacity.
	Moreover in both cases the \cover can be constructed.
	That is, despite the fact that some slack is stored in $M_{msu}$ and potentially unavailable, the amount of slack is still sufficient.
	In Case III:
	\begin{itemize}
		\item Either there are no more parts covered by \slackexactcovers in $\fopt$.
		In this case $M_{slack}$ contains the necessary slack by the inductive assumption.
		\item	Or there are more parts covered by \slackepsiloncover in $\fopt$, in this case each machine in $M_{msu}$ corresponds to part that is covered by \slackexactcover later in $\fopt$.
		For each part that is covered by \slackexactcover in $\fopt$, but it is uncovered yet ``almost all'' slack capacity used to cover it in $\fopt$ is stored in $M_{slack}$.
	\end{itemize}
	Case IV is similar to Case III:
	\begin{itemize}
		\item Either $J_i$ is the last part covered by \slackexactcover in $\fopt$.
		Then, by induction the capacity in $M_{slack} \cup M_{msu}$ is sufficient to form a \cover for $J_i$.
		\item Or there are more parts covered by \slackexactcover in $\fopt$, but in this case every machine left in $M_{msu}$ certifies that there is more then enough slack.
	\end{itemize}
	In each case the desired tuple exists before the trimming and by an easy observation the trimming rule has to preserve tuple of not lower remaining slack capacity.
	Due to this, we are sure that $s_i$ has at least as much capacity present in $M_{slack} \cup M_{msu}$ as capacity of \smallmachine[$l$] and \tinynonexactmachine machines present in $M \setminus \fopt^{-1}(\{J_1, \ldots, J_i\})$.
	Therefore, the invariant holds for $i$ as well.
	
	Hence, consider the tuple $(\emptyset, M_{slack}^{**}, \emptyset, \emptyset, \emptyset, F)$, where $M_{slack}^{**}$ has maximum capacity, present after constructing \covers for all parts.
	By the invariant it has to be the case that the capacity of $M_{slack}^* \setminus M_{slack}^{**}$ is at most the slack capacity assigned to $P_l$ in $\fopt$.
\end{proof}

\begin{corollary}
	\cref{algorithm:interPTASCmax} is polynomial time.
\end{corollary}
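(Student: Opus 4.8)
The plan is to argue that every quantity that controls the running time of \cref{algorithm:interPTASCmax} is bounded by a polynomial in $n$ and $m$ whose exponent depends only on $\epsilon$; since $\epsilon$ is fixed, this yields a polynomial-time bound.

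First I would recall that $\dtiny = \ceil{\log_{1+\epsilon}\epsilon^{-2}}$ and $\davg = \floor{\log_{1+\epsilon}\epsilon^{-1}} + 1$ are constants independent of the input, bounding the number of distinct rounded capacities of \tinymachine and \averagemachine[$l$] machines, respectively. Because every subset of machines appearing in the algorithm is considered only up to the equivalence relation under $c^*$, a subset of a fixed-capacity class is determined by a single integer in $\{0, \ldots, m\}$. Hence there are at most $(m+1)^{\dtiny}$ nonequivalent subsets of $M_{exact}$, at most $(m+1)^{\davg}$ nonequivalent (nonempty) subsets of $M_{average}$, at most $m+1$ choices of $|M_{msu}|$, and---by the smallest-first normalization of large machines used for the \optimalcovering that precedes \cref{algorithm:interPTASCmax}---only $|M_{large}| \in \{0, \ldots, m\}$ matters, so $\Osymbol(m)$ relevant states of $M_{large}$.

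Next I would bound $|S_i|$ after trimming: since trimming keeps one representative (the one with maximum slack capacity) for each nonequivalent quadruple $(M_{exact}, M_{average}, M_{large}, |M_{msu}|)$, we get $|S_i| = \Osymbol(m^{\dtiny + \davg + 2})$. Then I would bound the work spent on a single $s \in S_{i-1}$: Case~I iterates over $\Osymbol(m^{\dtiny})$ nonequivalent $M_{exact}'$, Case~III over $\Osymbol(m^{\davg})$ nonequivalent nonempty $M_{average}'$, and Cases~II and~IV are handled in $\Osymbol(m)$ time each (locating the relevant \largemachine[$l$] or $M_{msu}$ machine and, in Case~IV, a maximal fastest set of slack machines, after one initial sort of $M_{slack}^*$). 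Checking the exact/nice-$\epsilon$-cover condition for $J_i$ and copying the associated covering $F$ (of description size $\Osymbol(m)$) cost $\Osymbol(m)$. Thus each tuple generates $\Osymbol(m^{\max\{\dtiny,\davg\}})$ candidates, each in $\Osymbol(m)$ time, and trimming $S_i$ costs a further polynomial factor (grouping by the nonequivalent quadruple and keeping the max-slack representative). Multiplying by $|P_l| \le n$ iterations gives a total running time polynomial in $n$ and $m$, with the degree depending only on $\epsilon$, which establishes the corollary.

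The delicate point---and the one I would emphasize---is that a state carries set-valued fields, so a naive count of configurations is exponential; polynomiality hinges on three facts working together: computing modulo the equivalence relation collapses each capacity class to one integer in $\{0, \ldots, m\}$; the constants $\dtiny$ and $\davg$ cap the number of distinct tiny/average capacities independently of the input; and the smallest-first treatment of large machines lets $M_{large}$ be tracked by a single count. One should also check that trimming is applied after every part, so $|S_i|$ never leaves the polynomial regime, and that the branching inside one iteration (before the next trim) is itself only polynomial---both of which follow directly from the structure of the four cases.
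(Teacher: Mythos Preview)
Your proposal is correct and follows essentially the same approach as the paper: bound $|S_i|$ after trimming by $\Osymbol(m^{\dtiny+\davg+2})$ via the equivalence relation, bound the branching in each of the four cases by $\Osymbol(m^{\dtiny})$, $\Osymbol(1)$, $\Osymbol(m^{\davg+1})$, and $\Osymbol(m)$ respectively, and multiply by $|P_l|\le n$. The paper is slightly more explicit, giving a closed-form total of $\Osymbol(nm^{2\dtiny+2\davg+4})$, whereas you stop at ``polynomial with degree depending only on $\epsilon$''; your additional emphasis on why the equivalence relation and the constants $\dtiny,\davg$ are what make the argument work is a welcome clarification that the paper leaves implicit.
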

\begin{proof}
	The first loops makes $\Osymbol(n)$ iterations.
	The second loops is over a set of $\Osymbol(m^{\dtiny + \davg + 2})$ entries.
	Then there are the following cases in parallel.
	In Case I there is loop over $\Osymbol(m^{\dtiny})$ entries.
	The next case has complexity $\Osymbol(1)$.
	In Case III there is outer loop over $\Osymbol(m^{\davg})$ entries and inner loop over $\Osymbol(m)$ entries.
	In Case IV there is only loop over $\Osymbol(m)$ entries.
	The entries have size that is $\Osymbol(m)$, due to the fact that we have to store the \covering constructed.
	The trimming rule simply proceeds over all produced entries (a set of $\Osymbol(m^{2\dtiny + 2\davg + 3})$ entries of size $\Osymbol(m)$)  and produces the set that again has $\Osymbol(m^{\dtiny + \davg + 2})$ entries.
	Together this gives time complexity $\Osymbol(nm^{2\dtiny + 2\davg + 4})$
	Hence, the algorithm is polynomial time.
\end{proof}

\begin{algorithm}[H]
	\begin{algorithmic}[1]
		\Procedure{Generate-Candidate-State-Vectors}{$\sv^l \in SV^{l}$}
		\State $CSV' \gets \emptyset$
		\For{\textbf{each} $M_{exact}^* \subseteq \sv^l.M_{exact}$, $M_{average}^* \subseteq \sv^l.M_{average}$, $M_{large}^* \subseteq \sv^l.M_{large}$}
		\For{\textbf{each} $n_{msu}^*$, $n_{mst}$}
		\IfThen{$n_{msu}^*, n_{mst}$ inconsistent with $sv^l$}{\textbf{continue}}
		\State Let $M_{mst}$ be $n_{mst}$ machines from $sv.M_{slack}$ of biggest capacity
		\State Apply \Cref{algorithm:interPTASCmax} for $(M_{exact}^*, sv.M_{slack} \setminus M_{mst}, n_{msu}^*, M_{average}^*, M_{large}^*, P_l)$
		\If{\Cref{algorithm:interPTASCmax} returned a valid $M_{slack}^{**}$ and covering $F$}
		\State $csv' \gets \sv^l \setminus M_{exact}^* \cup (\sv^l.M_{slack} \setminus (M_{slack}^{**} \cup M_{mst})) \cup M_{average}^* \cup M_{large}^*$\footnotemark
		\State Add $csv'$ to $CSV'$
		\EndIf
		\EndFor
		\EndFor
		\State Transform $CSV'$ to $CSV$, i.e. to state vectors for the $(l+1)$-th range
		\State \Return $CSV$
		\EndProcedure
	\end{algorithmic}
	\caption{The following algorithm generates a set of candidate state vectors for a given state vector}
	\label{algorithm:PTASCmax}
\end{algorithm}

\footnotetext{As a shorthand, $s \setminus M$ denotes a tuple $s = (M_{exact}, M_{slack}, M_{msu}, M_{average}, M_{large}, F)$ with machines from $M$ removed $s$, and where $F$ is added to $s.F$.}

\begin{lemma}
	\label{lemma:HeartOfPTAS-good}
	If $sv^{l} \in SV^{l}$ is a good state vector, then \Cref{algorithm:PTASCmax} returns a set of vectors $CSV(sv^{l})$ such that at least one $sv^{l} \in CSV(sv^{l})$ is also a good state vector.
	Moreover, if each $sv^{l}$ contains an \epsiloncovering of $\setssum{P}{0}{l}$, then each $csv^l \in CSV^{l}(sv^l)$ contains an \epsiloncovering of $\setssum{P}{0}{l+1}$.
\end{lemma}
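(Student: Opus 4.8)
The plan is to follow one fixed \optimalcovering $\fopt$ --- the one with respect to which $sv^l$ is good --- across the step from range $l$ to range $l+1$, and to locate inside the enumeration carried out by \Cref{algorithm:PTASCmax} the single ``correct'' guess that imitates what $\fopt$ does on the parts of $P_l$; the candidate state vector produced from that guess is then shown to be good for range $l+1$. Concretely, first I would fix the guess by taking $M_{exact}^{*}\subseteq sv^{l}.M_{exact}$ equivalent to the \tinyexactmachine machines that form \tinycovers of parts of $P_l$ in $\fopt$; $M_{average}^{*}\subseteq sv^{l}.M_{average}$ and $M_{large}^{*}\subseteq sv^{l}.M_{large}$ equivalent to the \averagemachine[$l$] and the (smallest) \largemachine[$l$] machines that $\fopt$ assigns to $P_l$; $n_{msu}^{*}$ the number of parts of $P_l$ covered by $\fopt$ with a \slackexactcover whose fastest machine is \smallmachine[$l$]; and $n_{mst}$ the number of parts of $\setssum{P}{l+1}{l_{max}}$ covered by $\fopt$ with a \slackexactcover whose fastest machine is \smallmachine[$l$]. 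The five goodness conditions of $sv^{l}$ guarantee that each of these sets is realizable as the stated sub-multiset of the corresponding field, and that $n_{msu}^{*}+n_{mst}$ --- the total number of parts of $\setssum{P}{l}{l_{max}}$ slack-exact-covered by $\fopt$ with a \smallmachine[$l$] fastest machine --- is at most $sv^{l}.n_{small}$; so the guess survives the consistency test and the reservation $M_{mst}$ (the $n_{mst}$ largest, hence \smallmachine[$l$], machines of $sv^{l}.M_{slack}$) can be formed.

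Next I would apply \Cref{theorem:interPTASproperties} to the call to \Cref{algorithm:interPTASCmax} with the arguments $M_{exact}^{*}$, $sv^{l}.M_{slack}\setminus M_{mst}$, $n_{msu}^{*}$, $M_{average}^{*}$, $M_{large}^{*}$, $P_l$. Restricting $\fopt$ to $P_l$ gives an \exactcovering of $P_l$ of exactly the form described in part~(1) of that lemma, \emph{provided} the slack capacity $\fopt$ spends on $P_l$ is bounded by the capacity of $sv^{l}.M_{slack}\setminus M_{mst}$; proving this inclusion is, I expect, the main obstacle. The key estimate is that each of the $n_{mst}$ later parts that $\fopt$ slack-exact-covers with a \smallmachine[$l$] fastest machine has all its covering machines of capacity below $\epsilon(1+\epsilon)^{l}$ and is covered with total capacity equal to its size, which is at least $\floor{(1+\epsilon)^{l+1}}\ge\epsilon(1+\epsilon)^{l}$; hence $\fopt$ spends at least $n_{mst}\,\epsilon(1+\epsilon)^{l}$ of slack capacity on ranges strictly above $l$, whereas $M_{mst}$ has total capacity strictly below $n_{mst}\,\epsilon(1+\epsilon)^{l}$. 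Since the $M_{slack}$ goodness condition bounds the capacity of $sv^{l}.M_{slack}$ below by (slack $\fopt$ spends on $P_l$) $+$ (slack $\fopt$ spends above $l$), removing $M_{mst}$ cannot drop below $\fopt$'s requirement on $P_l$. Thus \Cref{algorithm:interPTASCmax} does not answer \NO, and by part~(2) it returns a nice \epsiloncovering of $P_l$ together with a residual slack set $M_{slack}^{**}$ whose capacity is that of $sv^{l}.M_{slack}$ minus the slack actually spent, which in turn is at most what $\fopt$ spends on $P_l$.

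Then I would verify that the resulting candidate $csv$ is good for range $l+1$. Since \Cref{algorithm:interPTASCmax} consumes all of $M_{exact}^{*}$, $M_{average}^{*}$, $M_{large}^{*}$ and the $n_{msu}^{*}$ \smallmachine[$l$] machines it sets aside, before reclassification the fields of $csv$ are $sv^{l}.M_{exact}\setminus M_{exact}^{*}$, $sv^{l}.M_{average}\setminus M_{average}^{*}$, $sv^{l}.M_{large}\setminus M_{large}^{*}$ and the slack set $M_{slack}^{**}\cup M_{mst}$; by the previous paragraph and the goodness of $sv^{l}$, these coincide --- as capacity multisets, respectively as a capacity lower bound --- with the corresponding restrictions of $M\setminus\fopt^{-1}(\setssum{P}{0}{l})$. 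The reclassification into a range-$(l+1)$ state vector is a deterministic function of $c^{*}$ alone: \tinyexactmachine status does not depend on the range; an \averagemachine[$l$] machine becomes \smallmachine[$l+1$] exactly when its rounded capacity is below $\epsilon(1+\epsilon)^{l+1}$, and \averagemachine[$l+1$] otherwise; a \largemachine[$l$] machine becomes \averagemachine[$l+1$] exactly when its rounded capacity is below $\floor{(1+\epsilon)^{l+2}}$; nothing moves to a faster class. Hence it commutes with the equivalences and inequalities above, and applying it simultaneously to $csv$ and to $M\setminus\fopt^{-1}(\setssum{P}{0}{l})$ delivers the $M_{exact}$, $M_{average}$ and $M_{large}$ conditions at range $l+1$. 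The $n_{small}$ condition follows because $M_{mst}$ donates $n_{mst}$ \smallmachine[$l+1$] machines, and the machines that were \averagemachine[$l$] but are now \smallmachine[$l+1$] account for the remaining slack-exact-covered parts whose fastest machine is \smallmachine[$l+1$] but not \smallmachine[$l$]. The $M_{slack}$ condition follows because the \smallmachine[$l$]/\tinynonexactmachine part of the required capacity is exactly the inequality of the previous paragraph, and the additional capacity needed to reach the \smallmachine[$l+1$]/\tinynonexactmachine requirement is precisely the capacity of the \averagemachine[$l$]-turned-\smallmachine[$l+1$] machines just moved into the slack set.

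For the ``Moreover'' statement the argument is brief: by part~(2) of \Cref{theorem:interPTASproperties}, every accepted guess yields a nice \epsiloncovering of $P_l$ built solely from machines of $sv^{l}$; appending it to the nice \epsiloncovering already carried in $sv^{l}.F$ produces a nice \epsiloncovering of one additional range, which is what $csv$ stores, since the reclassification step leaves $F$ unchanged.
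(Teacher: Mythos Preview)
Your proposal is correct and follows essentially the same route as the paper: fix the guess $(M_{exact}^{*},M_{average}^{*},M_{large}^{*},n_{msu}^{*},n_{mst})$ to match what $\fopt$ does on $P_l$, use the goodness of $sv^{l}$ to certify feasibility and to bound the slack available to \Cref{algorithm:interPTASCmax} against the slack $\fopt$ spends on $P_l$, invoke \Cref{theorem:interPTASproperties}, and then check the five goodness conditions for the resulting candidate after reclassification. Your treatment of the reclassification step and of the $n_{small}$ and $M_{slack}$ conditions at range $l+1$ is in fact more explicit than the paper's; the only cosmetic slips are that an \exactcover has capacity \emph{at least} (not ``equal to'') the part size, and that your deliberately weakened bound $\floor{(1+\epsilon)^{l+1}}\ge\epsilon(1+\epsilon)^{l}$, while sufficient for comparing against $c^{*}(M_{mst})<n_{mst}\,\epsilon(1+\epsilon)^{l}$, is looser than the $n_{mst}\floor{(1+\epsilon)^{l+1}}$ the paper records.
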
 

\begin{proof}
	Let us consider the \optimalcovering of $J$.
	Let the number of parts in $P_l$ that are covered by \slackexactcover in the \optimalcovering be exactly equal to $n_{msu}^*$.	
	There are also $n_{mst}$ parts in $\setssum{P}{l+1}{l_{max}}$ covered by \slackexactcover consisting of machines that are \tinymachine or \smallmachine[$l$].
	In particular, this means that there are $n_{mst} + n_{msu}^*$ machines in $M \setminus \fopt^{-1}(P_0 \cup \ldots \cup P_{l-1})$ that are \smallmachine[$l$].
	This is because each of the mentioned $n_{mst}$ parts can be only covered by \tinynonexactmachine and \smallmachine[$l$] machines $M'$ present in $M \setminus \fopt^{-1}(\setssum{P}{0}{l-1})$.
	This also means that the capacity of \tinynonexactmachine and \smallmachine[$l$] machines that is assigned to parts in $P_l$ in the \optimalcovering is upper bounded by the capacity of \tinynonexactmachine and \smallmachine[$l$] machines present in $(M \setminus \fopt^{-1}(\setssum{P}{0}{l-1}))$ with the capacity of $M'$ (lower bounded by $n_{mst} \cdot \floor{(1+\epsilon)^{l+1}}$) removed.
	
	\Cref{algorithm:PTASCmax} tries every possible combination of $M_{exact}^*, M_{average}^*, M_{large}^*$.
	Hence, we are certain that at some point the algorithm proceeds with sets $M_{exact}^*, M_{average}^*, M_{large}^*$ equivalent to the ones assigned to $P_l$ in the \optimalcovering.
	Moreover, since it tries every value of $n_{msu}^*$ (again from $0$ to $m$) we are certain to go through the iteration in which $n_{msu}^*$ would be equal to the number of parts in $P_l$ covered by \slackexactcover in the \optimalcovering.
	Finally, since it tries every value of $n_{mst}$ (from $0$ to $m$) we are sure that we go through an iteration such that $n_{mst}$ is exactly equal to the value derived from \optimalcovering.
	By the fact that the vector $\sv^l$ is good we have to have $\sv^l.n_{small} \ge n_{msu}^* + n_{mst}$; hence the algorithm proceeds with the values for which the inequality holds.
	Hence, the amount of slack available in $\sv^l.M_{slack} \setminus M_{mst}$ is at least $(M \setminus \fopt^{-1}(\setssum{P}{0}{l-1}))$ (because the vector is good) with the capacity of $M_{mst}$ (upper bounded by $n_{mst} \cdot \epsilon\floor{(1+\epsilon)^{l+1}}$) removed.
	Moreover, the number of \smallmachine[$l$] machines in $\sv^l.M_{slack} \setminus M_{mst}$  is at least $n_{msu}^*$, again by the fact that the vector is good.
	
	By this we know that we can apply \cref{algorithm:interPTASCmax} to construct a nice \epsiloncovering of $P_l$ and calculate a lower bound on capacity of \tinynonexactmachine and \smallmachine[$l$] machines that has to be assigned in any \exactcovering (hence in particular in the \optimalcovering) on $P_l$.
	This means that after the execution the algorithm constructs a tuple representing unassigned machines such that:
	\begin{itemize}
		\item It has equivalents set of \tinymachine machines reserved for \tinycovers to the set of \tinyexactmachine machines in $M \setminus \fopt^{-1}(P_0 \cup \ldots \cup P_l)$.
		And similarly, it has equivalents sets of \averagemachine[$l$] and \largemachine[$l$] machines.
		\item It has at least $n_{mst}$ unassigned machines that are \smallmachine[$l$].
		Moreover, the number of unassigned machines that are \smallmachine[$(l+1)$] and \averagemachine[$l$] is exactly (due to the guess) the number of machines that are \smallmachine[$(l+1)$] and \averagemachine[$l$] in $M \setminus \fopt^{-1}(\setssum{P}{0}{l})$.
		\item It has the amount of capacity remaining (in $M_{slack}^{**} \cup M_{mst}$) that is at least equal to the total capacity of \tinynonexactmachine and \smallmachine[$l$] machines in $M \setminus \fopt^{-1}(\setssum{P}{0}{l})$.
	\end{itemize}
	After the transformation of vectors the unassigned machines that are \smallmachine[$(l+1)$] but are not \smallmachine[$l$] contribute exactly the same capacity as machines from $M \setminus \fopt^{-1}(\setssum{P}{0}{l})$ that are \smallmachine[$(l+1)$] but are not \smallmachine[$l$].
	Clearly, such a tuple in $csv' \in CSV'$ is transformed to a good tuple $csv \in CSV$. 
	Each tuple in $CSV$ corresponds to an \epsiloncovering of $\setssum{P}{0}{l}$, by the properties of \cref{theorem:interPTASproperties} and the assumption that each tuple of $SV^{l}$ contains an \epsiloncovering of $\setssum{P}{0}{l-1}$.
\end{proof}
As a sidenote observe, that we did not included any trimming rules.
Hence, it follows that the sets $CSV$ may contain duplicated entries.
However, this does not matter in the light of the further considered \cref{lem:cutting-sv}.	

\begin{corollary}
	\Cref{algorithm:PTASCmax} is polynomial time and produces a set of size $\Osymbol(m^{\dtiny + \davg + 2})$.
\end{corollary}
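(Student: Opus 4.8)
The plan is to bound separately the running time of \Cref{algorithm:PTASCmax} and the number of state vectors it outputs, in both cases by a straightforward accounting of the nested loops. The two facts I would lean on are: the polynomiality of \Cref{algorithm:interPTASCmax} (the corollary just above), and the bookkeeping already used in \Cref{lemma:preHeartOfPTAS} --- namely that, modulo the equivalence relation on covers, there are only $\Osymbol(m^{\dtiny})$ nonequivalent sets of reserved \tinymachine machines, $\Osymbol(m^{\davg})$ nonequivalent sets of \averagemachine[$l$] machines, and $\Osymbol(m)$ nonequivalent sets of \largemachine[$l$] machines.

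For the running time I would first observe that the outer loop of \Cref{algorithm:PTASCmax} runs over triples $(M_{exact}^*, M_{average}^*, M_{large}^*)$ taken up to equivalence, of which there are $\Osymbol(m^{\dtiny}\cdot m^{\davg}\cdot m)=\Osymbol(m^{\dtiny+\davg+1})$, while the inner loop over the two numeric guesses $n_{msu}^*, n_{mst}\in\setk{m}$ adds a further $\Osymbol(m^2)$ factor. Each iteration performs one call to \Cref{algorithm:interPTASCmax} (polynomial), together with $\Osymbol(m)$-size set manipulations and the copying of an \epsiloncovering, and the last line transforms $CSV'$ into state vectors for range $l+1$ in polynomial time per entry; hence the whole procedure runs in time polynomial in $n$ and $m$.

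For the size bound the point I would stress is that every returned element is a state vector for range $l+1$, and such a vector is determined, up to equivalence, by its fields $(M_{exact};M_{slack},n_{small};M_{average};M_{large};F)$. By the counting above, $M_{exact}$ has $\Osymbol(m^{\dtiny})$ nonequivalent values, $M_{average}$ has $\Osymbol(m^{\davg})$, $M_{large}$ has $\Osymbol(m)$, and $n_{small}$ ranges over $\Osymbol(m)$ values; the remaining data --- the slack capacity carried by $M_{slack}$ and the attached \epsiloncovering $F$ --- are fixed by the construction once these four coordinates are. Concretely, by \Cref{theorem:interPTASproperties} the slack capacity that \Cref{algorithm:interPTASCmax} reports as still available equals the slack capacity handed to it minus the minimum slack it must consume to cover $P_l$, and that minimum depends only on $M_{exact}^*, M_{average}^*, M_{large}^*$ and $n_{msu}^*$, not on the reservation size $n_{mst}$; after the reserved machines $M_{mst}$ are added back, the resulting tuple is the same for every $n_{mst}$ for which the call succeeds. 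Collapsing this extra free parameter leaves $\Osymbol(m^{\dtiny}\cdot m^{\davg}\cdot m\cdot m)=\Osymbol(m^{\dtiny+\davg+2})$ distinct vectors, as claimed.

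The hard part will be precisely this last collapsing: one must check carefully that, among the $\Osymbol(m^{\dtiny+\davg+3})$ tuples enumerated by the loops, the two numeric guesses $n_{msu}^*$ and $n_{mst}$ do not contribute two independent factors of $m$ to the number of \emph{distinct} range-$(l+1)$ vectors, i.e.\ that $n_{mst}$ influences only whether the \Cref{algorithm:interPTASCmax} call succeeds and not the tuple it yields, so that only the single coordinate $n_{small}$ (of size $\Osymbol(m)$) survives in the output. Everything else is routine loop accounting. Finally, since no trimming rule is applied, $CSV$ may list a vector more than once; this is harmless in view of \cref{lem:cutting-sv}.
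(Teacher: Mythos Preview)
Your running-time accounting is fine and matches the paper's approach. The trouble is in the size bound.

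First, compare with what the paper actually proves: its own proof of this corollary ends with ``the produced set has cardinality $\Osymbol(n^2 m^{\dtiny+\davg+2})$'', not $\Osymbol(m^{\dtiny+\davg+2})$; that $n^2$ factor is then carried forward into \Cref{lem:cutting-sv}. So the paper simply multiplies the number of triples $(M_{exact}^*,M_{average}^*,M_{large}^*)$ by the $\Osymbol(n^2)$ (equivalently $\Osymbol(m^2)$) pairs $(n_{msu}^*,n_{mst})$ and does \emph{not} attempt the collapsing you propose. The stated bound in the corollary appears to be loose relative to the proof the paper gives.

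Your attempt to eliminate the extra factor rests on the claim that the tuple returned is independent of $n_{mst}$. That claim does not follow from \Cref{theorem:interPTASproperties}. The lemma only asserts that the slack \emph{capacity} consumed by \Cref{algorithm:interPTASCmax} is a lower bound on the slack used in any \exactcovering with the specified resources; it does not say this consumption equals a fixed quantity independent of which slack machines were handed in. Different choices of $n_{mst}$ change both the set $M_{mst}$ and the set $M_{msu}^*$ selected inside \Cref{algorithm:interPTASCmax} (since both are taken as ``fastest remaining''), hence the greedy filling in Cases~III and~IV may consume different machines and leave a different $M_{slack}^{**}$. Even granting your capacity argument, the output state vector also records $n_{small}$ and the \epsiloncovering $F$; the former depends on \emph{which} machines remain in $M_{slack}^{**}\cup M_{mst}$, not merely on their total capacity, and $F$ certainly varies with $n_{mst}$. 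So distinct $n_{mst}$ can yield distinct elements of $CSV$, and your ``collapsing'' step does not go through. The safe route is the paper's: bound $|CSV(sv^l)|$ by $\Osymbol(n^2 m^{\dtiny+\davg+2})$ and let the trimming in \Cref{lem:cutting-sv} bring $|SV^{l+1}|$ back down.
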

\begin{proof}
	The first loop is over $\Osymbol(m^{\dtiny + \davg + 2})$ entries.
	The second loop is over $\Osymbol(n^2)$ entries.
	Checking whether the guess is feasible can be done in $\Osymbol(1)$ time.
	Taking subsets of $n_{mst}$ machines can be done in $\Osymbol(m)$ time.
	\Cref{algorithm:interPTASCmax} has $\Osymbol(nm^{2\dtiny + 2\davg + 4})$ time complexity.
	Construction of new tuples can be done in $\Osymbol(m)$ time.
	Transformation can be done in $\Osymbol(m)$ time.
	Together this gives $\Osymbol(n^3 m^{3\dtiny + 3\davg + 6})$ time.
	The produced set has cardinality $\Osymbol(n^2 m^{\dtiny + \davg + 2})$ entries of size $\Osymbol(m)$.
\end{proof}

Let now $SV^{l}$ for every $l = l_{min} + 1, \ldots, l_{max}$ be defined as a subset of $\bigcup_{sv \in SV^{l - 1}} CSV(sv)$ such that for every $M_{exact}$, $M_{average}$, $M_{large}$ (there are $\Osymbol(m^{\dtiny})$, $\Osymbol(m^{\davg})$, $\Osymbol(m)$ nonequivalent sets), and $n_{small} \in \setk{m}$, we keep only the state vector $(M_{exact}; M_{slack}, n_{small}; M_{average}; M_{large})$ with the biggest capacity of $M_{slack}$ machines (with ties broken arbitrarily).
\begin{lemma}
	\label{lem:cutting-sv}
	
	For any $l \in \{l_{min} + 1, \ldots, l_{max}\}$:
	\begin{enumerate}[label=\textnormal{(\roman*)}]
		\item If $SV^{l - 1}$ has at least one good state vector, then $SV^{l}$ also has at least one good state vector.
		\item If any vector in $SV^{l - 1}$ contains a nice \epsiloncovering of $P_0 \cup \ldots P_{l-1}$, then any vector in $SV^{l}$ contains a nice \epsiloncovering of $P_0 \cup \ldots \cup P_l$.
		\item The set $\bigcup_{sv \in SV^{l - 1}} CSV(sv)$ has polynomial size and can be calculated in polynomial time.
		\item $SV^{l}$ has $\Osymbol(m^{\dtiny + \davg + 2})$ state vectors for every $l = l_{min} + 1, \ldots, l_{max}$.
	\end{enumerate}
\end{lemma}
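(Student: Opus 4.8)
The plan is to prove all four items simultaneously by induction on $l$, using \Cref{lemma:preHeartOfPTAS} for the base case $l = l_{min}+1$ and \Cref{lemma:HeartOfPTAS-good} together with the corollary bounding the running time and output size of \Cref{algorithm:PTASCmax} for the inductive step. For the base case the four items are immediate from \Cref{lemma:preHeartOfPTAS}: property \ref{preHeartOfPTAS-good} yields (i), property \ref{preHeartOfPTAS-cover} yields (ii) (an \exactcovering being in particular a nice \epsiloncovering), property \ref{preHeartOfPTAS-size} yields (iv), and (iii) at $l = l_{min}+1$ is subsumed by the same size bound. So from now on I assume the four claims hold for $l-1$ and establish them for $l$.

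For the polynomial-size and polynomial-time claims (iii) and (iv): by the inductive form of (iv) we have $|SV^{l-1}| = \Osymbol(m^{\dtiny+\davg+2})$, and by the corollary following \Cref{algorithm:PTASCmax} each $CSV(sv^{l-1})$ is computable in polynomial time and has polynomial size, so $\bigcup_{sv \in SV^{l-1}} CSV(sv)$ is a union of polynomially many polynomially-sized sets and is therefore of polynomial size and computable in polynomial time, which is (iii). Then $SV^{l}$ is obtained from this union by retaining exactly one representative for each choice of the equivalence class of $M_{exact}$ (there are $\Osymbol(m^{\dtiny})$ of them), of $M_{average}$ ($\Osymbol(m^{\davg})$), of $M_{large}$ ($\Osymbol(m)$), and of $n_{small} \in \setk{m}$ ($\Osymbol(m)$ values), so $|SV^{l}| = \Osymbol(m^{\dtiny+\davg+2})$, which is (iv); this last bound does not rely on the precise cardinality of the $CSV$ sets, only on the number of equivalence classes used for trimming.

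For (i): suppose $SV^{l-1}$ contains a good state vector $sv^{l-1}$. By \Cref{lemma:HeartOfPTAS-good}, $CSV(sv^{l-1})$, which is contained in $\bigcup_{sv} CSV(sv)$, contains a good state vector $csv$. The remaining point is that trimming does not destroy every good vector, and the argument I would give rests on the exact definition of a good state vector: it pins down the equivalence classes of $M_{exact}$, $M_{average}$, $M_{large}$, it requires $n_{small}$ to be \emph{at least} a certain quantity, and it requires the capacity of $M_{slack}$ to be \emph{at least} a certain quantity, with no upper bound on that capacity. Hence, within the equivalence class of $(M_{exact};M_{average};M_{large})$ and the value $n_{small} = csv.n_{small}$ under which $csv$ falls, the representative that $SV^{l}$ keeps has $M_{slack}$-capacity no smaller than that of $csv$, hence still meets the capacity lower bound; and since the preserved representative has the same $n_{small}$ and the same equivalence classes, it meets all the other conditions as well, so it too is good. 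For (ii): \Cref{lemma:HeartOfPTAS-good} directly gives that if every vector of $SV^{l-1}$ contains a nice \epsiloncovering of $\setssum{P}{0}{l-1}$, then every $csv \in CSV(sv^{l-1})$ contains a nice \epsiloncovering of $\setssum{P}{0}{l}$; since $SV^{l} \subseteq \bigcup_{sv} CSV(sv)$, the conclusion propagates.

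The step I expect to be the real obstacle is the trimming argument for (i): one must verify against the precise definition of ``good'' that, once the equivalence classes of $M_{exact}$, $M_{average}$, $M_{large}$ and the number $n_{small}$ are held fixed, goodness is monotone in the capacity of $M_{slack}$, so that keeping the maximum-capacity representative is always safe. Everything else reduces to bookkeeping: chaining \Cref{lemma:HeartOfPTAS-good} for (ii), and multiplying polynomial bounds (together with counting equivalence classes) for (iii) and (iv).
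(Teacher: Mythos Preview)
Your proposal is correct and follows essentially the same line as the paper's proof: both invoke \Cref{lemma:HeartOfPTAS-good} for items (i) and (ii), both obtain (iv) directly from the trimming rule by counting the $\Osymbol(m^{\dtiny})\cdot \Osymbol(m^{\davg})\cdot \Osymbol(m)\cdot \Osymbol(m)$ equivalence classes, and both derive (iii) by multiplying the bound on $|SV^{l-1}|$ by the per-vector bound on $|CSV(sv)|$ and its running time. The key observation you single out---that goodness, once the equivalence classes of $M_{exact}$, $M_{average}$, $M_{large}$ and the value $n_{small}$ are fixed, is monotone in the capacity of $M_{slack}$---is exactly the argument the paper gives for why trimming preserves at least one good vector.

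One small remark: you frame the proof as an induction with base case $l=l_{\min}+1$ handled by \Cref{lemma:preHeartOfPTAS}, whereas the paper treats each $l$ in isolation (item (iv) holds for every $l$ purely by the trimming rule, and items (i)--(iii) are one-step implications that do not require an inductive hypothesis beyond the already-established bound on $|SV^{l-1}|$). Your inductive wrapping is harmless, but note that the lemma's indexing as stated refers to $SV^{l-1}$ at $l=l_{\min}+1$, which is not the object produced by \Cref{lemma:preHeartOfPTAS}; the paper simply does not instantiate the lemma at that boundary and instead uses \Cref{lemma:preHeartOfPTAS} separately, so you may wish to phrase the base accordingly.
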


\begin{proof}
	
	In order to prove the first property observe, if $sv \in SV^{l - 1}$ was a good state vector, then $CSV(sv)$ contains at least one good state vector, by \Cref{lemma:HeartOfPTAS-good}.
	Now it is sufficient to note that for every $sv \in SV^{l - 1}$ and every good state vector $sv'$ not in $SV^{l}$ there has to be another state vector $sv'' \in SV^{l}$ with the same values of $M_{exact}$, $n_{small}$, $M_{average}$ and $M_{large}$ but at least as large capacity of $M_{slack}$ -- so $sv''$ has to be a good state vector as well.
	
	The second property follows directly from \cref{lemma:HeartOfPTAS-good}.
	
	By the construction, there is at most one state vector in $SV^{l}$ for each $M_{exact}$, $n_{small}$, $M_{average}$ and $M_{large}$.
	Thus the cardinality of $SV^{l}$ is $\Osymbol(m^{\dtiny + \davg +2})$.
	Hence, number of produced candidate state vectors is $\Osymbol(n^2m^{\dtiny + \davg + 2})$ for each vector.
	Therefore, before the trimming the produced set has cardinality $\Osymbol(n^2m^{2\dtiny + 2\davg + 4})$.
	Also, they are produced in total time $\Osymbol(m^{\dtiny + \davg + 2} \cdot n^{3}m^{3\dtiny + 3\davg + 6}) = \Osymbol(n^{3}m^{4\dtiny + 4\davg + 8})$.
	Hence, similarly to the previous cases, the trimming can be done by passing the constructed set $\bigcup_{sv \in SV^{l - 1}} CSV(sv)$ of $\Osymbol(n^2m^{2\dtiny + 2\davg + 4})$ entries of size $\Osymbol(m)$ once.
	Together this gives the total time required to produce $SV^{l}$ from $SV^{l-1}$ equal to $\Osymbol(n^{3}m^{4\dtiny + 4\davg + 8})$.
	This proves the last two points.
\end{proof}

The following conclusion follows directly from \cref{lemma:preHeartOfPTAS} and \cref{lem:cutting-sv}.
\begin{corollary}
	\label{lem:best-sv}
	If there is an \exactcovering of $J$, then $SV^{l_{max}+1}$ is nonempty.
	Any state vector from $SV^{l_{max}+1}$ contains an \epsiloncovering of $J$.
\end{corollary}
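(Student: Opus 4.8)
The plan is a two-pronged induction on the range index $l$, run in lockstep with the outer loop of \cref{algorithm:ptas-highlevel}: one prong propagates the ``contains a nice \epsiloncovering'' property of the sets $SV^l$ --- this yields the second, unconditional, statement of the corollary --- and the other propagates the existence of a \emph{good} state vector inside $SV^l$ --- this yields the first statement, under the hypothesis that an \exactcovering of $J$ exists. Both prongs have their base case at $l = l_{min}+1$ supplied by \cref{lemma:preHeartOfPTAS} and their inductive step supplied by \cref{lem:cutting-sv}.

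First I would treat the coverage statement: I claim that for every $l$ with $l_{min}+1 \le l \le l_{max}+1$, every state vector in $SV^l$ contains a nice \epsiloncovering of $\setssum{P}{0}{l-1}$ --- the prefix of ranges matched by the $F$-field of a state vector for range $l$. The base case $l = l_{min}+1$ is the second claim of \cref{lemma:preHeartOfPTAS}, together with the trivial remark that an \exactcover is a nice \epsiloncover; the step from $SV^{l-1}$ to $SV^l$ is the second part of \cref{lem:cutting-sv}. Instantiating at $l = l_{max}+1$ and using that $l_{max}$ is by definition the largest index whose range is nonempty --- so that $\setssum{P}{0}{l_{max}} = J$ and no $P_{l'}$ with $l' > l_{max}$ contributes any part --- gives that every state vector in $SV^{l_{max}+1}$ contains a nice \epsiloncovering of $J$, which is in particular the \epsiloncovering asserted by the corollary.

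Next I would treat the nonemptiness statement. Assuming that an \exactcovering of $J$ exists, the first claim of \cref{lemma:preHeartOfPTAS} gives a good state vector in $SV^{l_{min}+1}$; applying the first part of \cref{lem:cutting-sv} successively for $l = l_{min}+2, \ldots, l_{max}+1$ --- finitely many times --- carries a good state vector all the way into $SV^{l_{max}+1}$. Since a good state vector is in particular an element of the set, $SV^{l_{max}+1} \neq \emptyset$, so the algorithm returns a schedule for this guess of $T$ rather than \NO.

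I do not expect a genuine obstacle here: all the content lives in \cref{lemma:preHeartOfPTAS} and \cref{lem:cutting-sv}, and what remains is bookkeeping. The point to be careful about is the indexing --- checking that the loop of \cref{algorithm:ptas-highlevel} really produces $SV^{l_{max}+1}$ as its last object, that the induction hypothesis uses the prefix $\setssum{P}{0}{l-1}$ (not $\setssum{P}{0}{l}$) to match the semantics of the $F$-field of a state vector for range $l$, and that \cref{lem:cutting-sv}, phrased for $l \in \{l_{min}+1,\ldots,l_{max}\}$ relating $SV^{l-1}$ to $SV^{l}$, is invoked with the shift that reaches $SV^{l_{max}+1}$. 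It is also worth stating explicitly that the two inductions are logically independent: the coverage statement is unconditional, whereas only the nonemptiness statement needs the hypothesis that an \exactcovering of $J$ exists.
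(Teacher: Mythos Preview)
Your proposal is correct and follows essentially the same approach as the paper: the paper's own proof is a single sentence stating that the corollary follows directly from \cref{lemma:preHeartOfPTAS} and \cref{lem:cutting-sv}, and your two-pronged induction simply spells out that deduction in detail. Your attention to the indexing shift between the loop of \cref{algorithm:ptas-highlevel} (which produces $SV^{l+1}$ from $SV^l$) and the phrasing of \cref{lem:cutting-sv} is warranted and handled appropriately.
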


By summing all the observations up we obtain the following theorem.
\begin{theorem}
	\label{thm:killer-ptas}
	There exists a PTAS for $Q|G = \completepartite, p_j=1|\cmaxcost$.
\end{theorem}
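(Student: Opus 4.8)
The plan is to run exactly the guessing framework of~\citet{hochbaum1988} used in the proof of \Cref{thm:Q-cmax-2apx}, but feeding it \Cref{algorithm:ptas-highlevel} in place of \Cref{algorithm:two-covering-verifier}. The makespan of any schedule equals $\max_i \ell_i/s(m_i)$, where $\ell_i\in\{0,\dots,n\}$ is the number of jobs placed on $m_i$, so there are only $\Osymbol(mn)$ candidate values $T=k/s(m_i)$; I would sort them and binary‑search. For a fixed guess $T$ I set $c(m_i)=\floor{s(m_i)\cdot T}$ and take the rounded capacities $c^*$ as in the preliminaries. The claim to establish is: if some schedule has $\cmaxcost\le T$ then \Cref{algorithm:ptas-highlevel} does not answer \NO{} and outputs a schedule of makespan at most $(1+7\epsilon)T$, and in any case, whenever it outputs a schedule this bound holds.

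First I would check the reduction from schedules to \cover[]s. A schedule with $\cmaxcost\le T$ assigns to each machine only jobs of a single part (incompatibility), and $\ell_i\le s(m_i)T$ forces $\ell_i\le c(m_i)$ since $\ell_i$ is an integer; the machines assigned to $J_j$ carry all of its jobs, so $\sum_{m\in F^{-1}(J_j)}c(m)\ge|J_j|$. Hence the machine‑to‑part assignment is an \exactcovering under $c$, and since $c^*\ge c$ pointwise it is an \exactcovering under $c^*$ as well. Now \Cref{algorithm:ptas-highlevel} computes $SV^{l_{min}+1}$ via the procedure of \Cref{algorithm:prePTASCmax} (\Cref{lemma:preHeartOfPTAS}) and then iterates the generation of candidate state vectors and the cutting step; by \Cref{lemma:preHeartOfPTAS}, \Cref{lemma:HeartOfPTAS-good} and \Cref{lem:cutting-sv} every $SV^l$ retains a good state vector, so by \Cref{lem:best-sv} the set $SV^{l_{max}+1}$ is nonempty and every $sv\in SV^{l_{max}+1}$ carries a nice \epsiloncovering $F$ of $J$.

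It then remains to turn $F$ into a schedule and bound its length. By \Cref{lemma:epsilontoexactcovering}, $F$ is an \exactcovering of $J$ under $c'=\floor{c^*(\tfrac{1}{1-\epsilon}+\epsilon)}$, so assigning to each machine $m$ with $F(m)=J_j$ at most $c'(m)$ of the unit jobs of $J_j$ places all jobs, respects the incompatibility graph (one part per machine), and yields makespan $\max_m c'(m)/s(m)$. Using $c(m)\le c^*(m)\le(1+\epsilon)c(m)$, $c(m)\le s(m)T$, and $\epsilon\le\tfrac12$ (so $\tfrac1{1-\epsilon}+\epsilon\le 1+3\epsilon$),
\[
\frac{c'(m)}{s(m)}\;\le\;\frac{c^*(m)}{s(m)}\Bigl(\tfrac1{1-\epsilon}+\epsilon\Bigr)\;\le\;(1+\epsilon)(1+3\epsilon)\,\frac{c(m)}{s(m)}\;\le\;(1+7\epsilon)\,T .
\]
Thus every accepted guess produces a schedule of makespan at most $(1+7\epsilon)T$, and the smallest accepted guess $T^\star$ satisfies $T^\star\le\optcmaxcost$ (at $T=\optcmaxcost$ an \exactcovering exists, so the algorithm accepts), giving a schedule with $\cmaxcost\le(1+7\epsilon)\optcmaxcost$. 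Running $\epsilon'=\epsilon/7$ yields the stated $(1+\epsilon)$‑approximation. For fixed $\epsilon$ the quantities $\davg,\dtiny,\lmin$ are constants, $l_{max}=\Osymbol(\log_{1+\epsilon}n)$, each of the $\Osymbol(l_{max})$ iterations runs in polynomial time with $|SV^l|=\Osymbol(m^{\dtiny+\davg+2})$ (\Cref{lemma:preHeartOfPTAS}, \Cref{lemma:HeartOfPTAS-good}, \Cref{lem:cutting-sv}), and there are $\Osymbol(\log(mn))$ guesses, so the whole construction is polynomial in $n$ and $m$; hence it is a \PTAS. I do not anticipate a deep obstacle at this stage: all the structural content is carried by the lemmas above, and the only points needing care are the ``schedule $\Rightarrow$ \exactcovering'' reduction, the arithmetic of the $(1+7\epsilon)$ makespan blow‑up, and the fact that the outer guessing loop composes with \Cref{algorithm:ptas-highlevel} without disturbing its guarantees.
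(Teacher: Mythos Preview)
Your proposal is correct and follows essentially the same approach as the paper's own proof: it invokes the same chain of lemmas (\Cref{lemma:preHeartOfPTAS}, \Cref{lemma:HeartOfPTAS-good}, \Cref{lem:cutting-sv}, \Cref{lem:best-sv}, \Cref{lemma:epsilontoexactcovering}) to obtain a nice \epsiloncovering from a correct guess, converts it to a schedule via the capacity scaling $c'=\floor{c^*(\tfrac{1}{1-\epsilon}+\epsilon)}$, and bounds the blow-up by $(1+\epsilon)(1+3\epsilon)<1+7\epsilon$ before wrapping everything in the binary search over the $\Osymbol(mn)$ candidate makespans. Your write-up is in fact slightly more explicit than the paper's in justifying the ``schedule $\Rightarrow$ \exactcovering'' direction and in noting that the makespan bound $(1+7\epsilon)T$ holds for \emph{any} accepted guess (not only the optimal one), which is exactly what makes the outer search sound.
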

\begin{proof}
	To avoid unnecessary details we always execute the presented algorithms with $\epsilon \le \frac{1}{2}$.
	Assume that there exists an \exactcovering within time $T$.
	Under such assumptions $SV^{l_{max}+1}$ has to contain at least one good state vector with some nice \epsiloncovering $F$, by \Cref{lem:best-sv}.
	
	Observe that if we multiply $T$ by $\left(\frac{1}{1-\epsilon} + \epsilon\right)$, then the new capacities of the form $\floor*{c^*\left(\frac{1}{1-\epsilon} + \epsilon\right)}$ guarantee that $F$ is an \exactcovering, by \cref{lemma:epsilontoexactcovering}.
	Finally, since we used the rounded capacities $c^*(m)$, we need to get back to capacities $c(m)$.
	By the fact that $c(m) \le c^*(m) \le (1 + \epsilon) c(m)$ for all $m \in M$, if there is an \exactcovering of $J$ for rounded capacities with respect to $T \left(\frac{1}{1 - \epsilon} + \epsilon\right)$, then it is \exactcovering of $J$ under the true capacities for $T (1 + \epsilon) \left(\frac{1}{1 - \epsilon} + \epsilon\right)$.
	
	To complete the proof of the approximation ratio, we note that for all $\epsilon \in (0, \frac{1}{2}]$ we have
	\begin{align*}
	(1 + \epsilon) \left(\frac{1}{1 - \epsilon} + \epsilon\right) \le (1 + \epsilon) (1 + 3 \epsilon) < (1 + 7 \epsilon),
	\end{align*}
	so our algorithm is $(1 + 7 \epsilon)$-approximation algorithm.
	
	The complexity of the algorithm is polynomial in $n$ and $m$.
	\Cref{lemma:preHeartOfPTAS} establishes that $SV^{l_{min}+1}$ can be found in time $\Osymbol(nm^{2\lmin + 4})$.
	By \Cref{theorem:interPTASproperties}, \Cref{lemma:HeartOfPTAS-good}, and \Cref{lem:cutting-sv} each $SV^l$ for $l = l_{min} + 1, \ldots, l_{max}$ and can be found in $\Osymbol(n^3m^{4\dtiny + 4\davg + 8})$ time. 
	Clearly, the number of nonempty ranges between $P_{l_{min}+1}$ and $P_{l_{max}+1}$ is at most $n$ and we can even optimize the algorithm to iterate over non empty ranges.
	Together this gives an algorithm of time complexity $\Osymbol(nm^{2\lmin + 4} + n^4m^{4\dtiny + 4\davg + 8}) = \Osymbol(n^4m^{4\dtiny + 4\davg + 8})= \Osymbol(n^4m^{12\ceil{\log_{1+\epsilon}\frac{1}{\epsilon}} + 12})$ for a given $T$.
	By combining it with a binary search over possible values of $T$ (there are $\Osymbol(mn)$ candidates in total) we complete the proof obtaining the overall time complexity $\Osymbol(\log nm \cdot n^4m^{12\ceil{\log_{1+\epsilon}\frac{1}{\epsilon}} + 12})$.
\end{proof}

\begin{example}
	\begin{table}[H]
		\centering
		\begin{tabularx}{\textwidth}{| C |c c c c || c c c c c c c |}
			\hline
			Group    & $M_{0}$ & $M_{1}$ & $M_{2}$ & $M_{3}$ & $M_{4}$ & $M_{5}$ & $M_{6}$ & $M_{7}$ & $M_{8}$ & $M_{9}$ & $M_{10}$  \\
			Capacity &$1$     & $1$     &   $2$   &  $3$    & $5$     & $7$     &  $11$   & $17$    & $25$    &   $38$  & $57$       \\
			\hline  
			Number  &$0$     & $0$     &$0$      &  $39$   & $0$     &  $2$    &   $4$   &    $2$  &    $1$  &     $0$  &     $1$  \\  
			\hline
		\end{tabularx}
		\caption
		{
			By $M_i$ we denote the set of machines of rounded capacity $\floor{(1+\epsilon)^{i}}$; here $\epsilon = 0.5$. 
			The set of tiny machine is additionally separated by a vertical line.
		}
		\label{tab:machines}
		\qquad
		\begin{tabularx}{\textwidth}{| C | c c | | c c c c | c c | }
			\hline
			Range          & $P_1$    & $P_7$      & \multicolumn{4}{ c |}{$P_8$} & \multicolumn{2}{c|}{$P_9$} \\
			Sizes in & $[1, 2)$ & $[17, 25)$ & \multicolumn{4}{ c | }{$[25,38)$} & \multicolumn{2}{c|}{$[38,57)$} \\
			Part           & $J_{1}$  & $J_{2}$    & $J_{3}$ & $J_{4}$ & $J_{5}$ & $J_{6}$ & $J_{7}$ & $J_{8}$ \\
			Size           & $3$     &   $20$     & $25$    &   $26$  &   $36$  &    $37$ & $50$    & $51$ \\  
			\hline
			The \optimalcovering    & $3$     &   $25$     & $3^9$ &   $57$  &  $3^7, 17$ & $3^9, 11$ & $3, 7^2, 11^3$ & $3^{12},17$\\ 
			\hline
			An \epsiloncovering I& $3$     &   $25$        & $57$  & $3^9$  & $11^3$         & $3, 7^2, 17$      & $3^{11}, 17$ & $ 3^{13}, 11$ \\                
			\hline
			An \epsiloncovering II& $3$     &   $25$        & $57$  & $3^9$  & $11, 17$         & $11, 17$      & $3^{8},7^2,11$ & $3^{13},11$ \\                
			\hline
			
		\end{tabularx}
		\caption
		{
			Parts, their cardinalities, a sample \exactcovering referred to as the \optimalcovering, a \epsiloncovering I corresponding to the \optimalcovering, and a \epsiloncovering II unrelated to the \optimalcovering. 
			The parts that are small, i.e. they are in the first $l_{min}$ ranges are separated by a double line.
		}
		\label{tab:parts}
		\qquad
		\begin{tabularx}{\textwidth}{| C | c c c c | c c c  | c c  | c | }
			\hline
			Multiset of machines & \multicolumn{4}{ c |}{$M_{exact}$} & \multicolumn{3}{c |}{$M_{slack}$} & \multicolumn{2}{c|}{$M_{average}$} & \multicolumn{1}{ c |}{$M_{large}$} \\
			Field             & $n_0$       & $n_1$    & $n_2$     & $n_3$    & $M_{slack}$ & $n_{small}$ & $c$ & $n_1$     & $n_2$    & $n_{large}$ \\
			\hline    
			Capacities for  $l=8$   & $1 $        & $1$      & $2$       & $3$      & $\le 11$           & -   & -   & $17$      & $25$     & $\ge 38$ \\                               
			$M \setminus \fopt^{-1}(\setssum{P}{0}{7})$           & $0$         & $0$      & $0$       & $9$      & $3^{29},7^2,11^4$ & $2$ & $145$ & $2$     & $1$    & $1$ \\
			$\sv^8$           & $0$         & $0$      & $0$       & $9$      & $3^{29},7^2,11^4$ & $6$ & $145$ & $2$     & $1$    & $1$ \\
			\hline
			Capacities for  $l=9$   & $1 $        & $1$      & $2$       & $3$      & $\le 17$                  & -    & -   & $25$      & $38$     & $\ge 57$ \\                               
			$M \setminus \fopt^{-1}(\setssum{P}{0}{8})$            & $0$         & $0$      & $0$       & $0$      & $3^{13}, 7^2, 11^3, 17$ & $2$ & $103$ & $1$   & $0$    & $0$ \\
			$\sv[a]^9$           & $0$         & $0$      & $0$       & $0$      & $3^{28}, 11, 17$        & $2$ & $112$ & $1$   & $0$    & $0$ \\  
			$\sv[b]^9$           & $0$         & $0$      & $0$       & $0$      & $3^{28}, 7^2, 11^2$        & $4$ & $120$ & $1$   & $0$    & $0$ \\  
			\hline
		\end{tabularx}
		\caption
		{
			Here the sets $M_{exact}, M_{average}$ and $M_{large}$ are represented by numbers of machines of given capacity -- due to the observation of equivalence of machines under given capacity.
			The ``vectors'' $M \setminus \fopt^{-1}(\setssum{P}{0}{7})$ and $M \setminus \fopt^{-1}(\setssum{P}{0}{8})$ were constructed for convenience, they describe the \optimalcovering presented in \cref{tab:parts}.
			That is, they characterize exactly $M_{exact}$, $M_{average}$ and $M_{large}$ of a good vector and give lower bounds on $n_{small}$ and capacity of $M_{slack}$. 
			$n_{small}$ for $\fopt$ represents the $2$ parts in ranges $P_8$ or later for which a \smallmachine[$8$] machine (i.e. small from perspective of the $8$-th range) is the fastest machine.
			Notice that $\fopt^{-1}(J_8)$ is \slackexactcover, but the fastest machine is \smallmachine[$9$] but not \smallmachine[$8$]. 
			Finally, observe that both $\sv[a]^9$ and $\sv[b]^9$ are good, despite the fact that they have fewer \smallmachine[$9$] machines available than there is in $M \setminus \fopt^{-1}(\setssum{P}{0}{8})$.
			Moreover, during construction of $\sv[b]^9$ both $n_{msu}^*$ and $n_{mst}$ were guessed incorrectly; despite this a good vector was constructed.
			However, only the existence of $\sv[a]^9$ is guaranteed by \cref{lemma:HeartOfPTAS-good}.
		}
		\label{tab:bv}
	\end{table}
	
	As an example confer the data given by:
	\begin{itemize}
		\item A precision parameter $\epsilon = 0.5$.
		\item Guessed $T = 1$, determining the presented capacities.
		\item A set of machines given in \cref{tab:machines}.
		The machines are grouped into sets of respective cardinalities.
		To avoid introducing excessive number of symbols we identify the machines with their capacities and we will refer to the machines by the numbers only.
		\item A set of parts grouped into ranges given in \cref{tab:parts}.
		\item A sample \exactcovering chosen to be the \optimalcovering $\fopt$. 
		\item Good vectors $\sv^8$, $\sv[a]^9$, and $\sv[b]^9$ given in \cref{tab:bv}.
	\end{itemize} 
	
	Observe that here $l_{min} = 7$, hence $\setssum{P}{0}{7}$ are covered by \exactcovers using \cref{algorithm:prePTASCmax} and there is a, potentially good, vector $\sv^8$ constructed.
	
	As presented in \cref{tab:parts}, notice that $\fopt^{-1}(J_3), \fopt^{-1}(J_4), \fopt^{-1}(J_5), \fopt^{-1}(J_6)$ are: a set of tiny machines, a \largemachine[$8$] machine, \averagemachine[$8$], \tinynonexactmachine, and \smallmachine[$l$] machines, and \tinymachine and \smallmachine[$8$] machines, respectively.
	Moreover $\fopt^{-1}(J_7)$ and $\fopt^{-1}(J_8)$ are \slackexactcovers.
	However, in $\fopt^{-1}(J_7)$ the fastest machine is \smallmachine[$8$].
	This means that at least one \smallmachine[$8$] is in $M \setminus \fopt^{-1}(\setssum{P}{0}{8})$.
	Moreover, it means that the capacity of \tinynonexactmachine and \smallmachine[$8$] machines in $M \setminus \fopt^{-1}(\setssum{P}{0}{8})$ is at least $38$, by the definition of range.
	
	Hence, in at least one iteration \cref{algorithm:PTASCmax} considers a good state vector; for example $\sv^8$, presented in \cref{tab:parts}.
	Moreover, \tinymachine exact, \averagemachine[$8$], and \largemachine[$8$] machines that are to be assigned to parts in $P_8$ are equivalent to the assigned in the \optimalcovering; in the example $3^9, 17, 57$, respectively.	
	The algorithm guesses the value $n_{mst}$, the number of parts covered by \slackexactcover in $P_9$ or later range (here only $P_9$) \emph{where the fastest machines are \smallmachine[$8$] machines}; here it is equal to $1$.
	Moreover, the number of parts in $P_8$ that are covered by \slackexactcover are guessed; in the example $n_{msu} = 1$.
	Observe that using the guesses the algorithm  constructs the set $M_{mst}$ (in the example $M_{mst} = \{11\}$).
	The amount of capacity reserved in $M_{mst}$ is much less than the capacity of \tinynonexactmachine \smallmachine[$8$] machines assigned in $\fopt^{-1}(P_9)$ 
	
	Hence, \cref{algorithm:interPTASCmax} in at least one iteration is applied with equivalent set of \tinymachine machines reserved for \tinycovers, \largemachine[$8$], and \averagemachine[$8$] machines, a sufficient number of \smallmachine[$8$] machines, equal at least to $n_{msu}$ and at least the same amount of capacity in $M_{msu}$ as the capacity of \tinynonexactmachine and \smallmachine[$l$] in $\fopt^{-1}(P_8)$.
	Therefore, \cref{algorithm:PTASCmax} has to return a good vector, perhaps $\sv[a]^9$.
	Observe, that $\sv[a]^9$ may perhaps have fewer \smallmachine[$9$] machines than present in $M \setminus \fopt^{-1}(\setssum{P}{0}{8})$.
	However, by reserving $M_{mst}$ and guessing the \averagemachine[$8$] and \smallmachine[$9$] machines in $M \setminus \fopt^{-1}(\setssum{P}{0}{8})$, still the number of \smallmachine[$9$] machines is enough to construct \slackepsiloncovers for every part covered by \slackexactcovers in \optimalcovering in $\setssum{P}{9}{l_{max}}$.
	Of course, during the execution it might be the case that a superior good state vector is constructed, for example $\sv[b]^9$, even from guesses not corresponding to the \optimalcovering.
	By \cref{theorem:interPTASproperties} it might be used as well to construct an \epsiloncovering in further ranges.
	
	Considering the constructed \epsiloncovering, observe that $J_5$ is covered by \slackepsiloncover and it is relatively almost cover.
	Notice that $J_7, J_8$ are covered by \slackepsiloncovers and they are absolutely almost covers.
\end{example}

\section{Unrelated machines}

We prove that there is no good approximation algorithm possible in the case of unrelated machines.
\begin{theorem}
    There is no constant approximation ratio algorithm for $R|G = \completekpartite{2}|\sumcost$ \break $(R|G = \completekpartite{2}|\cmaxcost)$. 
\end{theorem}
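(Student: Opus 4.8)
The plan is to reduce from \emph{Set Splitting} (equivalently \emph{Hypergraph $2$-Colourability}), which is \NPCClass~\cite{gareyJ1979computers}: given a ground set $V$ and sets $S_1, \ldots, S_r \subseteq V$, decide whether $V$ can be split into $V_1 \cup V_2$ so that no $S_i$ is entirely inside $V_1$ or entirely inside $V_2$. The key observation is that, when $G = \completekpartite{2}$, choosing which of the two partitions a machine serves is exactly a $2$-colouring of the machine set; so I would identify the machines with the elements of $V$ and, for each $S_i$, introduce one job in the first partition and one in the second that are cheap only on the machines of $S_i$, forcing $S_i$ to receive both ``colours''.

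Concretely, fix a large integer $K$ (to be chosen at the end). I would take the machine set $M = V$, partitions $J_1 = \{\alpha_1, \ldots, \alpha_r\}$ and $J_2 = \{\beta_1, \ldots, \beta_r\}$ (so $G = \completekpartite{2}$), and set $p(\alpha_i, m) = p(\beta_i, m) = 1$ whenever $m \in S_i$ and $p(\alpha_i, m) = p(\beta_i, m) = K$ otherwise; this is realisable for unrelated machines by taking all processing requirements equal to $K$ and all speeds in $\{1, K\}$, and only the two processing times $1$ and $K$ occur. If the \emph{Set Splitting} instance is positive, I pick a valid split $V = V_1 \cup V_2$, let the machines of $V_t$ serve $J_t$, and place $\alpha_i$ on some machine of $S_i \cap V_1$ and $\beta_i$ on some machine of $S_i \cap V_2$ (both nonempty since $S_i$ is not monochromatic). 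Every job then runs in time $1$, so this schedule has $\cmaxcost \le r$ and $\sumcost \le r(r+1)$.

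The hard part is the converse: if the instance is negative, then every feasible schedule has $\cmaxcost \ge K$ (resp.\ $\sumcost \ge K$). Suppose not; then in an optimal schedule no job is run on a machine outside its own set, so $\alpha_i$ and $\beta_i$ both sit on machines of $S_i$. Colour a machine with colour $2$ if it carries some job of $J_2$ and with colour $1$ otherwise; since $G$ is complete bipartite no machine carries jobs of both partitions, so this is well defined, and for each $i$ the machine holding $\alpha_i$ is a colour-$1$ element of $S_i$ while the machine holding $\beta_i$ is a colour-$2$ element of $S_i$. Hence no $S_i$ is monochromatic --- a valid splitting of $V$, contradicting negativity. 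So in the negative case the optimum is at least $K$, because any job placed on a ``wrong'' machine already contributes $K$ to either objective.

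Finally, to rule out an arbitrary constant $c$, I would run the reduction with $K := cr + 1$ for the $\cmaxcost$ statement and $K := c\,r(r+1) + 1$ for the $\sumcost$ statement: a $c$-approximation algorithm would then return a value below $K$ precisely on the positive instances, solving \emph{Set Splitting} in polynomial time, so $\PClass = \NPClass$. I expect the only delicate point to be the colour-extraction step in the negative case --- in particular checking that no machine is simultaneously forced to serve both partitions, which is exactly where the incompatibility graph is used --- whereas the choice of $K$ and the positive-instance schedule are routine; since the construction works verbatim for both objectives, it settles both halves of the statement at once.
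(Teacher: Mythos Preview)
Your argument is correct. The colour-extraction step is sound: if every job runs in time $1$, then the machine carrying $\alpha_i$ lies in $S_i$ and (by the bipartite incompatibility) carries no $J_2$ job, hence gets colour $1$; symmetrically the machine carrying $\beta_i$ lies in $S_i$ and gets colour $2$; so $S_i$ is bichromatic. The crude upper bounds $\cmaxcost \le r$ and $\sumcost \le r(r+1)$ in the positive case are valid since each partition has $r$ unit jobs, and your choice of $K$ creates the required gap.

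Your route is genuinely different from the paper's. The paper reduces from bounded-occurrence \ThreeSat: the machines are the literals $v^T, v^F$, one clause job per clause sits in $J_1$ and is fast only on the machines of its literals, and two auxiliary ``variable'' jobs $j_{v,1} \in J_1$, $j_{v,2} \in J_2$ (fast on both $v^T$ and $v^F$) force each literal pair to be split between the two partitions. You instead reduce from \textnormal{\sffamily Set Splitting}, which matches the bipartite structure of the incompatibility graph on the nose: a machine's colour is simply the partition it serves, and each set $S_i$ gets one witness job in each partition. This buys you a shorter construction with no auxiliary gadget jobs and a completely symmetric instance, while still using only two processing times. The paper's reduction, on the other hand, starts from a somewhat more standard source problem and makes the ``truth assignment'' interpretation explicit; but both reductions yield the same inapproximability conclusion with essentially the same gap argument.
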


\begin{proof}
	Assume that there is $d$-approximate algorithm for the $R|G = \completekpartite{2}|\sumcost$ problem ($R|G = \completekpartite{2}|\cmaxcost$).
	Consider an instance of \ThreeSat with the set of variables $V$ and the set of clauses $C$, moreover where for each $v \in V$ there are at most $5$ clauses containing $v$.
	This version is still \NPCClass \cite{gareyJ1979computers}.
	We construct the scheduling instance as follows: let $M = \{v^T, v^F: v \in V\}$.
	Let also $G = \completekpartite{2}$ with partitions $J_1 = \{j_{v, 1}: v \in V\} \cup \{j_c: c \in C\}$ and $J_2 = \{j_{v, 2}: v \in V\}$.
	Hence $n = 2|V| + |C| \le 7|V|$, by $|C| \le 5|V|$.
	
	Let $p_j = 1$ for all jobs.
	Let $s_1 \ge 1$ be a value determined by an instance of \ThreeSat, but polynomially bound by the size of the instance.
	Let now $s(j_{v, 1}, v^T) = s(j_{v, 2}, v^T) = s(j_{v, 1}, v^F) = s(j_{v, 2}, v^F) = s_1$, for any $v \in V$.
	Let for any $c \in C$ $s(j_c, v^T) = s_1$ if $v$ appears in $c$, and $s(j_c, v^F) = s_1$ if $\neg{v}$ appears in $c$.
	Set all other $s(j, m)$ to $1$.
	
	Consider any instance of the scheduling problem, corresponding to an instance of \ThreeSat with answer \texttt{YES}.
	Then we can schedule $J$ on the machines according to fulfilling valuation, in the following way:
	If $v$ has value $T$ then we assign $v^T$ to $J_1$ and $v^F$ to $J_2$, otherwise we assign $v^F$ to $J_1$ and $v^T$ to $J_2$.
	Hence any job in $J_2$ can be processed with speed $1$ similarly for any job in $J_1$.
	Hence, $\sumcost \le \binom{n+1}{2}\frac{1}{s_1} \le \frac{49|V|^2}{s_1}$ ($\cmaxcost \le \frac{n}{s_1} \le \frac{7 |V|}{s_1}$), for an optimal schedule.
	Now we can state that it is sufficient to set $s_1 = 49 d |V|^2 + 1$ ($s_1 = 7 d |V| + 1$) to prove the theorem.
	
	On the other hand, assume that the answer for an instance of \ThreeSat is \texttt{NO}.
	Assume that there exists a schedule with $\sumcost < 1$ ($\cmaxcost < 1$).
	Assume that there is a partition, such that both $v^T$ and $v^F$ have no jobs from it assigned in the schedule, then $\sumcost \ge 1$ ($\cmaxcost \ge 1$), a contradiction.
	Thus assume then, that $j_c \in J_1$ is a job assigned to a machine $m$ with $s(j_c, m) = 1$; in this case we also clearly have a contradiction.
	Hence, each $j_c \in J_1$ is assigned to a machine corresponding to a valuation of the variable fulfilling $c$, hence there exists a fulfilling valuation, a contradiction.
	Hence for such an instance for any schedule $\sumcost \ge 1$ ($\cmaxcost \ge 1$).
	
	Clearly, by using this $d$-approximate algorithm on an instance of the scheduling problem corresponding to an \texttt{YES} instance of \ThreeSat we would be able to obtain a schedule with $\sumcost < 1$ ($\cmaxcost < 1$).
	On the other hand, for an instance corresponding to a \texttt{NO} instance there is no schedule with $\sumcost \le 1$ ($\cmaxcost \le 1$).
\end{proof}

\bibliographystyle{elsarticle-harv}
\bibliography{ipdps}
\end{document}